\documentclass[twoside,11pt]{article}
\topmargin -1.cm
\textheight24.0cm
\jot2mm
\addtolength{\textwidth}{40mm}
\addtolength{\oddsidemargin}{-8mm}
\addtolength{\evensidemargin}{-37mm}
\pagestyle{plain}
\usepackage{latexsym}
\usepackage{epsfig}
\usepackage{amsfonts}
\usepackage{amsthm,amsmath,amsfonts,amssymb}
\numberwithin{equation}{section}

\newcommand{\supp}{\operatorname{supp}}

\newcommand{\opm}{\operatorname{op}}
\newcommand{\Op}{\operatorname{Op}}
\newcommand{\tub}{\operatorname{tub}}
\newcommand{\edge}{\operatorname{edge}}
\newcommand{\interior}{\operatorname{int}}
\newcommand{\smooth}{\operatorname{smooth}}
\newcommand{\gflat}{\operatorname{flat}}
\newcommand{\LS}{\operatorname{LS}}
\newcommand{\res}{\operatorname{Res}}
\newcommand{\M}{\operatorname{M}}
\newcommand{\G}{\operatorname{G}}

\def\dbar{{\mathchar'26\mkern-12mud}}

\begin{document}
\newtheorem{assumption}{Assumption}
\newtheorem{proposition}{Proposition}
\newtheorem{definition}{Definition}
\newtheorem{lemma}{Lemma}
\newtheorem{theorem}{Theorem}
\newtheorem{observation}{Observation}
\newtheorem{remark}{Remark}
\newtheorem{corollary}{Corollary}

\title{Explicit Green operators for quantum mechanical Hamiltonians.~II.~ Edge type singularities of the helium atom}

\author{Heinz-J{\"u}rgen Flad$^\ast$, Gohar Flad-Harutyunyan$^\ast$,
and Bert-Wolfgang Schulze$^\ddag$ \\
\ \\
$^\ast${\small Zentrum Mathematik, Technische Universit\"at M\"unchen,
Boltzmannstr 3, D-85748 Garching, Germany}\\
$^\ddag${\small Institut f\"ur Mathematik, Universit\"at Potsdam, Karl-Liebknecht-Stra{\ss}e 24-25,
D-14476 Potsdam-Golm, Germany}
}
\maketitle

\begin{abstract}
\noindent
We extend our approach of asymptotic parametrix construction for Hamiltonian operators from conical to edge-type singularities which is applicable to coalescence points of two particles of the helium atom 
and related two electron systems including the hydrogen molecule. Up to second order we have calculated the symbols
of an asymptotic parametrix of the nonrelativisic Hamiltonian of the helium atom within the Born-Oppenheimer approximation
and provide explicit formulas for the corresponding Green operators
which encode the asymptotic behaviour of the eigenfunctons near an edge.
\end{abstract}

\section{Introduction}
\subsection{Singular analysis meets quantum chemistry}
\label{samqch}
In a reductionistic approach to chemistry it is the primary goal to explain 
chemical phenomena in terms of underlying fundamental physical principles. Here a key role
has electronic structure theory, where properties of molecules and solids are derived from ``first principles'' by solving Schr\"odinger's equation for many-particle systems interacting via Coulomb potentials.
Apparent singularities of Coulomb potentials at coalescence points of particles 
are reflected by the regularity properties of these solutions. Concerning their global regularity, 
Kato showed that solutions of an $N$ particle Schr\"odinger equation belong to the Sobolev space $H^2(\mathbb{R}^{3N})$. 
Their maximal Sobolev regularity is not much higher, actually it is not difficult to see
that already for a hydrogen atom the corresponding solutions are in $H^s(\mathbb{R}^{3})$ for $s < 5/2$ only.
An alternative measure of global regularity are Sobolev spaces with mixed partial derivatives which are
of particular significance with respect to numerical analysis. Basic results for such spaces have been recently obtained 
by Yserentant \cite{Yser04,Yser05}.

Besides global regularity it is rewarding to study in detail the asymptotic behaviour of solutions
near coalescence points of particles. This approach, pioneered by Kato \cite{Kato57}, led already to a fairly
detailed picture of asymptotic properties of eigenfunctions of many-particle Hamiltonians, 
mainly due to the work of M. and T. Hoffmann-Ostenhof and coworkers \cite{HOS81,HO292,HO2S94,FHO2S05,FHO2S09}. 
Their most recent result concerning two-particle singularities has important consequences concerning the present work and will
be discussed below. 

In our research we pursue the strategy to apply a general operator calculus from singular analysis \cite{ES97,Schulze98,HS08} 
in order to get a detailed picture concerning the asymptotic behaviour near coalescence points of particles for
both the original Schr\"odinger equation and other many-particle models commonly used in quantum chemistry 
and solid state physics. 
The basic idea is to construct an asymptotic parametrix for a Hamiltonian which encodes
all the required asymptotic information. Details concerning the general concept of an asymptotic parametrix has been presented elsewhere \cite{FHS16},
and a first application to the Hamiltonian of the hydrogen atom was given in \cite{FHSS10}. In the present work,
we want to establish an explicit construction of an asymptotic local parametrix for the Hamiltonian of two-electron systems,
in particular the helium series and hydrogen molecule, near coalescence points of two particles, i.e., two electrons or an electron and a nucleus. 
Despite their simplicity, two-electron systems represent
an important benchmark problem in quantum chemistry. Furthermore,
two-electron subsystems represent the dominant contribution
in many-particle models, like coupled cluster theory.
Let us just mention that the results of the present work can be applied to these models
with minor modifications. For further details and first applications we refer to \cite{FHS15}. There are
other approaches in singular analysis which  have been applied to electronic structure theory as well, see e.g.~the work of Mazzeo, Nistor and collaborators, cf. \cite{ACM14}, \cite{ACN10}, \cite{HNS08}.

In the following we consider the stationary nonrelativistic Schr\"odinger equation within the Born-Oppenheimer approximation, i.e.,
\begin{equation}
 \biggl( - \tfrac{1}{2} \bigl( \Delta_1 + \Delta_2 \bigr) - \frac{Z}{|\boldsymbol{x}_1|} - \frac{Z}{|\boldsymbol{x}_2|}
+ \frac{1}{|\boldsymbol{x}_1 - \boldsymbol{x}_2|} \biggr) \Psi (\boldsymbol{x}_1,\boldsymbol{x}_2 ) = E \Psi ( \boldsymbol{x}_1, \boldsymbol{x}_2 )
\label{Schroedinger}
\end{equation}
for the helium atom ($Z=2$) and isoelectronic negatively ($Z=1$) or positively ($Z>2$) charged ions.
Possible modifications of the formalism outlined below for other two electron systems, i.e., the hydrogen molecule, are straightforward and will be 
discussed in the text where appropriate.
Depending on the total spin of the electrons, eigenfunctions $\Psi ( \boldsymbol{x}_1, \boldsymbol{x}_2 ) \in H^2(\mathbb{R}^6)$ must be
symmetric (singlet state) or antisymmetric (triplet state). Very accurate approximate solutions of (\ref{Schroedinger}) 
have been reported in the literature \cite{Hyll29,Pek58,FP66,FHM84}. Moreover, Fock outlined a recursive approach \cite{Fock54} which 
has been conjectured to provide an exact solution. Presently a complete proof of his conjecture 
is still missing, see, however, the work of Morgan \cite{Morg86} and Leray \cite{Leray1,Leray2,Leray3}.
Higher order terms of the Fock expansion have been studied in the literature, cf.~\cite{AM87,GAM87,GM87}. 
Furthermore, there exist suggestions how to generalize the Fock expansion to systems with more than two electrons,
cf.~\cite{Demkov, Erm1, Erm2}. There is, however, an essential difference between Fock's expansion and the present work.
We do not attempt to provide a complete solution for the helium
atom, instead our approach focuses on the asymptotic behaviour
of the solution near coalescence points of particles. Within the
present work no ad hoc assumptions concerning the form of asymptotic expansions are involved. Instead we extract these
properties in the course of the asymptotic parametrix construction.
Nevertheless it is possible to compare our results with corresponding terms in the Fock expansion, cf.~\cite{AM87,GAM87,GM87}.
For such comparison it seems, however, to be desirable to incorporate the corner type singularity at the coalescence point of all three particles in the asymptotic analysis. Therefore, we leave this topic for our future work.

\subsection{Differential operators and function spaces on stratified manifolds}
\label{oafsosmwes}
In the following we want to consider a singular operator calculus applied 
on the configuration space of particles interacting via Coulomb potentials.
Owing to the singularities of Coulomb potentials at coalescence points of particles
it is convenient to regard the configuration space as a stratified manifold, where
strata are classified according to the number of merging particles.

Let us consider a Coulomb system consisting of $N$ electrons and $K$ nuclei in the
Born-Oppenheimer approximation, where nuclei are kept fixed and the configuration space
restricts to electronic degrees of freedom. First of all, the physical configuration space ${\cal M}$ of $N$ 
electrons\footnote{We do not consider spin degrees of freedom 
or equivalent permutational symmetries of the electron coordinates in our discussion.} can be identified with $\mathbb{R}^{3N}$.
We define the subset ${\cal M}_0 \subset {\cal M}$ of all possible coalescence points of particles
including any number of electrons and nuclei. With it, ${\cal M} \setminus {\cal M}_0$ can be considered
as an open smooth manifold\footnote{The manifold ${\cal M} \setminus {\cal M}_0$ actually correponds to the mathematical notion of a 
configuration space of $N$ ordered particles in $\mathbb{R}^3$.} 
or, more generally, as the inner part of an open smooth manifold with boundary. 
Next, let us consider the subset ${\cal M}_1 \subset {\cal M}_0$ of all coalescence points of
more than two particles. The stratum ${\cal M}_0 \setminus {\cal M}_1$ is an open smooth manifold representing edges of
${\cal M}$. Correspondingly, we denote ${\cal M} \setminus {\cal M}_1$ as a singular manifold with edges.
Higher order strata can be constructed along the same lines, e.g., let  ${\cal M}_2 \subset {\cal M}_1$
denote the set of coalescence points of more than three particles. Again the stratum ${\cal M}_1 \setminus {\cal M}_2$
is an open smooth manifold representing the lowest order type of corners in ${\cal M}$. Therefore, 
${\cal M} \setminus {\cal M}_2$ is a singular manifold with edges and corners.
In this way the configuration space can be decomposed into its strata, i.e.,
\[
 {\cal M} = {\cal M} \setminus {\cal M}_0 \cup {\cal M}_0 \setminus {\cal M}_1 \cup {\cal M}_1 \setminus {\cal M}_2 \cdots .
\]
The singular operator calculus associates classes of degenerate differential operators to the singular manifolds
${\cal M} \setminus {\cal M}_i$, $i=0,1,\ldots,$ and a corresponding hierarchy of symbols to the strata.

Within the present work we want to study edge singularities corresponding to coalescence points of
two particles, i.e., two electrons or an electron and a nucleus. Higher order corner sigularities are subject
of our future work.

Near an edge, ${\cal M} \setminus {\cal M}_1$ is identified with a wedge
\[
 W = X^{\Delta} \times Y \ \ \mbox{with} \ X^{\Delta} :=
 (\overline{\mathbb{R}}_+ \times X) / (\{0\} \times X) 
\]
with smooth base $X$, homeomorphic to $S^2$, the unit sphere\footnote{It is not our intention to keep the notation as general as possible, instead we fix certain quantities
iherent to the problem, like the basis of the cone and its dimension, from the very beginning.}, and edge $Y$.

The class $\mbox{Diff}^\mu_{deg}(\mathbb{W})$ of edge-degenerate Fuchs-type differential 
operators of order $\mu \in\mathbb{N}$
\begin{equation}
 A:=r^{-\mu}\sum_{j+|\alpha|\leq \mu} a_{j\alpha}(r,y)(-r \partial_r)^j (rD_y)^\alpha
\label{Amu}
\end{equation}
is defined on the associated open stretched wedge 
\[
 \mathbb{W} = X^{\wedge} \times Y \ \ \mbox{with} \ X^{\wedge} :=
 \mathbb{R}_+ \times X .
\]
Here $r\in\mathbb{R}_+$ denotes the distance to the open edge $Y\subset \mathbb{R}^q$ and $y$ is a $q$-dimensional variable, varying on $Y$ (for a Coulomb system consisting of $N$ electrons the dimension $q$  of $Y$ equals $3N-3$).
  The coefficients $a_{j\alpha}(r,y)$ take values in differential operators
of order $\mu-(j+|\alpha|)$ on the base $X$ of the cone and are smooth in the respective variables up to $r=0$.

On an open stretched wedge it is the distance variable $r\in\mathbb{R}_+$ which carries the asymptotic information.
In order to incorporate asymptotics into Sobolev spaces let us proceed in a recursive manner. 
Weighted Sobolev spaces ${\cal K}^{s,\gamma}(X^\wedge)$ on an open stretched cone with base $X$ are defined with respect to the corresponding polar coordinates $\tilde{x} \rightarrow (r,x)$ via
\[
 {\cal K}^{s,\gamma}(X^\wedge) := \omega {\cal H}^{s,\gamma}(X^\wedge) +(1-\omega) H^s(\mathbb{R}^{3}) ,
\]
for a cut-off function $\omega$, i.e., $\omega\in C_0^\infty(\overline{\mathbb{R}}_+)$ such that $\omega(r)=1$ near $r=0$.
Here ${\cal H}^{s,\gamma}(X^\wedge) = r^\gamma {\cal H}^{s,0}(X^\wedge)$, and ${\cal H}^{s,0}(X^\wedge)$
for $s \in \mathbb{N}_0$ is defined to be the set of all $u(r,x) \in r^{-1} L^2(\mathbb{R}_+ \times X)$
such that $(r \partial_r)^jDu \in r^{-1} L^2(\mathbb{R}_+ \times X)$ for all $D \in \mbox{Diff}^{s-j}(X)$,
$0 \leq j \leq s$. The definition for $s \in \mathbb{R}$ in general follows by duality and complex interpolation.
Weighted Sobolev spaces with asymptotics are subspaces of ${\cal K}^{s,\gamma}$ spaces which are defined as direct sums
\begin{equation}
 {\cal K}^{s,\gamma}_Q (X^\wedge) := {\cal E}^\gamma_Q (X^\wedge) + {\cal K}^{s,\gamma}_\Theta (X^\wedge) 
\label{E+K}
\end{equation}
of flattened weighted cone Sobolev spaces
\[
 {\cal K}^{s,\gamma}_\Theta (X^\wedge) := \bigcap_{\epsilon > 0} {\cal K}^{s,\gamma - \vartheta - \epsilon}
 (X^\wedge) 
\]
with $\Theta =(\vartheta,0]$, $-\infty \leq \vartheta < 0$, and
asymptotic spaces
\[
 {\cal E}^\gamma_Q (X^\wedge) := \biggl\{ \omega(r) \sum_j \sum_{k=0}^{m_j} c_{jk}(x) r^{-q_j} \ln^k r \biggr\} .
\]
The asymptotic space ${\cal E}^\gamma_Q (X^\wedge)$ is characterized by a sequence $q_j \in \mathbb{C}$
which is taken from a strip of the complex plane, i.e.,
\[
 q_j \in \left\{ z: \frac{3}{2}-\gamma + \vartheta < \Re z < \frac{3}{2}-\gamma \right\} ,
\]
where the width and location of this strip are determined by its {\em weight data} $(\gamma,\Theta)$
with $\Theta =(\vartheta,0]$ and $-\infty \leq \vartheta < 0$. Each substrip of finite width
contains only a finite number of $q_j$. Furthermore, the coefficients
$c_{jk}$ belong to finite dimensional subspaces $L_j \subset C^\infty(X)$.
The asymptotics of ${\cal E}^\gamma_Q(X^\wedge)$ is therefore completely
characterized by the {\em asymptotic type} $Q := \{(q_j,m_j,L_j)\}_{j \in \mathbb{Z}_+}$.
In the following, we employ the asymptotic subspaces
\[
 {\cal S}^\gamma_Q (X^\wedge) := \left\{ u \in {\cal K}^{\infty,\gamma}_Q (X^\wedge) :
 (1- \omega) u \in {\cal S}(\mathbb{R},C^\infty(X))|_{\mathbb{R}_+ \times X} \right\} 
\]
with Schwartz class behaviour for exit $r \rightarrow
\infty$. The spaces ${\cal K}^{s,\gamma}_Q(X^\wedge)$ and ${\cal
S}^\gamma_Q (X^\wedge)$ are Fr\'echet spaces equipped with
natural semi-norms according to the decomposition (\ref{E+K}); we
refer to \cite{ES97, Schulze98} for further details.

Weighted wedge Sobolev spaces on $\mathbb{W} := X^{\wedge} \times Y$ can be defined as
functions $Y \rightarrow {\cal K}_{(Q)}^{s, \gamma}(X^\wedge)$, where a subscript $(Q)$ optionally denotes cone spaces 
with and without asymptotics. Let us first consider the case $Y= \mathbb{R}^q$ and corresponding wedge Sobolev spaces
\[
 {\cal W}^s(\mathbb{R}^q, {\cal K}_{(Q)}^{s, \gamma}(X^\wedge)) :=
 \{ u : \mathbb{R}^q \rightarrow {\cal K}_{(Q)}^{s, \gamma}(X^\wedge)
 \, | \, u \in  \overline{{\cal S}(\mathbb{R}^q, {\cal K}_{(Q)}^{s, \gamma}(X^\wedge)} \} 
\]
with $s, \gamma \in \mathbb{R}$ and norm closure w.r.t.~the norm
\[
 \| u \|_{{\cal W}^{s}(\mathbb{R}^q, {\cal K}_{(Q)}^{s, \gamma}(X^\wedge))}^2 := \int [\eta]^{2s} \| \kappa^{-1}_{[\eta]} (F_{y\rightarrow \eta} u)(\eta)
 \|_{{\cal K}^{s,\gamma}_{(Q)}(X^\wedge)}^2 d\eta .
\]
Here $F_{y\rightarrow \eta}$ denotes the Fourier transform in $\mathbb{R}^q$ and $\{ \kappa_\lambda \}_{\lambda \in \mathbb{R}_+}$
a strongly continous group of isomorphisms $\kappa_\lambda : {\cal K}_{(Q)}^{s, \gamma}(X^\wedge) \rightarrow 
{\cal K}_{(Q)}^{s, \gamma}(X^\wedge)$ defined by
\[
 \kappa_\lambda u(r,x,y) := \lambda^{\frac{3}{2}} u(\lambda r,x,y) .
\]
The function $[\eta]$ involved in the norm is given by a strictly positive $C^{\infty}(\mathbb{R}^{3})$ function of the covariables $\eta$ such that 
$[\eta] = |\eta|$ for $|\eta| \geq \epsilon >0$. The motivation behind this group action is the twisted homogeneity 
of principal edge symbols discussed below. For $Y \subset \mathbb{R}^q$ an open subset, we define
\begin{equation}
 {\cal W}^s_{\mbox{\footnotesize comp}}(Y, {\cal K}_{(Q)}^{s, \gamma}(X^\wedge)) :=
 \{ u \in {\cal W}^s(\mathbb{R}^q, {\cal K}_{(Q)}^{s, \gamma}(X^\wedge)): \supp u \subset Y \ \mbox{compact} \} ,
\label{Wcomp}
\end{equation}
and
\begin{equation}
 {\cal W}^s_{\mbox{\footnotesize loc}}(Y, {\cal K}_{(Q)}^{s, \gamma}(X^\wedge)) :=
 \{ u \in {\cal D}'(Y, {\cal K}_{(Q)}^{s, \gamma}(X^\wedge)): \varphi u \in
 {\cal W}^s_{\mbox{\footnotesize comp}}(\mathbb{R}^q, {\cal K}_{(Q)}^{s, \gamma}(X^\wedge)) \ \mbox{for each} \ 
 \varphi \in C^\infty_0(Y) \} .
\label{Wloc}
\end{equation}
The class $\mbox{Diff}^\mu_{deg}(\mathbb{W})$ of edge-degenerate differential operators represents bounded operators
\begin{equation}\label{Aedge1}
 A: {\cal W}^s_{\mbox{\footnotesize comp(loc)}}(Y, {\cal K}_{(Q)}^{s, \gamma}(X^\wedge)) \longrightarrow
 {\cal W}^{s-\mu}_{\mbox{\footnotesize comp(loc)}} (Y, {\cal K}_{(P)}^{s-\mu, \gamma-\mu}(X^\wedge))
\end{equation}
between appropriate wedge Sobolev spaces. For a compact edge $Y$ both spaces (\ref{Wcomp}) and (\ref{Wloc}) 
become equivalent.

\subsection{Pseudo-differential operators on  manifolds with edge singularities}
\label{pseudoedge}
The basic idea of the present work is to construct an
``approximate inverse'', i.e., a parametrix, for edge-degenerate
differential operators (\ref{Amu}) and their corresponding function spaces. First of all, one has to extend the operator
space under consideration from differential to pseudo-differential operators. Furthermore, it requires an appropriate notion of
ellipticity, which implies existence of a parametrix for an elliptic operator. For the sake of a concise presentation,
we will state in the following only some basic notations
and refer to \cite{FHS16} as well as the monographs \cite{Schulze98} for further details. 
 
A pseudo-differential operator $P$, in the edge-degenerate operator class $L^{\mu}(Y \times \mathbb{R}^3, {\boldsymbol g})$, for weight data $\boldsymbol{g}=(\gamma ,\gamma -\mu ,\Theta ),\,\gamma ,\mu \in \mathbb{R}$, is given in the general form
\begin{equation}
 P = \sigma' \Op_y (p) \tilde{\sigma}'
 +(1-\sigma')P_{\interior}(1-\hat{\sigma}') 
\label{pseudiff}
\end{equation}
with cut-off functions $\hat{\sigma}' \prec \sigma' \prec \tilde{\sigma}'$, i.e., $\hat{\sigma}' \sigma'=\hat{\sigma}'$ and $\sigma'\tilde{\sigma}'=\sigma'$. 
The edge amplitude function $p$ belongs to the space of operator functions $R^{\mu}(Y \times \mathbb{R}^3, {\boldsymbol g})$
which in turn are of the form
\begin{equation}
 p(y,\eta) = \omega_{1,\eta} r^{\mu} \opm_{\M}^{\gamma-1}(p_M)(y,\eta)    \omega_{0,\eta} 
 + (1-\omega_{1,\eta}) r^{\mu} p_\psi(y,\eta) (1-\omega_{2,\eta}) 
 +m(y,\eta) +g(y,\eta) ,
\label{pinRmu}
\end{equation}
cf.~\cite[Section 2.1]{FHS16},
with cut-off functions $\omega_2, \omega_1, \omega_0$ satisfying $\omega_2 \prec \omega_1 \prec \omega_0,$ 
where we write here and in the following $\omega_\eta(r) := \omega ([\eta]r)$. The first and second term of the operator function (\ref{pinRmu}) correspond to a Mellin and an inner Fourier pseudo-differential operator, respectively. 
Let us just mention, that the operator-valued Mellin symbol can be defined according to
\begin{equation}
 p_M(r,y,w,\eta) := \tilde{p}_M(r,y,w,r\eta)
\label{pptilde}
\end{equation}
with respect to a symbol $\tilde{p}_M(r,y,w,\tilde{\eta})\in C^\infty (\overline{\mathbb{R}}_+\times \Omega ,M^\mu _\mathcal{O}(X,\mathbb{R}^3_{\tilde{\eta}}))$, see e.g.~\cite{FHS16} for
further details. Here, $M_{{\cal O}}^\mu (X,\mathbb{R}^3)$ is the set of all holomorphic Mellin symbols with values in $L_{{\mathrm cl}}^\mu(X,\mathbb{R}^3),$ the space of classical parameter-dependent pseudo differential operators on the base $X$. 
Moreover, $M_Q^{-\infty}(X)$, for an asymptotic type $Q$,  is the set of all meromorphic Mellin Symbols with values in $L^{-\infty}(X)$, cf. ~\cite[Section 1.1]{FHS16}.

Finally, the last two terms of (\ref{pinRmu}) represent Green and smoothing Mellin operators,
which belong to the subspaces $R^{\mu}_{\G}(Y \times \mathbb{R}^3, {\boldsymbol g})$ and $R^{\mu}_{\M+\G}(Y \times \mathbb{R}^3, {\boldsymbol g})$ of operator functions, respectively,
cf.~\cite[Section 2.1]{FHS16} for further details. The smoothing Mellin operators in $R^\mu_{M+G}(Y \times \mathbb{R}^3, {\boldsymbol g})$ have symbols in $M_{Q}^{-\infty}(X),$ 
which corresponds to the space of meromorphic functions with values in 
$L^{-\infty}(X)$, cf.~\cite[Section 1.1]{FHS16}. For further reference let us also define the extended symbol space 
\[
 M_Q^\mu(X):=M_{{\cal O}}^\mu(X)+M_Q^{-\infty}(X).
 \]

Within the present work we want to construct asymptotic parametrices in local neighbourhoods of the 
edges separated from the corner singularity at the tip of the cone. Such parametrices exist provided that the Hamiltonian 
corresponds to an elliptic operator in the edge degenerate sense. 
Ellipticity of a pseudo-differential operator on a manifold with edge singularity is characterized by a pair of symbols
$\big(\sigma_0, \sigma_1 \big)$,
where $\sigma_0$ is the usual homogeneous principal symbol of a classical pseudo-differential operator in the interior and $\sigma_1$ denotes 
the so-called \emph{principal edge} symbol, which is twisted 
homogeneous in the sense
\[
\sigma _1(P)(y,\lambda  \eta )= \lambda  ^\mu \kappa _\lambda  \sigma _1(P)(y,\eta ) \kappa _\lambda  ^{-1}
\]
for all $\lambda  \in \mathbb{R}_+.$
The principal edge symbol is defined as
$$\sigma _1(P)(y,\eta):=r^{-\mu }\textup{op}_{\M}^{\gamma -1}(h_M)(y,\eta )+\sigma _1(m+g)(y,\eta),$$
for $h_M(r,y,w,\eta ):=\tilde{p}_M(0,y,w,r\eta )$, cf.~(\ref{pptilde}), and corresponds to a parameter-dependent operator family in the cone algebra. 
The smoothing Mellin and Green part of the symbol is given by
$$\sigma _1(m+g)(y,\eta):=r^{-\mu}\omega_{|\eta |} \sum_{j=0}^k r^j\sum_{|\alpha |= j} \opm_{\M}^{\gamma_{j\alpha}-n/2}(f_{j\alpha})(y)\eta^\alpha \omega'_{|\eta |} +g_{(\mu )}(y,\eta ),$$
with $\eta \neq 0,\,\omega_{|\eta |}(r):=\omega (r|\eta |),$ where $g_{(\mu )}(y,\eta )$ is the twisted homogeneous principal symbol of $g$ as a classical operator-valued symbol of order $\mu $. 
Concerning the Mellin part, in particular, the meaning of the
Mellin symbols $f_{j\alpha}$, we refer to \cite[Eq.~2.32]{FHS16}.

In the particular case of a differential operator (\ref{Amu}), 
the principal edge symbol is given by
\[
\sigma_1(A)(y,\eta):=r^{-\mu}\sum_{j+|\alpha|\leq \mu} a_{j\alpha}(0,y)(-r\partial_r)^j (r\eta)^\alpha .
\]
Ellipticity requires that $\sigma_1$ represents Fredholm operators between weighted Sobolev spaces 
\begin{equation}
\sigma_1(P)(y,\eta):\mathcal{K}^{s,\gamma}(X^\wedge)\rightarrow \mathcal{K}^{s-\mu,\gamma-\mu}(X^\wedge) .
\label{swmap}
\end{equation}
In the general ellipticity theory developed in  \cite{HS08}  or \cite{Schulze98}, 
the mapping (\ref{swmap})  extends to an isomorphism. From a physical point of view, such an extension
looks artificial. Therefore, it seems natural to expect values of $\gamma$ for which $\sigma_1$ actually
represents isomorphisms between the corresponding Kegel spaces.
That such values of $\gamma$ actually exist is essential for our particular application and will be discussed in the following section. For a detailed discussion of the concept of ellipticity
and corresponding symbolic structures, 
we refer to the monographs \cite{HS08} and \cite{Schulze98}.

In order to define asymptotic parametrices for elliptic operators,
we have to go one step further in the symbolic hierarchy and 
consider {\em conormal symbols} $\sigma _\textup{c}^{\mu -j}(p)(y,w,\eta)$, $j=0,\dots,k$, for parameter dependent operator functions $p(y,\eta)$ in $R^{\mu}(Y \times \mathbb{R}^3, {\boldsymbol g})$
of asymptotic type $\boldsymbol{g}:=(\gamma ,\gamma -\mu ,\Theta ),\,\Theta =(-(k+1),0]$.
These conormal symbols are polynomials in $\eta $ of order $j$ with coefficients in $C^\infty (Y ,M_R^\mu (X))$ for certain Mellin asymptotic types $R$ and are explicitly given by
\[
 \sigma _\textup{c}^{\mu -j}(p)(y,w,\eta) :=
 \frac{1}{j!} \bigl( \partial^j_r p_M \bigr) (0,y,w,\eta)
 + \sum_{|\alpha |= j} f_{j\alpha}(y)\eta^\alpha .
\]
Conormal symbols can be used to define
a filtration of operator function spaces $R^{\mu}(Y \times \mathbb{R}^3, {\boldsymbol g})$. Let us set 
\[
 R^{\mu;0}(Y \times \mathbb{R}^3, {\boldsymbol g}) :=
 R^{\mu}(Y \times \mathbb{R}^3, {\boldsymbol g})
\]
and define, for $1 \leq j \leq k+1$, the sequence
\begin{equation}
 R^{\mu;j} (Y \times \mathbb{R}^3,\boldsymbol{g}):=\{p\in R^{\mu;j-1}(Y \times \mathbb{R}^3,\boldsymbol{g}):\sigma _\textup{c}^{\mu -(j-1)}(p)=0\} .
\label{filtration}
\end{equation}

Furthermore, let us introduce the filtration $R^{\mu;j}_G(\Omega\times\mathbb{R}^q,{\boldsymbol g})$, which is defined in an analogous manner on the subspace $R^\mu_G(Y \times\mathbb{R}^3,{\boldsymbol g}).$
With this filtration at hand, we can define the corresponding 
filtration for classes of edge-degenerate pseudo-differential operators $L^{\mu} (M,\boldsymbol{g})$ by taking 
$L^{\mu;j} (M,\boldsymbol{g})$ for $\boldsymbol{g}:=(\gamma ,\gamma -\mu ,\Theta ),\,\Theta =(-(k+1),0],$ as the set of all $P\in L^\mu  (M,\boldsymbol{g})$ such that the local edge amplitude functions $p(y,\eta )$ occurring 
in \eqref{pseudiff} belong to $R^{\mu;j} (\Omega \times \mathbb{R}^q,\boldsymbol{g}).$
This filtration provides the basis for the definition
of asymptotic expansions of pseudo-differential operators discussed  
below. 

\subsection{The Hamiltonian of the helium atom in its edge-degenerate form}
\label{thothaiiedf}
In the following, we consider  the configuration space $\mathbb{R}^6$ of two electrons
of the helium atom as a stratified manifold with embedded edge and corner singularities.
For this purpose let us introduce polar coordinates in $\mathbb{R}^6$ with radial variable
\begin{equation}
 t:= \sqrt{x_1^2 + x_2^2 + x_3^2 + x_4^2 + x_5^2 + x_6^2} .
\label{t-def}
\end{equation}
In such coordinates $\mathbb{R}^6$ can be formally considered as a conical manifold
with base $S^5$ and embedded conical singularity at the origin, i.e.,
\[
 \mathbb{R}^6 \equiv (S^5)^\Delta:=(\overline{\mathbb{R}}_+ \times S^5) / (\{0\} \times S^5) .
\]
Here, the origin represents the coalescence point of both electrons with the nucleus and
all Coulomb potentials in the Hamiltonian become singular.
Removal of this singular point defines an open stretched cone
\[
 (S^5)^{\wedge} := \mathbb{R}_+ \times S^5 .
\]
Beside the origin, singularities of the Coulomb potentials appear at 
\[
|\boldsymbol{x}_1|=0,\quad |\boldsymbol{x}_2|=0, \quad |\boldsymbol{x}_1-\boldsymbol{x}_2|=0;
\]
each subset of configuration space corresponds to a closed embedded submanifold $Y_i$, $i=1,2,3$, on the base $S^5$.
These disjoint submanifolds are homeomorphic to $S^2$ and for any of them there exists a local neighbourhood 
$U_i$, $i=1,2,3$, on $S^5$ which is homeomorphic to a wedge
\[
 W_i = X^{\Delta}_i \times Y_i \ \ \mbox{with} \ X^{\Delta}_i :=
 (\overline{\mathbb{R}}_+ \times X_i) / (\{0\} \times X_i),
\]
where the base $X_i$ of the wedge is again homeomorphic to $S^2$. 
The associated open stretched wedges are 
\[
 \mathbb{W}_i = X^{\wedge}_i \times Y_i \ \ \mbox{with} \ X^{\wedge}_i :=
\mathbb{R}_+ \times X_i, i=1,2,3.
\]
Globally, the conical manifold $(S^5)^\Delta$ with local wedges on its basis $S^5$
becomes a manifold with corner singularity at the origin.

After these general considerations it remains to explicitly specify local coordinates
in neigbourhoods of the edges which provide representations of the Hamiltonian in the 
class of edge-degenerate differential operators.
Fortunately, appropriate hyperspherical coordinates were already known in the physics literature, 
cf.~the monographs \cite[pp.~1730ff]{MF53} and \cite[pp.~398ff]{MM65}. Furthermore, Granzow's paper \cite{G63} 
provides some useful facts related to these coordinates in spectral and differential geometry. 

In a local neighbourhood $U_1$ of the $e-n$ singularity
$|\boldsymbol{x}_1|=0$
let us define hyperspherical coordinates
\begin{equation}
 \tilde{x}_1 = \sin\,r_1 \sin \theta_1 \cos \phi_1, \ \ \tilde{x}_2 = \sin\,r_1 \sin \theta_1
\sin \phi_1, \ \
 \tilde{x}_3 = \sin\,r_1 \cos \theta_1 ,
\label{polarX1}
\end{equation}
\begin{equation}
 \tilde{x}_4 = \cos\,r_1\sin \theta_2 \cos \phi_2, \ \ \tilde{x}_5 = \cos\,r_1\sin \theta_2
\sin \phi_2, \ \
 \tilde{x}_6 = \cos\,r_1 \cos \theta_2 
\label{polarY1}
\end{equation}
with respect to the projective coordinates $\tilde{x}_i := x_i/t$, $i=1,\ldots,6$,
on $S^5$.
One can consider (\ref{polarX1}) as polar coordinates on the stretched cone
$X^\wedge_1$
with $r_1\in (0,\frac{\pi}{2}]$, $ \theta_1 \in(0,\pi)$, $\phi_1 \in[0, 2\pi)$. The remaining two angular
variables in (\ref{polarY1})
with $ \theta_2 \in (0,\pi)$, $ \phi_2 \in [0, 2\pi)$ provide a spherical coordinate system on $Y_1$.
Similar local coordinates can be constructed in a local neighbourhood $U_2$ of the $e-n$ singularity $|\boldsymbol{x}_2|=0$.

The Hamiltonian of the helium atom is represented in these coordinates by an edge-degenerate differential
operator, e.g., in a tubular neighbourhood $U_1$ by
\begin{eqnarray}
 H_{\edge} := H|_{\tub U_1} & = & r^{-2}_1 \Big[ -\frac{1}{2 t^2} (- r_1
{\partial_{r_1}} )^2
 - \frac{h}{2t^2} (- r_1 {\partial_{r_1}})
 -\frac{1}{2} (r_1 {\partial_t} )^2
 - \frac{5 r_1}{2t} (r_1 {\partial_t} )  \nonumber\\
 & & - \frac{1}{2t^2 \cos^2 r_1} (r_1 {\partial_{\theta_2}})^2
 - \frac{r_1 \mbox{ctan} \, \theta_2}{2t^2 \cos^2 r_1} (r_1 {\partial_
{\theta_2}} )
- \frac{1}{2t^2 \sin^2 \theta_2 \cos^2 r_1} (r_1 {\partial_
{\phi_2}} )^2 \nonumber\\
 & & - \frac{r_1^2}{2t^2 \sin^2 r_1} \Delta_{X_1} + \frac{r_1}{t} v_{e-n} \Big] \label{He1n.edge}
\end{eqnarray}
with
\[
h := 1+2r_1\mbox{tan} \, r_1-2r_1\mbox{ctan}\, r_1 
\]
and
\[
 v_{e-n} := -\frac{Z r_1}{\sin r_1} -\frac{Z r_1}{\cos r_1} +
 \frac{ r_{1}}{\sqrt{1-\sin(2r_{1})
 [ \cos \theta_1 \cos \theta_2 + \sin \theta_1 \sin \theta_2 \cos(\phi_1-\phi_2)]}}.
\]
For the singular calculus it is important to note that $h$ and $v_{e-n}$ are smooth with respect to $r_1$ up to $r_1=0$.

The arguments are analogous in the case of $|\boldsymbol{x}_2|=0.$

In the remaining case of the $e-e$ edge singularity it is convenient to define center of mass coordinates
\begin{gather*}
 z_1 = \frac{1}{\sqrt{2}} (x_1 - x_4), \ \ z_2 = \frac{1}{\sqrt{2}} (x_2 - x_5), \ \
 z_3 = \frac{1}{\sqrt{2}} (x_3 - x_6), \\
 z_4 = \frac{1}{\sqrt{2}} (x_1 + x_4), \ \ z_5 = \frac{1}{\sqrt{2}} (x_2 + x_5), \ \
 z_6 = \frac{1}{\sqrt{2}} (x_3 + x_6)
 \end{gather*}
in a neigbourhood $U_3$ with
\[
 \sum_{i=1}^6 z_i^2 = \sum_{i=1}^6 x_i^2.
\]

The hyperspherical coordinates can now be introduced in a completely analogous manner.
In such a neigbourhood of the $(e-e)$ edge, the Coulomb potentials can be written in the form
\[
 -\frac{Z}{|{\bf x}_1|} -\frac{Z}{|{\bf x}_2|} + \frac{1}{|{\bf x}_1 - {\bf x}_2|} =
 \frac{1}{t r_{12}} v_{e-e}(r_{12},\theta_1,\phi_1,\theta_2,\phi_2)
\]
with
\begin{eqnarray*}
 v_{e-e} & = & - \frac{\sqrt{2} Z r_{12}}{\sqrt{1+\sin(2r_{12})
 [ \cos \theta_1 \cos \theta_2 + \sin \theta_1 \sin \theta_2 \cos(\phi_1-\phi_2)]}} \\
 & & - \frac{\sqrt{2} Z r_{12}}{\sqrt{1-\sin(2r_{12})
 [ \cos \theta_1 \cos \theta_2 + \sin \theta_1 \sin \theta_2 \cos(\phi_1-\phi_2)]}} \\
 & & + \frac{1}{\sqrt{2}j_0(r_{12})},
\end{eqnarray*}
where again $v_{e-e}$ is smooth with respect to $r_{12}$ up to $r_{12}=0$.

\begin{remark}\label{Hamellipt}
The following two properties of the Hamiltonian are essential for our applications.
\begin{itemize}
\item[{\em (i)}]
The Hamiltonian of the helium atom {\em (\ref{He1n.edge})} corresponds to a bounded edge-degenerate operator
\begin{equation}\label{Hedgebo}
H_{\edge}: {\cal W}^s_{\mathrm{comp(loc)}}(Y, {\cal K}_{(Q)}^{s, \gamma}(X^\wedge)) \longrightarrow
 {\cal W}^{s-2}_{\mathrm{comp(loc)}} (Y, {\cal K}_{(P)}^{s-2, \gamma-2}(X^\wedge))
\end{equation}
for any $s,\gamma\in\mathbb{R}.$
\item[{\em (ii)}]
It has been shown in Ref.~{\em \cite{FH10}} that $H_{\mathrm{edge}}$ satisfies the ellipticity conditions of the edge-degenerate calculus for 
\[
\tfrac{1}{2}<\gamma<\tfrac{3}{2}.
\]
Therefore, there exists a parametrix of $H_{\mathrm{edge}}$ in the corresponding pseudo-differential operator algebra, cf. {\em \cite{ES97}, \cite{HS08}, \cite{Schulze98}}.
\end{itemize}
\end{remark}

The second part of the previous remark is essentially based
on the following lemma, cf.~\cite{FH10} for a proof, which specifies the weights, where the principal edge symbols actually represent isomorphisms between the corresponding Kegel spaces.

\begin{lemma}\label{lem1}
The principal edge symbols $\sigma_1(H_{\edge}|_{U_i})$, $i=1,2,3$, represent Fredholm operators
\[
\sigma_1(H_{\edge}|_{U_1})(t,(\theta_2,\phi_2),\tau, (\Theta_2,\Phi_2)):\mathcal{K}^{s,\gamma}((S^2)^\wedge)\rightarrow \mathcal{K}^{s-2,\gamma-2}((S^2)^\wedge)
\]
for $\gamma\notin \mathbb{Z}+\frac{1}{2},$ for all $(t,(\theta_2,\phi_2))\in\Omega\times S^2$ and all $(\tau, (\Theta_2,\Phi_2))\not=0$.
Furthermore, they correspond to isomorphisms for any $\gamma$ with $\frac{1}{2}<\gamma<\frac{3}{2}.$
\end{lemma}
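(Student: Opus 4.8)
The plan is to reduce the whole statement to an explicit analysis of the frozen-coefficient operator on the model cone $(S^2)^\wedge$, which turns out to be nothing but a positive shifted Laplacian on $\mathbb{R}^3$. The first step is to write down $\sigma_1(H_{\edge}|_{U_1})$ from the defining formula $\sigma_1(A)(y,\eta)=r^{-\mu}\sum_{j+|\alpha|\le\mu}a_{j\alpha}(0,y)(-r\partial_r)^j(r\eta)^\alpha$, i.e.\ by setting $r_1=0$ in all smooth coefficients of \eqref{He1n.edge}. Every summand carrying an explicit factor $r_1$ then drops out: this concerns the two first-order edge terms and, crucially, the potential term $\tfrac{r_1}{t}v_{e-n}$, so that the Coulomb singularity does not contribute to the principal edge symbol at all. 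Using $h|_{r_1=0}=-1$, $\cos r_1|_{r_1=0}=1$ and $r_1^2/\sin^2 r_1\to 1$, one is left with
\[
 \sigma_1(H_{\edge}|_{U_1})(y,\eta)=-\frac{1}{2t^2}\,r_1^{-2}\bigl[(-r_1\partial_{r_1})^2-(-r_1\partial_{r_1})+\Delta_{X_1}\bigr]+b(y,\eta)=\frac{1}{2t^2}\bigl(-\Delta_{\mathbb{R}^3}\bigr)+b(y,\eta),
\]
under the identification $(S^2)^\wedge\cong\mathbb{R}^3\setminus\{0\}$, where $b(y,\eta)$ is the scalar contribution of the three second-order edge terms, a quadratic form in $\eta=(\tau,\Theta_2,\Phi_2)$ that is strictly positive for $\eta\neq 0$. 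The $e$--$e$ case $U_3$ is identical: $v_{e-e}$ is again smooth up to $r_{12}=0$ and carries the factor $r_{12}$, and the base is again $S^2$, so the edge symbol has the same form.

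For the Fredholm assertion I would invoke the ellipticity criterion of the cone calculus (\cite{ES97,HS08,Schulze98}): a second-order cone-degenerate operator on $(S^2)^\wedge$ is Fredholm as $\mathcal{K}^{s,\gamma}\to\mathcal{K}^{s-2,\gamma-2}$ precisely when its interior symbol is elliptic (clear, the operator is Laplace-type), its exit behaviour at $r_1\to\infty$ is elliptic (guaranteed by $b(y,\eta)>0$, since there the $\mathcal{K}$-spaces coincide with $H^s(\mathbb{R}^3)$ and $\tfrac{1}{2t^2}|\xi|^2+b\ge b>0$), and its principal conormal symbol $\sigma_{\mathrm{c}}(w)=-\tfrac{1}{2t^2}(w^2-w+\Delta_{X_1})$ is an invertible family $H^s(S^2)\to H^{s-2}(S^2)$ on the weight line $\Gamma_{3/2-\gamma}:=\{\Re w=3/2-\gamma\}$. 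Diagonalising $-\Delta_{X_1}$ on $S^2$ (eigenvalues $\ell(\ell+1)$, $\ell\ge 0$) reduces the last condition to $w^2-w-\ell(\ell+1)\neq 0$ for all $\ell$, i.e.\ to $w\notin\{\ell+1:\ell\ge0\}\cup\{-\ell:\ell\ge 0\}=\mathbb{Z}$. Hence $\sigma_{\mathrm{c}}$ is invertible along $\Gamma_{3/2-\gamma}$ iff $3/2-\gamma\notin\mathbb{Z}$, i.e.\ iff $\gamma\notin\mathbb{Z}+\tfrac12$, uniformly in $(t,(\theta_2,\phi_2))\in\Omega\times S^2$ and in $\eta\neq0$. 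This is the first claim. (For $\eta=0$ the exit symbol degenerates and Fredholmness fails, consistent with the hypothesis $\eta\neq 0$.)

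For $\tfrac12<\gamma<\tfrac32$ one has $3/2-\gamma\in(0,1)$, so $\Gamma_{3/2-\gamma}$ lies strictly between the conormal zeros $w=0$ and $w=1$, and it suffices to show that the already Fredholm operator has trivial kernel and cokernel. If $u\in\mathcal{K}^{2,\gamma}$ solves $\bigl(\tfrac{1}{2t^2}(-\Delta_{\mathbb{R}^3})+b(y,\eta)\bigr)u=0$, elliptic regularity up to $r_1=0$ in the cone calculus yields $u\in\mathcal{K}^{\infty,\gamma}$ with an asymptotic expansion near $r_1=0$ whose exponents lie among the indicial roots $\{\ell,\,-\ell-1:\ell\ge 0\}$ of $-\Delta_{\mathbb{R}^3}$ and satisfy $\Re p>\gamma-\tfrac32\in(-1,0)$; this forces only non-negative integer exponents, so $u$ is smooth up to $r_1=0$. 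Together with $u\in H^s(\mathbb{R}^3)$ near the exit, this gives $u\in H^2(\mathbb{R}^3)$, whence $0=\langle\tfrac{1}{2t^2}(-\Delta)u+b\,u,\,u\rangle_{L^2(\mathbb{R}^3)}=\tfrac{1}{2t^2}\|\nabla u\|^2+b\|u\|^2$ and $u=0$ since $b>0$. For the cokernel I would pass to the formal adjoint with respect to the pairing of $\mathcal{K}^{0,0}=L^2(\mathbb{R}^3)$: the operator is formally self-adjoint and $(\mathcal{K}^{0,\gamma-2})'=\mathcal{K}^{0,2-\gamma}$, so the cokernel is isomorphic to the kernel of the same operator on $\mathcal{K}^{0,2-\gamma}\to\mathcal{K}^{-2,-\gamma}$; since $\gamma\in(\tfrac12,\tfrac32)$ forces $2-\gamma\in(\tfrac12,\tfrac32)$, the previous argument applies verbatim and the cokernel vanishes. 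Hence $\sigma_1(H_{\edge}|_{U_i})(y,\eta)$ is an isomorphism for $\tfrac12<\gamma<\tfrac32$; it is the self-conjugacy $\gamma\leftrightarrow 2-\gamma$ of this interval that singles it out among the Fredholm ranges.

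The main work, and the most error-prone part, is the first step: extracting the principal edge symbol of \eqref{He1n.edge} correctly, in particular checking that the Coulomb potential and all first-order edge terms genuinely disappear and that the surviving conormal symbol is exactly the shifted sphere-Laplacian whose indicial roots fill all of $\mathbb{Z}$ (a sign slip in the $\Delta_{X_1}$ term would move them to the half-integers and change the statement). This is precisely where the hyperspherical geometry of \cite{MF53,MM65,G63} is needed. The other delicate point is the honest bookkeeping between the abstract weighted cone spaces $\mathcal{K}^{s,\gamma}((S^2)^\wedge)$ and the concrete $\mathbb{R}^3$ picture — the asymptotic expansion near $r_1=0$, elliptic regularity up to the cone tip, and the identification of the cokernel through the conjugate weight $2-\gamma$ — which is routine in the cone calculus but must be handled with care to see that the isomorphism range is exactly $(\tfrac12,\tfrac32)$ and no larger.
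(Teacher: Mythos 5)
The paper does not prove Lemma~\ref{lem1}; it explicitly refers the reader to \cite{FH10} for the proof, so there is no internal argument to compare against. Your proposal supplies a correct, self-contained proof that is the natural one for this statement and almost certainly parallels that reference. The key computation is right: freezing the coefficients of (\ref{He1n.edge}) at $r_1=0$ kills every summand with an explicit $r_1$, in particular the two first-order edge terms and the potential term $\tfrac{r_1}{t}v_{e-n}$, so the principal edge symbol collapses to $\sigma_1(H_{\edge}|_{U_i})(y,\eta)=\tfrac{1}{2t^2}(-\Delta_{\mathbb{R}^3})+\tfrac{1}{2t^2}C(\eta)$ on $(S^2)^\wedge\cong\mathbb{R}^3\setminus\{0\}$, with $C(\eta)>0$ for $\eta\neq 0$. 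The conormal symbol $-\tfrac{1}{2t^2}\bigl(w^2-w+\Delta_{S^2}\bigr)$ has zeros at $w=l+1$ and $w=-l$ on the $l$-th spherical-harmonic sector, filling $\mathbb{Z}$ with simple zeros, so the weight line $\Gamma_{3/2-\gamma}$ is non-characteristic precisely for $\gamma\notin\mathbb{Z}+\tfrac12$; combined with interior ellipticity and exit ellipticity (where $C(\eta)>0$ is needed, which is why $\eta\neq0$ is required) this gives the Fredholm statement. The isomorphism argument for $\tfrac12<\gamma<\tfrac32$ — admissible exponents $p>\gamma-\tfrac32$ are the non-negative integers, no logarithmic or resonance terms arise since the $r_1^2C$ shift stays within a fixed $l$-sector, hence kernel elements are smooth up to the tip, lie in $H^2(\mathbb{R}^3)$, and vanish by integration by parts, while the self-conjugacy $\gamma\leftrightarrow 2-\gamma$ of $(\tfrac12,\tfrac32)$ transfers this to the cokernel via $(\mathcal{K}^{0,\gamma-2})'=\mathcal{K}^{0,2-\gamma}$ — is exactly the right idea, and the self-conjugacy observation also explains why this is the \emph{only} Fredholm interval on which the index can be zero.

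There is no mathematical gap, but two steps that you invoke by name would need citations to count as a full proof: (i) the elliptic-regularity-with-asymptotics statement in the cone calculus that upgrades a $\mathcal{K}^{2,\gamma}$-kernel element to $\mathcal{K}^{\infty,\gamma}$ with asymptotic exponents drawn from the conormal zeros in the relevant half-plane, together with the explicit observation that the conormal zeros are simple and the scalar $r_1^2C$-perturbation does not couple different $l$-sectors (this is precisely what rules out $\ln r_1$ terms and makes the "smooth up to $r_1=0$" conclusion valid); and (ii) the duality $(\mathcal{K}^{s,\gamma}((S^2)^\wedge))'\cong\mathcal{K}^{-s,-\gamma}((S^2)^\wedge)$ with respect to the $\mathcal{K}^{0,0}=L^2(\mathbb{R}^3)$ pairing. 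Both are standard in the calculus of \cite{Schulze98,ES97}, so these are presentation points rather than errors, and the isomorphism range $(\tfrac12,\tfrac32)$ comes out exactly as Lemma~\ref{lem1} claims.
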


It is worth mentioning that the interval $\gamma$ with $\frac{1}{2}<\gamma<\frac{3}{2}$ actually corresponds to
the physically relevant weights for eigenfunctions of the Hamiltonian; we refer to \cite{FH10} for further discussions.

\section{Asymptotic parametrices for Hamiltonian operators}
\label{coap}

\subsection{Outline of our approach}
\label{oooa}

The underlying theory for the construction of asymptotic parametrices in the edge algebra has been presented in a previous paper \cite{FHS16} of the authors. For the convenience of the reader we give a brief outline of some basic ideas.
\begin{itemize}
\item[(i)]
{\bf Existence of an asymptotic parametrix:} \\
Every pseudo-differential operator in $L^{\mu}$ can be written in the form
\begin{equation}\label{Pap}
P=\sum_{i=0}^N r^i P_i
\end{equation}
modulo Green operators for any $N\in\mathbb{N}$ and for suitable $P_i\in L^{\mu}, i=0,1,\ldots,N.$ The individual summands $r^i P_i$  
belong to $L^{\mu;i }(M,\boldsymbol{g})\setminus L^{\mu;i+1 }(M,\boldsymbol{g}),\,0\leq i\leq N-1,$ and $r^N P_N\in L^{\mu;N}(M,\boldsymbol{g})$.
In particular, $r^i P_i$ is flat of order $i$, i.e., in the edge algebra
\[
r^i P_i:{\cal W}^s_{\mathrm{comp}}(Y,{\cal K}^{s,\gamma}_{(Q)}(X^\wedge))\rightarrow {\cal W}^{s-\mu}_{\mathrm{loc}}(Y,{\cal K}^{s-\mu,\gamma-\mu+i}_{(P)}(X^\wedge)).
\]
There exists an $N\in\mathbb{N}$ such that for $i\geq N$ the operator $r^i P_i$ maps on the flat part ${\cal K}_{\Theta}$ of ${\cal K}_P,$ cf. (\ref{E+K}). Therefore, for our purposes it is sufficient to consider finite expansions for the Hamiltonian and of the corresponding parametrix.

\item[(ii)]
{\bf Recursive construction of asymptotic parametrices:} \\
Let us start with a differential operator  $A\in\mbox{Diff}^\mu_{deg}(\mathbb{W})$ in the form (\ref{Pap}) and represent the corresponding parametrix in the same form. Then we have 
\[
\left(\sum_{i=0}^N r^i P_i\right)\left(\sum_{i=0}^N r^i A_i\right)\sim I,\quad 
\left(\sum_{i=0}^N r^i A_i\right)\left(\sum_{i=0}^N r^i P_i\right)\sim I
\]
modulo Green operators and flat remainders $F$, i.e., $r^{-(N+1)}F\in L^0({\cal M}).$
From this ansatz we can derive the following recursion scheme for 
the construction of an asymptotic parametrix.
\begin{itemize}
\item Initial step: $P_0 A_0 = I \quad \mbox{mod} \ L^0_G + L^0_{\gflat}$.
\item First recursion:  
$r^1 P_1 A_0 + P_0 r^1 A_1 = 0 \quad \mbox{mod} \ L^0_G + L^0_{\gflat}$.\\
In general an operator ${\cal O}\in L^\mu({\cal M})$ satisfies a commutator relation
\[
{\cal O}r^\beta-r^\beta {\cal O}_\beta=G_\beta
\]
with an ${\cal O}_\beta\in L^{\mu}({\cal M})$ and an Green operator $G_\beta.$ Then we have
\[
rP_1 \stackrel{\mod G}{=} -P_0 r A_1 P_0 \stackrel{\mod G}{=} -rP_{0,1} A_1 P_0.
\]
\item Second recursion: 
$r^2 P_2 A_0 + r P_1 r A_1 + P_0 r^2 A_2 = 0 \quad \mbox{mod} \ L^0_G + L^0_{\gflat}$.\\
Again one has to commute powers of $r$ to the left in order to
get the recursion relation
\begin{eqnarray*}
r^2 P_2& \stackrel{\mod G}{=} & -r P_1 r A_1 P_0-P_0r^2A_2 P_0\\ 
& \stackrel{\mod G}{=} &r^2\big(P_{0,1,1}A_{1,1}P_{0,1}A_1 P_0-P_{0,2}A_2 P_0\big).
\end{eqnarray*}
\item Higher recursions can be performed in a similar manner.
\end{itemize}
\item[(iii)]
{\bf Calculation of Green operators and the asymptotic behaviour of eigenfunctions:} \\
The parametrix construction of the previous step has been 
given modulo certain Green operators which carry asymptotic
information.  
In order to get the asymptotic behaviour of
eigenfunctions of the Hamiltonian operator, it is essential
to keep track of all of these Green operators
and to calculate them order by order in the asymptotic expansion.
A large part of the present work is actually devoted to the
details of their calculation. 

Let us define the shifted Hamiltonian operator
\[
 A_{\edge} := H_{\edge} -E 
\]
with $E$ an eigenvalue of the Hamiltonian and consider the equation $A_{\edge}u=0$. Now let us apply
a parametrix from the left, i.e.,
\begin{equation}
 P A_{\edge} u = u + Gu =0 ,
\label{PAedge}
\end{equation}
which means that $u$ is an eigenfunction of the 
Green operator $G$, too.
From our calculations, we obtain
an asymptotic expansion of the operator valued symbol
of the Green operator
\[
 g(y,\eta) \sim g_0(y,\eta) + r g_1(y,\eta) + r^{2} g(y,\eta) + \cdots
\]
with $r^{i} g_i(y,\eta) \in R_G^{0;i} (Y \times \mathbb{R}^3,\boldsymbol{g}) \setminus R_G^{0;i+1} (Y \times \mathbb{R}^3,\boldsymbol{g})$, $i=0,1,2,\ldots$,
and mapping property
\[
 r^{i} g_i(y,\eta): {\cal K}^{s,\gamma}(X^\wedge)\rightarrow
 {\cal S}^{\gamma+i}_{Q_i}(X^\wedge),
\]
for specific asymptotic types $Q_i$. Therefore, according to
equation (\ref{PAedge}), the asymptotic behaviour of
an eigenfunction $u$ is completely determined by the 
asymptotic types $Q_i$ of the asymptotic expansion of the
Green operator $G$.
\end{itemize}

\subsection{Asymptotic representation of the Hamiltonian}
\label{2.1}
Let us perform an asymptotic expansion of shifted Hamiltonian operator $A_{\edge}$ in the cone variable
\[
 A_{\edge} = \sum_{i=0}^N r^i A_i + A_N .
\]
In order to calculate the asymptotic terms from (\ref{He1n.edge}) some auxiliary Taylor series are required, i.e.,
\[
 h(r) = -1 +\tfrac{8}{3} r^2 +\tfrac{32}{45} r^4 +O(r^6) ,
\]
\[
 \frac{r^2}{\sin^2r} = 1 +\tfrac{1}{3} r^2 +\tfrac{1}{15} r^4 +O(r^6), \ \ 
 \frac{1}{\cos^2r} = 1 + r^2 +\tfrac{2}{3} r^4 +O(r^6) .
\]
Depending on the specific edge under consideration, the corresponding Taylor series of the Coulomb potentials 
are given by
\begin{eqnarray*}
 v_{e-n} & = & -Z + \bigl( 1- Z \bigr) r + \bigl( a - \tfrac{1}{6} Z \bigr) r^2 
 + \tfrac{1}{2} \bigl( 3 a^2 - Z \bigr) r^3 
 + \tfrac{1}{360} \bigl( - 240 a +900 a^3 -7Z \bigr) r^4 \\
 & & + \bigl( -2a^2 + \tfrac{35}{8} a^4 - \tfrac{5}{24} Z \bigr) r^5 + O(r^6) 
\end{eqnarray*}
and
\[
 v_{e-e} = \tfrac{1}{\sqrt{2}} -2 \sqrt{2} Z r + \tfrac{1}{6\sqrt{2}} r^2 - 3\sqrt{2} a^2 Z r^3
 + \tfrac{7}{360\sqrt{2}} r^4 + \bigl( 4 \sqrt{2} a^2 Z - \tfrac{35}{2 \sqrt{2}} a^4 Z \bigr) r^5 + O(r^6) ,
\]
with
\[
 a:= \cos \theta_1 \cos \theta_2 + \sin \theta_1 \sin \theta_2 \cos(\phi_1-\phi_2) .
\]

With this the asymptotic terms of the Hamiltonian become
\begin{eqnarray}
 A_0 & = & r^{-2} \Big[ -\frac{1}{2 t^2} (- r\partial_{r} )^2
 + \frac{1}{2t^2} (- r {\partial_{r}})
 -\frac{1}{2} (r {\partial_t} )^2
 - \frac{1}{2t^2 \sin^2 \theta_2} (r {\partial_{\phi_2}} )^2 
 - \frac{1}{2t^2} (r {\partial_{\theta_2}})^2
 - \frac{1}{2t^2} \Delta_{X} \Big];\nonumber\\
 A_1 & = & r^{-2} \biggl[ - \frac{5}{2t} (r \partial_t)
 - \frac{\mbox{ctan} \theta_2}{2t^2} (r \partial_{\theta_2}) + \frac{1}{t} \left\{ \begin{array}{ll}
 \tfrac{1}{\sqrt{2}} & \mbox{for} \ v_{e-e} \\
 -Z & \mbox{for} \ v_{e-n} \end{array} \right. \biggr];\nonumber\\
 A_2 & = & r^{-2} \biggl[ - \frac{4}{3t^2} (-r \partial_r) - \frac{1}{2t^2} (r \partial_{\theta_2})^2
 - \frac{1}{2t^2 \sin^2 \theta_2} (r \partial_{\phi_2})^2 - \frac{1}{6t^2} \Delta_{S^2} -E
 + \frac{1}{t} \left\{ \begin{array}{ll}
 -2 \sqrt{2} Z & \mbox{for} \ v_{e-e} \\
 1-Z & \mbox{for} \ v_{e-n} \end{array} \right. \biggr]\nonumber\\
  & \vdots & \nonumber
\end{eqnarray}

\subsection{The initial asymptotic parametrix and corresponding Green operators}
\label{2.2}
The edge-degenerate differential operator $A_0$ can be expressed as a Leibniz-Mellin pseudo-differential operator
close to the edge.
Let $\sigma,\tilde{\sigma}, \hat{\sigma}$ be cut-off functions\footnote{ 
Throughout this paper a cut-off function is any real-valued $\sigma
\in C^\infty_0(\overline{\mathbb{R}}_+)$ such that $\sigma(r) \equiv
1$ close to $r=0$.}
such that $\sigma \prec \tilde{\sigma}$, $\hat{\sigma} \prec \sigma$.
It is easy to see that $A_0$ can be decomposed into two parts in the
following way
\[
 A_0=\sigma A_0 \tilde{\sigma}+(1-\sigma)A_0(1-\hat{\sigma}).
\]
Next, we introduce a locally finite cover $\{U_i\}_{i=0,1 \ldots, \infty}$ on the edge 
$Y= \mathbb{R}_+ \times S^2 \cong \mathbb{R}^3 \setminus \{0\}$ and 
a subordinate partition of unity $\{\varphi_i\}_{i=0,1 \ldots, \infty}$. Furthermore,
let $\{\varphi'_i\}_{i=0,1 \ldots, \infty}$ denote a set of functions with $\varphi'_i \in C^\infty_0(U_i)$
and $\varphi_i \varphi'_i = \varphi_i$ for $i=0,1, \ldots, \infty$. Herewith $A_0$
can be expressed as a Leibniz-Mellin pseudo-differential operator in the algebra $L^2(M,{\boldsymbol g})$, i.e.,
\begin{equation}
 A_0 = \sum_i \sigma \varphi_i \Op_y r^{-2} \opm_M^{\gamma-1}(a_0) \varphi'_i \tilde{\sigma}
 +(1-\sigma)A_{\interior}(1-\hat{\sigma})
\label{A}
\end{equation}
with holomorphic operator valued symbol
\begin{equation}
 a_0(r,w,\eta) = -\tfrac{1}{2t^2} \bigl( w^2 -w - r^2C(\eta) + \Delta_{S^2} \bigr) 
\label{a0}
\end{equation}
and
\[
 C(\tau,\Theta_2,\Phi_2) := t^2 \tau^2+\Theta_2^2+\frac{\Phi_2^2}{\sin^2 \theta_2}.
\]
In the second term $A_0$ has been pushed forward in the interior to a standard pseudo-differential operator
\[
 A_{\interior} := \Op (a_{\interior})
\]
with symbol
\[
 a_{\interior}(\xi,\eta) = \frac{1}{2t^2} \left( \xi_1^2 + \xi_2^2 + \xi_3^2 +C(\eta) \right) .
\]

\begin{remark}
A parametrix $P_0$ of the edge-degenerate operator $\mathrm{(\ref{A})}$ can be written in the general form
\begin{equation}
 P_0 = \sum_i \sigma' \varphi_i \Op_y (p_0) \varphi'_i \tilde{\sigma}'
 +(1-\sigma')P_{\interior}(1-\hat{\sigma}') 
\label{A0}
\end{equation}
with $p_0 \in R^{-2}(Y \times \mathbb{R}^3, {\boldsymbol g})$ and
given cut-off functions $\hat{\sigma}' \prec \sigma' \prec \tilde{\sigma}'$. We can assume w.l.o.g. 
$\tilde{\sigma}' \prec \sigma$. Therefore,
\begin{eqnarray*}
 P_0 A_0 &=& \sum_i \sigma' \varphi_i \Op_y (p_0) \varphi'_i \tilde{\sigma}'
           \sigma A_0 \tilde{\sigma} +\sum_i \sigma' \varphi_i \Op_y (p_0) \varphi'_i \tilde{\sigma}'(1-\sigma)A_{\interior}(1-\hat{\sigma})\\
       && +
 (1-\sigma')P_{\interior}(1-\hat{\sigma}')
 \sigma A_0 \tilde{\sigma} +
 (1-\sigma')P_{\interior}(1-\hat{\sigma}') (1-\sigma)A_{\interior}(1-\hat{\sigma}) 
\end{eqnarray*}
and the asymptotic behaviour is completely determined by
\begin{equation}
 P_0 A_0 \sim \sum_i \sigma' \varphi_i \Op_y (p_0) \varphi'_i \tilde{\sigma}'
 \sigma A_0 \tilde{\sigma}.
\label{P0A0asymp}
\end{equation}
\end{remark}

The Leibniz-Mellin representation of (\ref{P0A0asymp}) is conveniently performed by the following steps 
\begin{eqnarray}
 \sum_i \sigma' \varphi_i \Op_y (p_0) \varphi'_i \tilde{\sigma}' 
 \sigma A_0 \tilde{\sigma}
 & = & \sum_i \sigma' \varphi_i \Op_y (p_0) \varphi'_i \tilde{\sigma}' 
 \sigma A_0 \tilde{\sigma} \varphi''_i \nonumber\\
 & = & \sum_i \sigma' \varphi_i \Op_y (p_0) \varphi'_i \tilde{\sigma}' 
 \sigma \Op_y r^{-2} \opm_M^{\gamma-1}(a_0) \tilde{\sigma} \varphi''_i \nonumber\\
 & = & \sum_i \sigma' \varphi_i \Op_y (p_0) \tilde{\sigma}'
 \sigma \Op_y r^{-2} \opm_M^{\gamma-1}(a_0) \tilde{\sigma} \varphi''_i \label{!}\\
 & & -\sum_i \underbrace{\sigma' \varphi_i \Op_y (p_0) (1-\varphi'_i) \tilde{\sigma}' 
 \sigma \Op_y r^{-2} \opm_M^{\gamma-1}(a_0) \tilde{\sigma} \varphi''_i}_{\in \, L^{-\infty}_G(M, {\boldsymbol g})},\label{*}
\end{eqnarray}
where $\{\varphi''_i\}_{i=0,1 \ldots \infty}$ denotes another set of functions with $\varphi''_i \in C^\infty_0(U_i)$
and $\varphi'_i \prec \varphi''_i$ for $i=0,1, \ldots \infty$. The Green operator character in the last line
follows from $\varphi_i (1-\varphi'_i)=0$, cf.~\cite[Remark 3.4.51]{Schulze98}.

\begin{proposition}\label{prop1}
A Green operator of the form
\[
 \sum_i \sigma' \varphi_i \Op_y (h) \xi_i \tilde{\sigma}' ,
\]
with $\supp \varphi_i \cap \supp \xi_i = \emptyset$ and $\varphi_i, \xi_i \in C^\infty_0(U_i)$, $i=0,1, \ldots \infty$,
is smoothing, i.e., it belongs to $L^{-\infty}_{\smooth}(M, {\boldsymbol g})$. 
\end{proposition}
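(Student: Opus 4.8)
The plan is to reduce the claim to the well-known pseudolocality of the Mellin operator-valued calculus on the open stretched wedge. The starting point is that $\Op_y(h)$ is an edge-degenerate pseudo-differential operator with amplitude function $h(y,\eta) \in R^{\mu}(Y\times\mathbb{R}^3,\boldsymbol{g})$ for some order $\mu$; the cut-off functions $\sigma',\tilde\sigma'$ and the summation over the locally finite cover $\{U_i\}$ with partition of unity $\{\varphi_i\}$ only serve to localise near the edge, so it suffices to treat a single summand $\sigma'\varphi_i\Op_y(h)\xi_i\tilde\sigma'$. The key hypothesis is $\supp\varphi_i\cap\supp\xi_i=\emptyset$, which means the two localisers cut out disjoint regions of the edge variable $y$.

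First I would invoke the standard fact that for a properly supported pseudo-differential operator $\Op_y(h)$ (here made proper by the compactly supported cut-offs $\varphi_i,\xi_i$), composition on the left and right with smooth functions having disjoint supports produces an operator with a smoothing amplitude. Concretely, the distributional kernel of $\varphi_i\Op_y(h)\xi_i$ in the $y$-variables is supported in $\supp\varphi_i\times\supp\xi_i$, which is disjoint from the diagonal of $Y\times Y$; hence the kernel is smooth in $(y,y')$ off the diagonal and, since the diagonal is avoided entirely, the operator is given by integration against a kernel that is $C^\infty$ in the edge variables with values in continuous operators between the weighted cone spaces $\mathcal{K}^{s,\gamma}(X^\wedge)\to\mathcal{K}^{\infty,\gamma}(X^\wedge)$. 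This is exactly the situation treated in \cite[Remark 3.4.51]{Schulze98}, which is already cited in the text immediately preceding the proposition; so the core of the argument is to recognise the hypothesis $\varphi_i(1-\varphi_i')=0$ there as the special case of disjoint supports stated here, and to note that $\varphi_i\,\xi_i=0$ globally.

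The second point to check is that ``smoothing off the diagonal in $y$'' is enough to land in $L^{-\infty}_{\smooth}(M,\boldsymbol{g})$ rather than merely in $L^{-\infty}_G(M,\boldsymbol g)$. For this I would recall the definition of the smoothing subclass: an operator in $L^{-\infty}_{\smooth}$ is characterised by having a kernel that is smooth up to $r=0$ \emph{without} the extra flatness/asymptotic decoration that distinguishes Green operators — i.e. it contributes trivial conormal symbols and maps into Schwartz-type spaces in a way that does not affect the asymptotic type. Because the two cut-offs in $y$ are disjointly supported, no Mellin quantisation in $r$ is actually performed across the relevant region, so there is no residual conormal-symbol contribution and no asymptotic type is generated; the operator is therefore in the smaller class $L^{-\infty}_{\smooth}$. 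I would spell this out by writing the kernel explicitly via the oscillatory integral, integrating by parts in $\eta$ arbitrarily many times (legitimate since $y\neq y'$ on the support), and reading off that the resulting kernel lies in $C^\infty(Y\times Y, L^{-\infty}(X^\wedge))$ with the correct decay.

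The main obstacle is bookkeeping rather than conceptual: one must verify that the group action $\kappa_\lambda$ entering the wedge Sobolev norms does not obstruct the gain of regularity — i.e. that the $\eta$-integrations by parts remain compatible with the twisted homogeneity built into $R^\mu(Y\times\mathbb{R}^3,\boldsymbol g)$ — and that summing over the locally finite cover preserves membership in $L^{-\infty}_{\smooth}(M,\boldsymbol g)$ (local finiteness makes the sum well-defined, and each summand is smoothing, so this is routine). I expect the cleanest exposition simply cites \cite[Remark 3.4.51]{Schulze98} together with the definition of $L^{-\infty}_{\smooth}$ from \cite{ES97,Schulze98}, and adds one line observing that disjointness of $\supp\varphi_i$ and $\supp\xi_i$ is precisely the hypothesis under which that remark applies.
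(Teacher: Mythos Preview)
Your argument correctly identifies that disjoint supports in $y$ make the kernel smooth off the diagonal, but this only lands you in $L^{-\infty}_G$; it does not by itself give the stronger conclusion $L^{-\infty}_{\smooth}$. The distinction here is flatness to infinite order at $r=0$, and your justification for that step (``no Mellin quantisation in $r$ is actually performed \ldots\ so there is no residual conormal-symbol contribution'') is too vague to carry the weight. Integration by parts in $\eta$ gives you decay in $\eta$, hence regularity in $y$, but you never explain how this produces arbitrary powers of $r$.

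The paper's proof exploits a structural feature you do not use: the Schwartz kernel of the edge-degenerate operator has the specific form $\varphi_i(y)\,r^{-1}k_h\bigl((y-y')/r,y\bigr)\,\xi_i(y')$, i.e.\ the off-diagonal variable enters through the rescaled combination $(y-y')/r$. Disjointness of supports forces $|y-y'|\geq\epsilon$, hence $|(y-y')/r|\geq\epsilon/r$; feeding this into the standard kernel estimate $|\partial_z^\alpha\partial_y^\beta k_h(z,y)|\leq c|z|^{-3-m-|\alpha|-N}$ immediately yields $|\varphi_i(y)\,r^{-1}k_h((y-y')/r,y)\,\xi_i(y')|\leq c\,r^{2+m+N}$ for every $N$. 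This is precisely the infinite-order vanishing at $r=0$ that characterises $L^{-\infty}_{\smooth}$. The mechanism is the edge-degenerate scaling, not pseudolocality per se; your remark that the $\kappa_\lambda$ compatibility is ``bookkeeping'' is where the actual content hides.
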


\begin{proof}
This follows from the properties of the corresponding 
Schwartz kernel of the edge-degenerate operator which can be written on a coordinate neighbourhood as
\[
 \varphi_i(y) r^{-1} k_h ((y-y')/r,y) \xi_i(y') .
\]
According to our assumption, the kernel vanishes for $|y-y'| < \epsilon$ for some $\epsilon >0$.
The standard kernel estimate 
\[
 | \partial_z^\alpha \partial_y^\beta k_h (z,y) | \leq c |z|^{-3-m- |\alpha | -N} , \ \ z \neq 0 ,
\]
for all $N \in \mathbb{N}, z=y-y^\prime,$ so that $3+m+|\alpha |+N >0$, applied to the edge-degenerate case
yields the following estimate
\[
 | \varphi_i(y) r^{-1} k_h ((y-y')/r,y,y') \xi_i(y') | \leq c r^{2+m+N} , 
\]
which demonstrates the smoothing property.
\end{proof}

Let us write (\ref{*}) in the form
\[
 \sum_i \sigma' \varphi_i \Op_y (p_0) (1-\varphi'_i)\varphi'''_i\tilde{\sigma}' 
 \sigma \Op_y r^{-2} \opm_M^{\gamma-1}(a_0) \tilde{\sigma} \varphi''_i
\]
for a set of functions $\{\varphi_i'''\}_{i=0,1,\ldots,\infty}$ with $\varphi'''_i\in C_0^\infty(U_i)$ and $\varphi''_i\prec \varphi'''_i.$ Then, from Proposition \ref{prop1} it follows that (\ref{*}) 
belongs to $L^{-\infty}_{\smooth}(M, {\boldsymbol g})$.

In order to calculate the parametrix it is convenient to make the following ansatz
\begin{equation}
 p_0(y,\eta) = \omega'_{1,\eta} r^2 p_{M,0}(y,\eta) \omega'_{0,\eta} 
 + (1-\omega'_{1,\eta}) r^2 p_{\psi,0}(y,\eta) (1-\omega'_{2,\eta}) 
%+ \underbrace{\omega'_{1,\eta} r^2 p_M(y,\eta)
%(1-\omega'_{0,\eta})}_{\in \, R^{-2}_G(Y \times \mathbb{R}^3, {\boldsymbol g})} ,
\label{p0}
\end{equation}
with cut-off functions $\omega'_2, \omega'_1, \omega'_0$ satisfying $\omega'_2 \prec \omega'_1 \prec \omega'_0,$ 
where we write $\omega_\eta(r) := \omega ([\eta]r)$; here, $[\eta]$ is any fixed strictly positive function in 
$C^\infty(\mathbb{R}^3)$ such that $[\eta] = |\eta|$ for $\eta \geq \epsilon$ for some $\epsilon >0$.
By a slight modification of the standard notation we incorporate into $p_M$ contributions from 
$R^{-2}_{M+G}(Y \times \mathbb{R}^3, {\boldsymbol g})$
and assume a Mellin representation
\begin{equation}
 p_{M,0}(y,\eta) := \opm_M^{\gamma-3}(a_0^{(-1)})(y,\eta) .
\label{pM0}
\end{equation}

\begin{remark}
The operator $(1-\omega'_{1,\eta}) r^2 p_{\psi,0} (1-\omega'_{2,\eta})$  does not contribute to the asymptotic behaviour.
For a proof see, e.g., {\rm \cite[Prop. 3.3.38]{Schulze98}}.
\label{remark2}
\end{remark}

According to the previous remark, we ignore the second term of (\ref{p0}) in the following considerations.
Inserting the ansatz (\ref{p0}) into the operator product
(\ref{!})
with the parametrix acting from the left, yields the Leibniz product
\[
 \sigma' \omega'_{1,\eta} r^2 \opm_M^{\gamma-3}(a_0^{(-1)}) \omega'_{0,\eta} \tilde{\sigma}' \#_{\eta,y}
 \sigma r^{-2} \opm_M^{\gamma-1}(a_0) \omega_{0,\eta} \tilde{\sigma},
\]
where an optional parameter dependent cut-off function $\omega_{0,\eta}$,with $\omega'_{0,\eta} \prec \omega_{0,\eta}$, has been 
multiplied to the right. Actually this does not affect the operator product because $\opm_M^{\gamma-1}(a_0)$ represents a local
differential operator. It enables us, however, to get rid of the disturbing parameter dependent cut-off function $\omega'_{0,\eta}$ 
by adding a further Green operator\footnote{The Green character of this operator actually follows from $\omega'_{1,\eta} \prec \omega'_{0,\eta}$, 
cf.~\cite[Lemmas 2.3.73, 3.3.27]{Schulze98}}
\[
 \sigma' \omega'_{1,\eta} r^2 \opm_M^{\gamma-3}(a_0^{(-1)}) (1- \omega'_{0,\eta}) \tilde{\sigma}' \#_{\eta,y}
 \sigma r^{-2} \opm_M^{\gamma-1}(a_0) \omega_{0,\eta} \tilde{\sigma} ,
\]
which belongs to $R^0_G(Y \times \mathbb{R}^3, {\boldsymbol g})$, to the Leibniz product. We denote our approach to simplify the construction of the parametrix by adding temporarily an appropriate Green operator and neutralizing it at the end of the calculation via dilation of the cut-off functions as $\varepsilon$-regularization. The basic idea and some relevant technical details
are given in Appendix \ref{epsreg}.

With this modification, the Leibniz product becomes
\begin{eqnarray}
\nonumber
 &&\sigma' \omega'_{1,\eta} r^2 \opm_M^{\gamma-3}(a_0^{(-1)}) \tilde{\sigma}' \#_{\eta,y}
 \sigma r^{-2} \opm_M^{\gamma-1}(a_0) \omega_{0,\eta} \tilde{\sigma}\\ \nonumber
 & = & \sigma' \sum_\alpha \frac{1}{\alpha !} \partial_\eta^\alpha \biggl( 
 \omega'_{1,\eta} r^2 \opm_M^{\gamma-3}(a_0^{(-1)}) \biggr) \tilde{\sigma}' 
 \sigma r^{-2} D^\alpha_y \biggl(\opm_M^{\gamma-1}(a_0)\biggr) \omega_{0,\eta} \tilde{\sigma} \ \mbox{mod} \ R^{-\infty}_{\gflat}(Y \times \mathbb{R}^3, {\boldsymbol g})
  \\ \nonumber
 & = & \sigma' \sum_\alpha \frac{1}{\alpha !} \partial_\eta^\alpha \biggl( 
 \omega'_{1,\eta} \opm_M^{\gamma-1}(T^2 a_0^{(-1)}) \biggr) \tilde{\sigma}' 
 \sigma D^\alpha_y \biggl(\opm_M^{\gamma-1}(a_0)\biggr) \omega_{0,\eta} \tilde{\sigma} \\ \nonumber
 & = & \sigma' \omega'_{1,\eta} \sum_\alpha \frac{1}{\alpha !} 
 \partial_\eta^\alpha \biggl(\opm_M^{\gamma-1}(T^2 a_0^{(-1)})\biggr) \tilde{\sigma}' 
 \sigma D^\alpha_y \biggl(\opm_M^{\gamma-1}(a_0)\biggr) \omega_{0,\eta} \tilde{\sigma} \ \mbox{mod} \ R^{0}_{\gflat}(Y \times \mathbb{R}^3, {\boldsymbol g}) \\ \nonumber
 & = & \sigma' \omega'_{1,\eta} \sum_\alpha \frac{1}{\alpha !} 
 \partial_\eta^\alpha \biggl(\opm_M^{\gamma-1}(T^2 a_0^{(-1)})\biggr)
 D^\alpha_y \biggl(\opm_M^{\gamma-1}(a_0)\biggr) \omega_{0,\eta} \tilde{\sigma} \\ \nonumber
 & & + \underbrace{ \sigma' \omega'_{1,\eta} \sum_\alpha \frac{1}{\alpha !} 
 \partial_\eta^\alpha \biggl(\opm_M^{\gamma-1}(T^2 a_0^{(-1)})\biggr)
 \bigl(\tilde{\sigma}' \sigma -1 \bigr)
 D^\alpha_y \biggl(\opm_M^{\gamma-1}(a_0)\biggr) 
 \omega_{0,\eta} \tilde{\sigma}}_{=: g_{0,1} \in \, R^{-\infty}_G(Y \times \mathbb{R}^3, {\boldsymbol g})} \ \mbox{mod} \ R^{0}_{\gflat}(Y \times \mathbb{R}^3, {\boldsymbol g})
 \\ \nonumber
 & = & \sigma' \omega'_{1,\eta} \sum_\alpha \frac{1}{\alpha !}
 \partial_\eta^\alpha \biggl(\opm_M^{\gamma-1}(T^2 a_0^{(-1)})\biggr)
 D^\alpha_y \biggl(\opm_M^{\gamma-1}(-\tfrac{1}{2t^2} T^2h_0 +\tfrac{r^2C}{2t^2})\biggr) \omega_{0,\eta} \tilde{\sigma} + g_{0,1} \\ \nonumber
 & = & \sigma' \omega'_{1,\eta} \sum_\alpha \frac{1}{\alpha !} \left[
 D^\alpha_y(\tfrac{r^2 C}{2t^2}) \opm_M^{\gamma-3}(\partial_\eta^\alpha a_0^{(-1)})
 + D^\alpha_y(-\tfrac{1}{2t^2}) \opm_M^{\gamma-1}\bigl(T^2(\partial_\eta^\alpha a_0^{(-1)}h_0)\bigr) 
 \right] \omega_{0,\eta} \tilde{\sigma} + g_{0,1} \\ \nonumber
 & = & \sigma' \omega'_{1,\eta} \sum_\alpha \frac{1}{\alpha !}
 \opm_M^{\gamma-1}\bigl( D^\alpha_y(-\tfrac{1}{2t^2}) T^2(\partial_\eta^\alpha a_0^{(-1)}h_0) +
 D^\alpha_y(\tfrac{r^2C}{2t^2}) \partial_\eta^\alpha a_0^{(-1)}\bigr) 
 \omega_{0,\eta} \tilde{\sigma} + g_{0,1} \\ \label{g012}
 & & + \underbrace{ \sigma' \omega'_{1,\eta} \sum_\alpha \frac{1}{\alpha !} 
 D^\alpha_y(\tfrac{r^2 C}{2t^2}) \partial_\eta^\alpha \biggl( \opm_M^{\gamma-3}(a_0^{(-1)}) - \opm_M^{\gamma-1}(a_0^{(-1)}) \biggr)
 \omega_{0,\eta} \tilde{\sigma}}_{=: g_{0,2} \in \, R^0_G(Y \times \mathbb{R}^3, {\boldsymbol g})} .
\end{eqnarray}
Here we used the fact that $\partial_\eta^\alpha \omega'_{1,\eta}$ for $|\alpha| >0$ generates 
smoothing operators which do not contribute to the asymptotics according to the same arguments as in Remark \ref{remark2}.
For the Green operator character of $g_1$ and $g_2$, cf.~\cite[Remark 3.12]{GSS00} and \cite[Proposition 2.3.69]{Schulze98}, respectively.   
In the following we have to determine the symbol $a_0^{(-1)}$ such that 
\begin{equation}
 \sigma' \sum_\alpha \frac{1}{\alpha !} 
 \opm_M^{\gamma-1}\bigl( D^\alpha_y(-\tfrac{1}{2t^2}) T^2(\partial_\eta^\alpha a_0^{(-1)}h_0) +
 D^\alpha_y(\tfrac{r^2C}{2t^2}) \partial_\eta^\alpha a_0^{(-1)}\bigr) \tilde{\sigma}
 = I \ \mbox{mod} \ R^0_{\gflat}(Y \times \mathbb{R}^3, {\boldsymbol g}_l) .
\label{leftp}
\end{equation}

For our application, we are particularly interested in the Green operator symbols $g_1$ and $g_2$ which correspond to
the left action of the parametrix on the Hamiltonian. It is, however, obvious that any left parametrix is
a right parametrix as well and vice versa. Acting on the right side of the Hamiltonian with the parametrix results 
in another sequence of Green and flat operator symbols plus the condition
\begin{equation}
 \sigma \sum_\alpha \frac{1}{\alpha !} 
 \opm_M^{\gamma-3}\bigl( \partial_\eta^\alpha T^{-2} a_0 \#_{r,w} D^\alpha_y a_0^{(-1)} \bigr) \tilde{\sigma}'
 = I \ \mbox{mod} \ R^0_{\gflat}(Y \times \mathbb{R}^3, {\boldsymbol g}_r) ,
\label{rightp}
\end{equation}
from which the symbol $a_0^{(-1)}$ can be derived.

\begin{remark}
Conditions $\mathrm{(\ref{leftp})}$ and $\mathrm{(\ref{rightp})}$ are equivalent. Once one has constructed e.g.~a right parametrix,
where the symbol $a_0^{(-1)}$ satisfies $\mathrm{(\ref{rightp})}$,
it will also satisfy condition $\mathrm{(\ref{leftp})}$.
This follows from the fact that the parameter dependent Mellin pseudo-differential operators in conditions
$\mathrm{(\ref{leftp})}$ and $\mathrm{(\ref{rightp})}$ are equal modulo Green and flat operators. Therefore, both Mellin pseudo-differential operators must have the same sequence of conormal symbols. 
\end{remark}

\subsection{Paving the way for the construction of the asymptotic parametrix}
In the previous section we have discussed the general structure
of the initial parametrix $P_0$ and derived the equivalent conditions $\mathrm{(\ref{leftp})}$ and $\mathrm{(\ref{rightp})}$ which can be
used for their actual construction. Furthermore, we have derived for a left parametrix the corresponding Green operators which contribute to the asymptotic behaviour. Next, we want to present an asymptotic ansatz
for the parameter dependent symbol of the parametrix which  can be used in conditions $\mathrm{(\ref{leftp})}$ and 
$\mathrm{(\ref{rightp})}$ for actual calculations. Here one has to
pay attention to the specific symbolic structure of the edge-degenerate calculus.  

In our calculations we have chosen condition (\ref{rightp}) because the Leibniz-Mellin product in the corresponding equation
\begin{equation}
 \sum_\alpha \frac{1}{\alpha !} \partial_\eta^\alpha T^{-2} a_0 \#_{r,w}
 D^\alpha_y a_0^{(-1)} =
 \sum_{\alpha} \sum_k \frac{1}{\alpha !} \frac{1}{k!} \partial_w^k \partial_\eta^\alpha 
 T^{-2} a_0 \, (-r\partial_r)^k D^\alpha_y a_0^{(-1)} \sim 1 
\label{LMright}
\end{equation}
is represented by a finite number of terms, since $a_0$ is a second order polynomial in $w$ and $\eta$. 
The symbol $a_0^{(-1)}$ of the parametrix has an asymptotic expansion, i.e., 
\begin{equation}
 a_0^{(-1)} \sim -2t^2 \big(q_{0,0} +rq_{0,1} +r^2 q_{0,2} + \cdots\big) ,
 \label{a0mansatz}
\end{equation}
where individual terms can be computed in a recursive manner. Inserting the ansatz (\ref{a0mansatz}) into
(\ref{LMright}) one gets at zeroth order
\begin{equation}
 r^0: \ \ T^{-2} a_0 q_{0,0} + \partial_w T^{-2} a_0 (-r \partial_r) q_{0,0}
 + \tfrac{1}{2} \partial^2_w T^{-2} a_0 (-r \partial_r)^2 q_{0,0} \sim 1 .
\label{eqr0}
\end{equation}

In order to obtain $q_{0,0}$ as a solution of (\ref{eqr0}) let us start with some formal considerations
motivated by the standard approach to the construction of a parametrix in the non-singular pseudo-differential calculus.
It should be mentioned, however, that each individual step requires a thorough justification in the singular edge calculus
which will be done in the following.
Since $q_{0,0}(r,y,w,\eta) \in C^{\infty}(\overline{\mathbb{R}}_+\times Y,L_{cl}^{-2}(X, \Gamma_\alpha \times \mathbb{R}^3))$ with $\Gamma_\alpha:=\{w\in\mathbb{C}: \mathrm{Re}\,w=\alpha\},$ 
we can obtain an asymptotic sum
\begin{equation}
 q_{0,0} \sim q'_{0,0} +q'_{0,1} + q'_{0,2} + \cdots 
\label{a0m1tilde}
\end{equation}
with $q'_{0,k}(r,y,w,\eta) = \tilde{q}'_{0,k}(r,y,w,r\eta)$ and $\tilde{q}'_{0,k}(r,y,w,\tilde{\eta}) \in 
C^\infty(\overline{\mathbb{R}}_+ \times Y, L_{cl}^{-2}(X, \Gamma_\alpha \times \mathbb{R}^3))$, cf.~\cite[Section 1.1]{FHS16} for further details and definitions. Let us take
\begin{equation}
 q'_{0,0} = -2 t^2 h^{-1} ,
\label{defqp0}
\end{equation}
where
\[
 h = (w-2)^2 -(w-2) - r^2C(\eta) + \Delta_{S^2}
\]
corresponds to a holomorphic operator valued symbol acting on the basis $X$.

\begin{remark}
For $\mathrm{(\ref{defqp0})}$ to make sense it requires $h^{-1} \in C^{\infty}(\overline{\mathbb{R}}_+\times
Y,L_{cl}^{-2}(X, \Gamma_\alpha \times \mathbb{R}^3))$ which follows from the
spectral invariance of $h$ as a parameter dependent uniformly elliptic differential operator on the base $X$.
\end{remark}

Successively solving 
\[
 \sum_{m+n=k} \tfrac{1}{m!} \partial^m_w T^{-2} a_0 (-r\partial_r)^m q'_{0,n} =0 \ \ 
 \mbox{for} \ k=1,2,\ldots
\]
yields the recurrence relation
\begin{equation}
 q'_{0,k} = - q'_{0,0} \sum_{\substack{m+n=k\\n<k}}
 \tfrac{1}{m!} \partial^m_w T^{-2} a_0 (-r\partial_r)^m q'_{0,n} ;
\label{qktilde}
\end{equation}
here the sum restricts to $m \leq 2$ because of (\ref{a0}).
In order to get an asymptotic sum with respect to powers in $r \eta$, it is necessary to rearange the  
individual terms. This can be done in two separate steps. Let us first reorder the sum in powers of $r^2 C$.
A simple calculation yields
\begin{eqnarray}
\nonumber
 q_{0,0} & \sim & -2t^2 h^{-1} -2t^2 \bigl( (2z-4) h^{-3} + (4z^2 -16z+16) h^{-4} \\ \nonumber
 & & + (8z^3-48z^2+96z-64) h^{-5} + \cdots \bigr) r^2 C
 + O\bigl((r^2 C)^2 \bigr) \\ \nonumber
 & \sim & -2t^2 h^{-1} -2t^2 \biggl( \sum_{n=1}^\infty (2z-4)^n h^{-n} \biggr) h^{-2} r^2 C 
 + O\bigl((r^2 C)^2 \bigr) \\ \label{a0hr2C}
 & \sim & -2t^2 h^{-1} -2t^2 (2z-4) \bigl( h -(2z-4) \bigr)^{-1} h^{-2} r^2 C
 + O\bigl((r^2 C)^2 \bigr)
\end{eqnarray}
with $z=2w-5$. The geometric sum is tentatively considered in a purely formal sense.
As provisional justification, we observe that the reordered asymptotic expansion (\ref{a0hr2C}) 
satisfies (\ref{eqr0}) up to terms of $O((r^2 C)^2)$. 

The edge-degenerate pseudo-differential calculus imposes stringent conditions on the symbol $a_0^{(-1)}$ of the
parametrix. In order to demonstrate that individual terms in our formal sum (\ref{a0hr2C}) are compatible
with the edge-degenerate calculus let us first consider the leading order term $h^{-1}$ which represents a 
meromorphic operator valued symbol acting on the basis $X$.
Because of $r^2 C$ in the denominator, its poles depend on the values of the covariables of the edge and therefore
it cannot be directly identified with a symbol in the edge calculus. In order to see how this symbol actually fits 
into the calculus it is convenient to consider a Lippmann-Schwinger expansion  
\begin{align*}
 h^{-1} & = h_0^{-1} + h^{-1} [h_0 -h] h_0^{-1} \\
 h^{-1} & = h_0^{-1} + h_0^{-1} [h_0 -h] h_0^{-1} + h^{-1} [h_0 -h] h_0^{-1} [h_0 -h] h_0^{-1} \\
 \vdots &
\end{align*}
with
\[
 h_0 = (w-2)^2 - (w-2) +\Delta_{S^2} .
\]
For arbitrary $N \in \mathbb{N}$ this becomes
\begin{equation}
 h^{-1} = \underbrace{\sum_{n=0}^{N-1} \bigl( r^2C \bigr)^n h_0^{-1-n}}_{=: h^{-1}_{\LS}} + \underbrace{\bigl( r^2C \bigr)^N h^{-1} h_0^{-N}}_{=: h^{-1}_{\gflat}} .
\label{LippSch}
\end{equation}
Applying the kernel cut-off $H(\phi)$, with $\phi \in C^\infty_0(\mathbb{R}_+)$ and $\phi=1$ in a neighbourhood of
$r=1$, to $h^{-1} \in C^{\infty}(\overline{\mathbb{R}}_+\times Y,L_{cl}^{-2}(X, 
\Gamma_\alpha \times \mathbb{R}^3))$ one gets
\begin{equation}\label{h-1}
 h^{-1} = H(\phi) h^{-1} + \bigl( 1- H(\phi) \bigr) h^{-1} ,
\end{equation}
where
\[
 H(\phi) h^{-1}(r,y,w,\eta)= \widetilde{H(\phi) h^{-1}}(y,w,r\eta) 
\]
with $\widetilde{H(\phi) h^{-1}} \in C^\infty(\overline{\mathbb{R}}_+\times Y,M^{-2}_{\cal O}(X,\mathbb{R}^3))$, cf.~\cite[Section 1.1]{FHS16} for further details and definitions.
Similarly, one can apply the kernel cut-off to the two terms
in the decomposition $h^{-1} = h^{-1}_{\LS} + h^{-1}_{\gflat}$, cf.~(\ref{LippSch}), separately.
The operators corresponding to the holomorphic Mellin symbol $H(\phi) h^{-1}$, $H(\phi) h^{-1}_{\LS}$ and $H(\phi) h^{-1}_{\gflat}$ map
\[
 \omega_\eta r^2 \opm_M^{\gamma-3}
 \bigl( H(\phi) h^{-1}_{(\LS)} \bigr) \omega'_\eta :
 {\cal K}^{s,\gamma-2} \longrightarrow {\cal K}^{s+2,\gamma} ,
\]
\begin{equation}
 \omega_\eta r^2 \opm_M^{\gamma-3}
 \bigl( H(\phi) h^{-1}_{\gflat} \bigr) \omega'_\eta :
 {\cal K}^{s,\gamma-2} \longrightarrow {\cal K}^{s+2+N,\gamma+N} ,
 \label{HLM} 
\end{equation}
respectively.

Inserting the Lippmann-Schwinger expansion (\ref{LippSch}), the operator corresponding to the second term of (\ref{h-1}) becomes
\[
 \omega_\eta r^2 \opm_M^{\gamma-3} \bigl(
 \bigl( 1- H(\phi) \bigr) h^{-1} \bigr) \omega'_\eta = \sum_{n=0}^{N-1} m_n + g_N ,
\]
with
\[
 m_n(y,\eta) := \omega_\eta r^2 \bigl( r^2C \bigr)^n \opm_M^{\gamma-3}
 \bigl( \bigl( 1- H(\phi) \bigr) h_0^{-1-n} \bigr) \omega'_\eta 
\]
and
\[
 g_N(y,\eta) := \omega_\eta r^2 \bigl( r^2C \bigr)^N \opm_M^{\gamma-3}
 \bigl( \bigl( 1- H(\phi) \bigr) h^{-1} h_0^{-N} \bigr) \omega'_\eta .
\]

\begin{proposition}\label{prop2}
The parameter dependent operators $m_n$, $n=0,\ldots,N-1$, represent 
smoothing Mellin operators 
\[
 m_n(y,\eta) : {\cal K}^{s,\gamma-2} \longrightarrow {\cal K}^{\infty,\gamma+2n} ,
\]
with $\bigl( 1- H(\phi) \bigr) h_0^{-1-n} \in M^{-\infty}_R(X)$.
For $n$ sufficiently large $m_n$ and the remainder $g_N$ belong to $R_G^{-2}(Y \times \mathbb{R}^3,{\boldsymbol g})$ 
and do not contribute to the asymptotics.
\end{proposition}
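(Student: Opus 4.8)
The plan is to trace everything back to one structural fact — that $(1-H(\phi))h_0^{-1-n}$ is a \emph{meromorphic} Mellin symbol with values in $L^{-\infty}(X)$, of a fixed asymptotic type $R$ dictated by the spectrum of $\Delta_{S^2}$ — and then to read off the mapping properties of $m_n$ and the Green/negligibility claims from the standard cone-calculus facts for smoothing Mellin operators together with the elementary weight shift supplied by $r^2(r^2C)^n=r^{2+2n}C(\eta)^n$. So the first step is to note that $h_0=(w-2)^2-(w-2)+\Delta_{S^2}$ is a \emph{holomorphic} family in $w\in\mathbb C$ of parameter-elliptic second-order differential operators on $X=S^2$ (parameter-elliptic also in the joint covariable $(\Im w,\eta)$), invertible on every vertical line $\Gamma_\alpha=\{\Re w=\alpha\}$ with $\alpha\notin\mathbb Z$; spectral invariance (the analogue, for $h_0$, of the statement made in the excerpt for $h$) then yields $h_0^{-1-n}\in C^\infty(\overline{\mathbb R}_+\times Y,L^{-2-2n}_{\mathrm{cl}}(X,\Gamma_\alpha\times\mathbb R^3))$, and the spectral decomposition of $\Delta_{S^2}$ shows that $h_0^{-1-n}$ continues meromorphically in $w$ with poles exactly at the integers, of order at most $1+n$, whose Laurent coefficients are finite-rank operators with smooth kernels (finite sums of projectors onto spherical harmonics).

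Next I would apply the kernel cut-off $H(\phi)$, with $\phi\equiv1$ near $r=1$, to this parameter-dependent classical family, exactly as is done for $h^{-1}$ in \eqref{h-1}: this produces a holomorphic representative $H(\phi)h_0^{-1-n}\in C^\infty(\overline{\mathbb R}_+\times Y,M^{-2-2n}_{\cal O}(X,\mathbb R^3))$ with the same complete Mellin symbol modulo order $-\infty$. Hence $(1-H(\phi))h_0^{-1-n}=h_0^{-1-n}-H(\phi)h_0^{-1-n}$ is, on every admissible line, a Schwartz function of $\Im w$ with values in $L^{-\infty}(X)$ — the general smoothing property of $1-H(\phi)$ on parameter-dependent classical families, cf.~\cite{Schulze98} — and it inherits the meromorphic continuation, the poles at $\mathbb Z$ and the finite-rank smooth Laurent coefficients of $h_0^{-1-n}$, since the subtracted piece $H(\phi)h_0^{-1-n}$ is entire. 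This is precisely membership in $M^{-\infty}_R(X)$ with $R$ fixed by $\operatorname{spec}\Delta_{S^2}$, which is the second assertion of the proposition. With this in hand the first assertion is immediate: $\opm_M^{\gamma-3}\big((1-H(\phi))h_0^{-1-n}\big)$ is, by definition, a smoothing Mellin operator (for $\gamma\notin\mathbb Z+\tfrac12$ the weight line of $\opm_M^{\gamma-3}$ avoids the poles of $R$, cf.~Lemma~\ref{lem1}), so, arguing as for \eqref{HLM}, $\omega_\eta r^2\opm_M^{\gamma-3}\big((1-H(\phi))h_0^{-1-n}\big)\omega'_\eta$ maps ${\cal K}^{s,\gamma-2}(X^\wedge)$ into ${\cal K}^{\infty,\gamma}(X^\wedge)$ — the infinite smoothing coming exactly from the $L^{-\infty}(X)$-valuedness of the symbol — and the remaining factor $(r^2C)^n=r^{2n}C(\eta)^n$ raises the weight by a further $2n$ (the polynomial $C(\eta)^n$ is irrelevant for the ${\cal K}$-mapping at fixed $\eta$), so that $m_n:{\cal K}^{s,\gamma-2}\to{\cal K}^{\infty,\gamma+2n}$; the same bookkeeping shows $m_n\in R^{-2}_{M+G}(Y\times\mathbb R^3,\boldsymbol g)$.

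For the third assertion, write $\Theta=(\vartheta,0]$ with $\vartheta=-(k+1)$: as soon as $2n\ge -\vartheta$ one has ${\cal K}^{\infty,\gamma+2n}\subset\bigcap_{\varepsilon>0}{\cal K}^{\infty,\gamma-\vartheta-\varepsilon}={\cal K}^{\infty,\gamma}_\Theta$, i.e.\ $m_n$ maps into the \emph{flat} part of ${\cal K}^{\cdot,\gamma}_{(P)}$; since $m_n$ and its formal adjoint both have the smoothing Mellin structure, mapping into the flat part puts $m_n$ into $R^{-2}_G(Y\times\mathbb R^3,\boldsymbol g)$ with trivial asymptotic type, so that composition with it adds no asymptotic terms within the strip fixed by $\boldsymbol g$. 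For the remainder $g_N$ one repeats the first two steps with $h^{-1}h_0^{-N}$ in place of $h_0^{-1-n}$: this family is parameter-elliptic of order $-2-2N$ uniformly for $r^2C$ in the bounded range enforced by $\omega_\eta$, so $(1-H(\phi))h^{-1}h_0^{-N}$ is again $L^{-\infty}(X)$-valued and smoothing in $X$, and $g_N$ maps — in the manner of \eqref{HLM}, but now with infinite smoothing — into a space flat of order at least $N$; hence $g_N\in R^{-2}_G(Y\times\mathbb R^3,\boldsymbol g)$ and is asymptotically negligible once $N$ is large.

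I expect the real work to be concentrated in the step $(1-H(\phi))h_0^{-1-n}\in M^{-\infty}_R(X)$: the $L^{-\infty}(X)$-valuedness relies on the parameter-ellipticity of $h_0$ (so that the diagonal singularity of $h_0^{-1-n}$ is carried along $r=1$ and is removed by $1-H(\phi)$), while the clean identification of a \emph{single} asymptotic type $R$ uses the fact that $h_0$, unlike $h$, is independent of the edge covariables; for $g_N$ this latter feature fails — the poles of $h^{-1}$ move with the covariables — which is precisely why $g_N$ has to be handled as a flat remainder and not as a smoothing Mellin term in its own right.
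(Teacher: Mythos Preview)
Your treatment of the $m_n$ is sound and in fact more explicit than the paper's, which simply invokes \cite[Lemma~3.3.22]{Schulze98} for the mapping property; your identification of $(1-H(\phi))h_0^{-1-n}\in M^{-\infty}_R(X)$ via the kernel cut-off mechanism and the spectral decomposition of $\Delta_{S^2}$ is exactly the content behind that citation, and the flatness argument for large $n$ is correct.

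For $g_N$, however, your route diverges from the paper's and leaves a gap. You argue that $(1-H(\phi))h^{-1}h_0^{-N}$ is $L^{-\infty}(X)$-valued and that $g_N$ therefore maps into a flat Kegel space at fixed $(y,\eta)$, and then conclude $g_N\in R^{-2}_G$. But membership in $R^{-2}_G$ requires that $g_N$ be a \emph{classical operator-valued symbol of order $-2$}, i.e.\ one needs the estimates
\[
\bigl\|\kappa^{-1}_{[\eta]}D^\alpha_y D^\beta_\eta g_N(y,\eta)\kappa_{[\eta]}\bigr\|_{{\cal L}({\cal K}^{s,\gamma-2},{\cal K}^{\infty,\gamma+2N})}\le C[\eta]^{-2-|\beta|},
\]
together with the analogous estimates for the adjoint. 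Your argument addresses the target space but not these $\eta$-bounds; the remark that $r^2C$ stays bounded on $\supp\omega_\eta$ is relevant but does not by itself produce them. Moreover, as you yourself note at the end, $(1-H(\phi))h^{-1}h_0^{-N}$ has $\eta$-dependent poles, so it is \emph{not} in $M^{-\infty}_R(X)$ for any fixed $R$, and the smoothing-Mellin machinery you used for $m_n$ does not apply directly; the ``repeat the first two steps'' is therefore not quite licit.

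The paper instead handles $g_N$ by a direct computation: it verifies twisted homogeneity $\kappa_\lambda g_N(y,\eta)\kappa_\lambda^{-1}=\lambda^2 g_N(y,\lambda\eta)$ explicitly, uses the fact that for $\tfrac12<\gamma<\tfrac32$ the integration contour $\Gamma_{7/2-\gamma}$ stays inside the pole-free strip $\{2<\Re w<3\}$ of $h^{-1}$ (this replaces your missing fixed asymptotic type), obtains the pointwise mapping $g_N(y,\eta):{\cal K}^{s,\gamma-2}\to{\cal S}^{\gamma+2N}$ from \cite[Section~7.2.3, Theorem~9]{ES97}, and then reads off the required $[\eta]^{-2-|\beta|}$ decay from the twisted homogeneity. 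This is a genuinely different mechanism from yours; to close your argument you would need either to supply the symbol estimates directly (which essentially reproduces the paper's calculation) or to invoke a general result placing such $\eta$-dependent smoothing remainders in $R^{-2}_G$.
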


\begin{proof}
The mapping property for $m_n$ follows from \cite[Lemma 3.3.22]{Schulze98}.

Twisted homogeneity for $g_N$ can be easily shown, i.e.,
\begin{eqnarray*}
\kappa_\lambda g_N(y,\eta) \kappa^{-1}_\lambda u & = & \kappa_\lambda \omega_\eta r^2 \bigl( r^2C \bigr)^N \opm_M^{\gamma-3}
 \bigl( \bigl( 1- H(\phi) \bigr) h^{-1} h_0^{-N} \bigr) \omega'_\eta \kappa^{-1}_\lambda u \\
 & = & \omega_{\lambda \eta} (\lambda r)^2 \bigl( \lambda^2 r^2C \bigr)^N \int_\mathbb{R} \int_0^\infty
 \biggl( \frac{\lambda r}{r'} \biggr)^{-(\frac{7}{2} -\gamma +i\rho)} \biggl[ \bigl( 1- H(\phi) \bigr) h^{-1} h_0^{-N} \biggr]
 (\frac{7}{2} -\gamma +i\rho,\lambda \eta) \\ & & \times \omega'_\eta u( \lambda^{-1} r')\, \dbar \rho \frac{dr'}{r'} \\
 & = & \omega_{\lambda \eta} (\lambda r)^2 \bigl( \lambda^2 r^2C \bigr)^N \int_\mathbb{R} \int_0^\infty
 \biggl( \frac{r}{\tilde{r}} \biggr)^{-(\frac{7}{2} -\gamma +i\rho)} \biggl[ \bigl( 1- H(\phi) \bigr) h^{-1} h_0^{-N} \biggr]
 (\frac{7}{2} -\gamma +i\rho,\lambda \eta) \\
 & & \times \omega'_{\lambda \eta} u(\tilde{r}) \, \dbar \rho \frac{d\tilde{r}}{\tilde{r}} \\
 & = & \lambda^2 g_N(y,\lambda \eta) u .
\end{eqnarray*}
The poles $w_0$ of $h^{-1}(y,\eta)$ are all located outside a strip $\{ w: 2 < |w| < 3\}$ in the complex plane.
Therefore, the integration contour $\Gamma_{\frac{7}{2} -\gamma}$  for $\frac{1}{2} < \gamma < \frac{3}{2}$
is separated from the poles, and for fixed $y,\eta$ it follows from \cite[Section 7.2.3, Theorem 9]{ES97} that
\[
 g_N(y,\eta) : {\cal K}^{s,\gamma-2} \longrightarrow {\cal S}^{\gamma+2N} .
\]
With respect to the norm of ${\cal K}^{s,\gamma}$ spaces one gets
\begin{multline*}
 \bigl\| \kappa^{-1}_{[\eta]} \omega_\eta r^2 \bigl( r^2C(\eta) \bigr)^N \opm_M^{\gamma-3}
 \bigl( \bigl( 1- H(\phi) \bigr) h^{-1} h_0^{-N} \bigr)(y,\eta) \omega'_\eta \kappa_{[\eta]}
 \bigr\|_{{\cal L}({\cal K}^{s,\gamma-2},{\cal K}^{\infty,\gamma+2N})} \\
 = \bigl\| \omega_{[\eta]^{-1} \eta} r^2 \bigl( r^2C( [\eta]^{-1} \eta) \bigr)^N \opm_M^{\gamma-3}
 \bigl( \bigl( 1- H(\phi) \bigr) h^{-1} h_0^{-N} \bigr)(y, [\eta]^{-1} \eta) \omega'_{[\eta]^{-1} \eta}
 \bigr\|_{{\cal L}({\cal K}^{s,\gamma-2},{\cal K}^{\infty,\gamma+2N})}
 [\eta]^{-2} ,
\end{multline*}
and for derivatives of the symbol
\begin{multline*}
 \bigl\| \kappa^{-1}_{[\eta]} D^\alpha_y D^\beta_\eta \omega_\eta r^2 \bigl( r^2C(\eta) \bigr)^N \opm_M^{\gamma-3}
 \bigl( \bigl( 1- H(\phi) \bigr) h^{-1} h_0^{-N} \bigr)(y,\eta) \omega'_\eta \kappa_{[\eta]}
 \bigr\|_{{\cal L}({\cal K}^{s,\gamma-2},{\cal K}^{\infty,\gamma+2N})} \\
 \leq C [\eta]^{-2-|\beta|} 
\end{multline*}
for $y \in K \subset U \subset Y$, $\eta \in \mathbb{R}^3$. Therefore, $g_N$ belongs to 
$S^{-2}_{cl}(U \times \mathbb{R}^3; {\cal K}^{s,\gamma-2},{\cal K}^{\infty,\gamma+2N})$ 
and $g_N \in R_G^{-2}(U \times \mathbb{R}^3, {\boldsymbol g})$ follows according to \cite[Definition 3.3.1]{Schulze98} .
\end{proof}

Individual terms of our formal expansions (\ref{a0hr2C}) can be brought into the specific operator formats using 
kernel cut-offs and Lippmann-Schwinger expansions.  It is straightforward to extend our previous discussion for $h^{-1}$ to more general higher order terms.
In summary, we may say that our formal expansion (\ref{a0hr2C}) fulfills the requirements of the
edge-degenerate calculus. 

As a matter of principle we can proceed by calculating higher order terms of $q_0$ in the asymptotic expansion   
(\ref{a0hr2C}) and use an analogous ansatz for the higher order terms $q_i$, $i=1,2,\ldots$, 
of the asymptotic expansion (\ref{a0mansatz}) of $a_0^{(-1)}$. To this end let us reformulate (\ref{a0mansatz}) by using the Lippmann-Schwinger expansion in the following form
\begin{eqnarray*}
 a_0^{(-1)} & \sim & -2t^2 \sum_{n \geq 0} r^n q_{0,n} \\
 & = & -2t^2 \sum_{n \geq 0} r^n \biggl( H(\phi) q_{0,n} + \bigl( 1- H(\phi) \bigr) q_{0,n} \biggr) \\
 & = & -2t^2 \sum_{n \geq 0} r^n \biggl( H(\phi) [q_{0,n} ]_{\LS} +
 H(\phi) [q_{0,n} ]_{\gflat} + \bigl( 1- H(\phi) \bigr) [q_{0,n} ]_{\LS} 
 + \bigl( 1- H(\phi) \bigr) [q_{0,n} ]_{\gflat} \biggr) ,
\end{eqnarray*}
where $ [q_{0,n} ]_{\LS}$ and $[q_{0,n} ]_{\gflat}$ for $q_{0,n}$ have the same meaning as the first and second term of the decomposition (\ref{LippSch}) for $h^{-1}$, respectively.
In the third line we have, beside holomorphic symbols, smoothing Mellin symbols
$\bigl( 1- H(\phi) \bigr) [q_{0,n} ]_{\LS}$ and flat Green remainders $\bigl( 1- H(\phi) \bigr) [q_{0,n} ]_{\gflat}$. 
The latter can be neglected in our considerations according to Proposition \ref{prop2}.
Furthermore, it turns out that holomorphic symbols $H(\phi) [q_{0,n} ]_{\LS}$ and $H(\phi) [q_{0,n} ]_{\gflat}$ do not contribute to the Green operators $g_{l,1}$ and $g_{l,2}$, i.e.,
only the $\bigl( 1- H(\phi) \bigr) [q_{0,n} ]_{\LS}$ term actually contributes. Furthermore, according to (\ref{HLM}), the operator
corresponding to the holomorphic symbol $H(\phi) [q_{0,n} ]_{\gflat}$
does not even contribute to the asymptotic parametrix.
Therefore, in the following we consider in the asymptotic expansion of $a_0^{(-1)}$ only the terms $[q_{0,n}]_{\LS}.$ For the latter we take the asymptotic ansatz
\begin{equation}
[q_{0,n} ]_{\LS} \sim -2t^2 \sum_{j \geq 0} d_{n,j}^{(0)}(w,r\eta) ,
\label{qnLS}
\end{equation}
where $d_{n,j}$ denotes a homogeneous polynomial of order $j$ in the degenerate covariables $r\eta$, i.e.,
$d_{n,j}^{(0)}(\lambda r \eta) = \lambda^j d_{n,j}^{(0)}(r \eta)$ with coefficients depending on the operator valued symbol $h_0$.

Summarising our previous discussion, we have derived for the symbol $a_0^{(-1)}$ of the initial parametrix $P_0$ the asymptotic
expansion 
\begin{eqnarray}
\label{a0asymp}
 a_0^{(-1)} & \sim & -2t^2 \bigg( [q_{0,0} ]_{\LS}  +r [q_{0,1} ]_{\LS} +r^2 [q_{0,2} ]_{\LS} + \cdots\bigg) \\ \nonumber 
 & \sim & -2t^2 \sum_{n \geq 0} \sum_{j \geq 0} r^n d_{n,j}^{(0)}(w,r\eta) .
\end{eqnarray} 
For the sake of notational simplicity,
we skip the subscript LS in the following, notwithstanding
that all asymptotic expansions of the type (\ref{a0asymp}), here and in the following,
refer to the $[ \cdot ]_{\LS}$ part of the Lippmann-Schwinger decomposition. In order to show, that the asymptotic
expansion (\ref{a0asymp}) is actually consistent 
with the symbolic structure of asymptotic parametrices in the edge degenerate calculus, cf.~Section \ref{oooa}, let us first reveal
its relation to the conormal symbols of the parametrix.
 
\begin{proposition} 
The asymptotic expansion {\em (\ref{a0asymp})} 
is equivalent modulo flat remainders to an asymptotic expansion of $a_0^{(-1)}$ via the conormal symbols of $p_{M,0}$, cf.~$\mathrm{(\ref{p0})}$, 
\begin{equation}
 a_0^{(-1)} \sim \sum_{j \geq 0} r^j \sigma^{-2-j}_M(p_{M,0}) (y,\eta) 
 \label{a0conorm}
\end{equation} 
with 
\[
 \sigma_M^{-2-j}(p_{M,0}) (y,\eta) = \frac{1}{j!} \frac{\partial^j}{\partial r^j} a_{0,h}^{(-1)}|_{r=0}
 + \sigma_M^{-2-j}(m_0) (y,\eta),
\]
where $a_{0,h}^{(-1)} \in M^{-2}_{\cal O}(X,\mathbb{R}^3)$ denotes the holomorphic Mellin symbol
and $m_0 \in R^{-2}_{M+G}(Y \times \mathbb{R}^3, {\boldsymbol g})$ refers to the smoothing Mellin operator part
of $p_0$.
\label{propcon}
\end{proposition}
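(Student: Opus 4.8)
The plan is to derive (\ref{a0conorm}) by reorganising the asymptotic expansion (\ref{a0asymp}) into a genuine power series in the cone variable $r$ and then identifying its coefficients with the conormal symbols. Recall from the preceding discussion that, modulo flat Green remainders, $a_0^{(-1)}$ splits as $a_0^{(-1)} = a_{0,h}^{(-1)} + m_0$, where $a_{0,h}^{(-1)} = H(\phi)\,a_0^{(-1)} \in C^\infty(\overline{\mathbb{R}}_+\times Y, M^{-2}_{\cal O}(X,\mathbb{R}^3))$ collects the holomorphic Mellin contributions $H(\phi)[q_{0,n}]_{\LS}$ and $m_0 \in R^{-2}_{M+G}(Y\times\mathbb{R}^3,\boldsymbol{g})$ collects the smoothing Mellin contributions $(1-H(\phi))[q_{0,n}]_{\LS}$; the remaining pieces $H(\phi)[q_{0,n}]_{\gflat}$ and $(1-H(\phi))[q_{0,n}]_{\gflat}$ improve the weight by $N$ by (\ref{HLM}) and Proposition \ref{prop2}, hence are flat and may be discarded --- this is exactly the ``modulo flat remainders'' in the statement. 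Writing $a_{0,h}^{(-1)}(r,y,w,\eta) = \tilde a_{0,h}^{(-1)}(r,y,w,r\eta)$ with $\tilde a_{0,h}^{(-1)}$ smooth up to $r=0$, the crucial step is to Taylor-expand at $r=0$, substitute $\tilde\eta = r\eta$, decompose each Taylor coefficient into its homogeneous components in $\tilde\eta$, and reorder the resulting double series by the total power of $r$. A chain-rule computation then shows that this yields $\sum_{j\geq 0} r^j \bigl(\tfrac{1}{j!}\partial_r^j a_{0,h}^{(-1)}|_{r=0}\bigr)$, where $\partial_r^j a_{0,h}^{(-1)}|_{r=0}$ is the $j$-th $r$-derivative of the composed map $r\mapsto \tilde a_{0,h}^{(-1)}(r,y,w,r\eta)$ at $r=0$, a polynomial of degree $j$ in $\eta$ --- precisely the holomorphic part of $\sigma_M^{-2-j}(p_{M,0})$ in the sense of the conormal-symbol formula recalled in Section~\ref{pseudoedge}.

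Next I would match this reorganised series with the right-hand side of (\ref{a0asymp}). Since $d_{n,j}^{(0)}$ is homogeneous of degree $j$ in its covariables, $d_{n,j}^{(0)}(w,r\eta) = r^j d_{n,j}^{(0)}(w,\eta)$, so the term $r^n[q_{0,n}]_{\LS}$ contributes $-2t^2 r^{n+j} d_{n,j}^{(0)}(w,\eta)$ at total order $n+j$ in $r$. Collecting powers of $r$, the coefficient of $r^N$ in $a_0^{(-1)}$ equals $-2t^2\sum_{n+j=N} d_{n,j}^{(0)}(w,\eta)$ up to the smoothing Mellin contribution, and, keeping the $H(\phi)$ and $1-H(\phi)$ parts separate, this coefficient is identified with $\sigma_M^{-2-N}(p_{M,0})(y,\eta) = \tfrac{1}{N!}\partial_r^N a_{0,h}^{(-1)}|_{r=0} + \sigma_M^{-2-N}(m_0)(y,\eta)$, which is (\ref{a0conorm}). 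The smoothing Mellin conormal symbols $\sigma_M^{-2-j}(m_0)$ are then read off as the $(1-H(\phi))$-parts of the same reorganised series; by Proposition \ref{prop2} they lie in $M^{-\infty}_R(X)$ for the appropriate Mellin asymptotic type $R$ and vanish for $j$ large.

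The main obstacle I expect is the bookkeeping of the two-scale structure: cleanly separating the explicit $r$-dependence of $\tilde a_{0,h}^{(-1)}$ from the $r\eta$-dependence carried by the degenerate edge covariables, and justifying the rearrangement of the resulting double series within the Fr\'echet topology of $C^\infty(\overline{\mathbb{R}}_+\times Y, M^{-2}_{\cal O}(X,\mathbb{R}^3))$, so that ``$\sim$'' is understood as asymptotic summation of symbols. A secondary point needing care is that the kernel cut-off $H(\phi)$ and the truncation of the Lippmann--Schwinger expansion (\ref{LippSch}) do not disturb the conormal symbols: $1-H(\phi)$ produces genuinely $X$-smoothing Mellin symbols whose conormal symbols are exactly the meromorphic coefficients absorbed into $m_0$, whereas the $[\,\cdot\,]_{\gflat}$ remainders are $O((r^2C)^N)$ and thus vanish to order $2N$ in $r$, so for $N$ exceeding the conormal order under consideration they contribute nothing --- precisely the flexibility encoded in the ``modulo flat remainders'' clause.
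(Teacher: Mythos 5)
Your argument is correct and uses essentially the same idea as the paper's proof --- both exploit the homogeneity of the degenerate covariables (the paper via decomposing each conormal symbol into its homogeneous components $\sigma_M^{-2-j}(p_{M,0}) = \sum_{k=0}^j Q_{k,j}^{(0)}(\eta)$ and rewriting $r^j Q_{k,j}^{(0)}(\eta) = r^{j-k}Q_{k,j}^{(0)}(r\eta)$, you via $d_{n,j}^{(0)}(w,r\eta) = r^j d_{n,j}^{(0)}(w,\eta)$) to reorganize the double asymptotic sum by total power of $r$ and match coefficients, arriving at the identification $Q_{k,m+k}^{(0)}(r\eta)=-2t^2 d^{(0)}_{m,k}(r\eta)$. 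The only differences are of direction and presentation: the paper passes from (\ref{a0conorm}) to (\ref{a0asymp}) whereas you argue the reverse, and your first paragraph restates the Lippmann--Schwinger/kernel-cut-off bookkeeping that the paper established in the text immediately preceding the Proposition.
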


\begin{proof}
Conormal symbols are polynomials in the edge covariable $\eta$. In order to represent $d_{n,j}^{(0)}(w,r\eta)$
by conormal symbols one can decompose the latter into their homogeneous components, i.e.,
\[
 \sigma_M^{-2-j}(p_{M,0}) (y,\eta) = \sum_{k=0}^j Q_{k,j}^{(0)}(\eta) \ \ \mbox{with} \ Q_{k,j}^{(0)}(\lambda \eta) = \lambda^k Q_{k,j}^{(0)}(\eta) .
\]
The asymptotic expansion (\ref{a0conorm}) becomes
\begin{eqnarray*}
 a_0^{(-1)} & \sim & \sum_{j \geq 0} r^j \sum_{k=0}^j Q_{k,j}^{(0)}(\eta) \\
 & \sim & \sum_{k \geq 0} \sum_{j \geq k} r^{j-k} Q_{k,j}^{(0)}(r\eta) \\
 & \sim & \sum_{m \geq 0} r^m \sum_{k \geq 0} Q_{k,m+k}^{(0)}(r\eta) ,
\end{eqnarray*}
and a comparison with (\ref{a0asymp}) shows $Q_{k,m+k}^{(0)}(r\eta) =-2t^2 d^{(0)}_{m,k}(r\eta)$.
\end{proof} 
 
\begin{remark} 
The asymptotic expansion $\mathrm{(\ref{a0asymp})}$ does not depend on the details of the parametrix construction.
In particular it does not matter whether it refers to a left or right parametrix. This is an immediate 
consequence of Proposition $\mathrm{\ref{propcon}}$ and the fact that two edge-degenerate pseudo-differential
operators which agree modulo Green and flat operators have the same conormal symbols.
\end{remark} 

It is obvious that the same type of asymptotic
expansion can be as well applied in the subsequent
steps of the asymptotic parametrix construction. 

The actual calculation of individual terms and corresponding cotributions to Green operators, is rather involved.
Therefore, in the following section, we first want to present our 
main result and some of its immediate consequences for the sake of the
reader who is not interested in the details of the calculations.    
The proof and all technical details are given in Section \ref{4}.

\section{Asymptotics of the Green operator}
\label{3}
\subsection{Main result}
\label{3.1}
Before we present our main theorem, we give a few remarks.
\begin{itemize}
\item[i)] In Section \ref{coap}, we have derived an asymptotic parametrix modulo Green operators in $L^{0}_G(M,\boldsymbol{g})$.
This actually represents the penultimate step in the asymptotic parametrix construction discussed in Ref.~\cite{FHS16}, cf.~Corollary 2.24 and Theorem 2.26 therein. For our purposes
it is sufficient to stop at this point because it already provides us with the desired insight into the asymptotic behaviour of solutions to Schr\"{o}dinger's equation near edge type singularities.
\item[ii)] According to our discussion in Section \ref{2.2}, we have applied the $\varepsilon$-regularization technique, cf.~Appendix \ref{epsreg}, for the construction of the parametrix and Green operator. This is justified by the fact that we apply
the Green operator to an eigenfunction of the Hamiltonian $u$
which according to standard regularity theory, cf.~\cite{Schulze98},
belongs to ${\cal W}^\infty_{\mathrm{loc}} \bigl(Y_i,{\cal K}^{\infty,\gamma}\big((S^2)^{\wedge}\big) \bigr)$. Therefore,
$u$ is $\varepsilon$-regularizable, cf.~Lemma \ref{lemmaeps} below.

\item[iii)] It should be mentioned that the spectral resolution
of the Laplace-Beltrami operator acting on the base of the cone
$S^{2}$ is a pleasant feature of our approach which relates
the relative angular momentum of a pair of particles to its
radial correlation.
\item[iv)] The subsequent formula for the asymptopic expansion of the symbol of the Green operator seems to be rather complicated. 
In particular, terms depending on edge variables and covariables
are rather involved.
It is therefore desirable to provide an independent check of its
correctness. Such a check has been performed in Appendix \ref{noninteracting},
where we apply the corresponding Green operator to eigenfunctions
of a Hamiltonian without $e-e$ interaction potential. These
eigenfunctions are explicitly known for different angular momenta 
and can be used to verify the essential relation $Gu=-u$, cf.~(\ref{PAedge}), via explicit calculations.
\end{itemize}

\begin{theorem}\label{t1}
 The Green operator $g\in R_G^0 (Y_i\times\mathbb{R}^3,{\boldsymbol g})\, (i=1,2,3)$ for $\frac{1}{2}\leq\gamma\leq \frac{3}{2}$ has a leading order asymptotic expansion of the form
\begin{eqnarray}
 g\hat{u}(y,\eta)&=&\sigma^\prime 2t^2\left[\left(1+rtZ_1+r^2\left(-2+\tfrac{1}{3}(tZ_1)^2+\tfrac{1}{3}tZ_2\right)\right){\cal P}_0{\cal Q}_{0,1}(\hat{u})(y,\eta)\right.\nonumber\\
 &&+\tfrac{1}{6}r^2 {\cal P}_0{\cal Q}_{0,2}(\hat{u})(y,\eta)+\left(\tfrac{1}{3}r+\tfrac{1}{6}tZ_1 r^2\right){\cal P}_1{\cal Q}_{1,1}(\hat{u})(y,\eta)\nonumber\\
 &&+\left.\tfrac{1}{5}r^2{\cal P}_2{\cal Q}_{2,1}(\hat{u})(y,\eta)-\tfrac{1}{30}r^2 {\cal P}_2{\cal Q}_{2,2}({\hat u})(y,\eta)\right]+{\cal O}(r^3)\label{Gasymp}
\end{eqnarray}
with 
\[
Z_1:=\left\{\begin{array}{ll}
             \frac{1}{\sqrt{2}} & \mbox{for} \ v_{e-e}\\
             -Z & \mbox{for} \ v_{e-n}
            \end{array}\right., 
            \,\,\,\,\,
Z_2:=-tE+\left\{\begin{array}{ll}
             -2\sqrt{2}Z & \mbox{for} \ v_{e-e}\\
             1-Z & \mbox{for} \ v_{e-n}
            \end{array}\right.,         
\]
where $\varphi_i u\in {\cal W}^\infty_{\mathrm{comp}} \bigl(Y_i,{\cal K}^{\infty,\gamma}\big((S^2)^{\wedge}\big) \bigr)$ and $\hat{u}(r,\varphi_1,\theta_1,\eta):=F_{y\rightarrow\eta}\varphi_i u(r,\varphi_1,\theta_1,y).$ Here, $P_l, l=0,1,2,\ldots,$ 
denote projection operators on subspaces which belong to eigenvalues $-l(l+1)$ of the Laplace-Beltrami operator on $S^2.$ For terms depending on edge variables and covariables, we have used the following notations
\begin{eqnarray*}
 {\cal Q}_{0,1}(\hat{u}) & := & M \bigl( (\tilde{\sigma}' \sigma -1) \opm_M^{\gamma-1}(a) \tilde{\sigma} \hat{u} \bigr) (0)
 + M \bigl( \tfrac{1}{2t^2} r^2 C_0 \tilde{\sigma} \hat{u} \bigr) (0) + M \bigl( \opm_M^{\gamma-1}(rs_1) \tilde{\sigma} \hat{u} \bigr) (0) \\
 & = & M \bigl( \tilde{\sigma}' \sigma \opm_M^{\gamma-1}(a) \tilde{\sigma} \hat{u} \bigr) (0)
 + M \bigl( \opm_M^{\gamma-1}(h_1^{(0)}) \tilde{\sigma} \hat{u} \bigr) (0);
\\
 {\cal Q}_{0,2}(\hat{u}) & := & M \bigl( (\tilde{\sigma}' \sigma -1) \opm_M^{\gamma-1} \bigl( \sum_\alpha \tfrac{1}{\alpha !} 
 \partial_\eta^\alpha \tilde{C}_1 D_y^\alpha a \bigr) \tilde{\sigma} \hat{u} \bigr) (0) \\
 & & + M \bigl( \sum_\alpha \tfrac{1}{\alpha !} \partial_\eta^\alpha \tilde{C}_1 D_y^\alpha
 (\tfrac{1}{2t^2} r^2 C_0) \tilde{\sigma} \hat{u} \bigr) (0)
 + M \bigl( \opm_M^{\gamma-1} \bigl( \sum_\alpha \tfrac{1}{\alpha !} \partial_\eta^\alpha
 \tilde{C}_1 D_y^\alpha r s_1 \bigr) \tilde{\sigma} \hat{u} \bigr) (0) \\
 & = & M \bigl( \tilde{\sigma}' \sigma \opm_M^{\gamma-1} \bigl( \sum_\alpha \tfrac{1}{\alpha !}
 \partial_\eta^\alpha \tilde{C}_1 D_y^\alpha a \bigr) \tilde{\sigma} \hat{u} \bigr) (0) 
 + M \bigl( \opm_M^{\gamma-1} \bigl( \sum_\alpha \tfrac{1}{\alpha !} \partial_\eta^\alpha
 \tilde{C}_1 D_y^\alpha h_1^{(0)} \bigr) \tilde{\sigma} \hat{u} \bigr) (0);\\
 {\cal Q}_{1,1}(\hat{u}) & := & M \bigl( (\tilde{\sigma}' \sigma -1) \opm_M^{\gamma-1}(a) \tilde{\sigma} \hat{u} \bigr) (-1)
 + M \bigl( \tfrac{1}{2t^2} r^2C_0 \tilde{\sigma} \hat{u} \bigr) (-1) \\
 & & + M \bigl( \tfrac{i}{2t^2} r^2 C_1 \tilde{\sigma} \hat{u} \bigr) (-1)
 + M \bigl( \opm_M^{\gamma-1}(r^2 s_2) \tilde{\sigma} \hat{u} \bigr) (-1) \\
 & & -tZ_1 \bigg[ M \bigl( (\tilde{\sigma}' \sigma -1) \opm_M^{\gamma-1}(a) \tilde{\sigma} \hat{u} \bigr) (0)
 + M \bigl( \tfrac{1}{2t^2} r^2C_0 \tilde{\sigma} \hat{u} \bigr) (0)
 + M \bigl( \opm_M^{\gamma-1}(r s_1) \tilde{\sigma} \hat{u} \bigr) (0) \biggr] \\
 & = & M \bigl( \tilde{\sigma}' \sigma \opm_M^{\gamma-1}(a) \tilde{\sigma} \hat{u} \bigr) (-1)
 + M \bigl( \opm_M^{\gamma-1}(h_2^{(1)}) \tilde{\sigma} \hat{u} \bigr) (-1) \\
 & & -tZ_1 \bigg[ M \bigl( \tilde{\sigma}' \sigma \opm_M^{\gamma-1}(a) \tilde{\sigma} \hat{u} \bigr) (0)
 + M \bigl( \opm_M^{\gamma-1}(h_1^{(1)}) \tilde{\sigma} \hat{u} \bigr) (0) \biggr];\\
 {\cal Q}_{2,1}(\hat{u}) & := & 
 M \bigl( (\tilde{\sigma}' \sigma -1) \opm_M^{\gamma-1} \bigl( a \bigr) \tilde{\sigma} \hat{u} \bigr) (-2) 
 + M \bigl( \tfrac{1}{2t^2} r^4C_2 \tilde{\sigma} \hat{u} \bigr) (-2)
 + M \bigl( \opm_M^{\gamma-1}(r^3 s_3) \tilde{\sigma} \hat{u} \bigr) (-2) \\
 & & -\tfrac{1}{2} tZ_1 \bigl[ M \bigl( (\tilde{\sigma}' \sigma -1) \opm_M^{\gamma-1} \bigl( a \bigr) \tilde{\sigma} \hat{u} \bigr) (-1) 
 + M \bigl( \tfrac{1}{2t^2} r^2C_0 \tilde{\sigma} \hat{u} \bigr) (-1) \\
 & & + M \bigl( \tfrac{i}{2t^2} r^2C_1 \tilde{\sigma} \hat{u} \bigr) (-1)
 + M \bigl( \opm_M^{\gamma-1}(r^2 s_2) \tilde{\sigma} \hat{u} \bigr) (-1) \bigr] \\
 & & + \tfrac{1}{6} \bigl( 10 +(tZ_1)^2 -2tZ_2 \bigr) \bigl[
 M \bigl( (\tilde{\sigma}' \sigma -1) \opm_M^{\gamma-1} \bigl( a \bigr) \tilde{\sigma} \hat{u} \bigr) (0) \\
 & & + M \bigl( \tfrac{1}{2t^2} r^2C_0 \tilde{\sigma} \hat{u} \bigr) (0)
 + M \bigl( \opm_M^{\gamma-1}(r s_1) \tilde{\sigma} \hat{u} \bigr) (0) \bigr] \\ 
 & = & M \bigl( \tilde{\sigma}' \sigma \opm_M^{\gamma-1} \bigl( a \bigr) \tilde{\sigma} \hat{u} \bigr) (-2)
 + M \bigl( \opm_M^{\gamma-1}(h_3^{(2)}) \tilde{\sigma} \hat{u} \bigr) (-2) \\
 & & -\tfrac{1}{2} tZ_1 \bigl[ M \bigl( \tilde{\sigma}' \sigma \opm_M^{\gamma-1} \bigl( a \bigr) \tilde{\sigma} \hat{u} \bigr) (-1)
 + M \bigl( \opm_M^{\gamma-1}(h_2^{(2)}) \tilde{\sigma} \hat{u} \bigr) (-1) \bigr] \\
 & & + \tfrac{1}{6} \bigl( 10 +(tZ_1)^2 -2tZ_2 \bigr) \bigl[
 M \bigl( \tilde{\sigma}' \sigma \opm_M^{\gamma-1} \bigl( a \bigr) \tilde{\sigma} \hat{u} \bigr) (0)
 + M \bigl( \opm_M^{\gamma-1}(h_1^{(2)}) \tilde{\sigma} \hat{u} \bigr) (0) \bigr];\\
 {\cal Q}_{2,2}(\hat{u}) & := &
 M \bigl( \tilde{\sigma}' \sigma \opm_M^{\gamma-1} \bigl( \sum_\alpha \tfrac{1}{\alpha !}
 \partial_\eta^\alpha \tilde{C}_1 D_y^\alpha a \bigr) \tilde{\sigma} \hat{u} \bigr) (0) 
 + M \bigl( \opm_M^{\gamma-1} \bigl( \sum_\alpha \tfrac{1}{\alpha !} \partial_\eta^\alpha
 \tilde{C}_1 D_y^\alpha h_1^{(2)} \bigr) \tilde{\sigma} \hat{u} \bigr) (0)
\end{eqnarray*}
with
\begin{eqnarray*}
 & & C_1 := -5t\tau -\cot \theta_2 \Theta_2,\quad\tilde{C}_1 := C_0 -4it\tau +iC_1;
\\
& & rs_1 := a-a_0, \quad r^2s_2 := a-a_0 -ra_1 ;
\\
& & a_0 := -\tfrac{1}{2t^2} T^2 h_0 + \tfrac{1}{2t^2} r^2 C_0, \quad 
 a_1 := \tfrac{i}{2t^2} r C_1 + \tfrac{Z_1}{t},\quad
 a_2 := -\tfrac{4}{3t^2} w - \tfrac{1}{6t^2} \Delta_{S^2} + \tfrac{1}{2t^2} r^2 C_2 + \tfrac{Z_2}{t};
\\
& &  h_1^{(l)} := \tfrac{1}{2t^2} ( w^2-w-l(l+1)), \quad h_2^{(l)} := \tfrac{1}{2t^2} \bigl( w^2-w-l(l+1) -2rtZ_1 \bigr),
\\
& &  h_3^{(l)} := \tfrac{1}{2t^2} \bigl( w^2-w-l(l+1) + \tfrac{8}{3} r^2 w -r^2 C_0 -ir^2 C_1 -2rtZ_1 -2r^2 -2r^2tZ_2 \bigr).
\end{eqnarray*}
\end{theorem}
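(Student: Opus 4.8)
The plan is to execute the recursive parametrix construction of Section~\ref{oooa} explicitly up to second order, recording at each step every Green operator that gets dropped. Starting from the expansion $A_{\edge} = A_0 + rA_1 + r^2 A_2 + \cdots$ of Section~\ref{2.1} and the ansatz $P = P_0 + rP_1 + r^2 P_2 + \cdots$, the recursion gives $P_0 A_0 = I$ mod $L^0_G + L^0_{\gflat}$ at zeroth order, then $rP_1 \stackrel{\mod G}{=} -rP_{0,1}A_1 P_0$ and $r^2 P_2 \stackrel{\mod G}{=} r^2\big(P_{0,1,1}A_{1,1}P_{0,1}A_1 P_0 - P_{0,2}A_2 P_0\big)$, the subscripts denoting the commutator-shifted operators from $\mathcal{O}r^\beta - r^\beta\mathcal{O}_\beta = G_\beta$, which on the level of Mellin symbols amounts to the substitution $w \mapsto w+\beta$. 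Substituting the Leibniz--Mellin representation (\ref{A}), (\ref{a0}) of $A_0$ (and the analogous ones for $A_1, A_2$), the ansatz (\ref{p0})--(\ref{pM0}) for $P_0$, and the Lippmann--Schwinger expansion (\ref{a0asymp}) of $a_0^{(-1)}$ into each product, and carrying out the Leibniz expansion in $\eta$ exactly as in (\ref{g012}), I would read off the Green contributions order by order: the pieces of type $g_{\cdot,1}$ stem from the surviving cut-off mismatch $(\tilde{\sigma}'\sigma-1)$, those of type $g_{\cdot,2}$ from the weight shift $\opm_M^{\gamma-3} - \opm_M^{\gamma-1}$.

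With the symbol $g(y,\eta) \sim g_0 + rg_1 + r^2 g_2$ of the total Green operator assembled, I would evaluate its action on $\hat u = F_{y\to\eta}\varphi_i u$. The decisive simplification is the spectral decomposition of the Laplace--Beltrami operator on the base $S^2$: on the eigenspace for the eigenvalue $-l(l+1)$ the relevant operator-valued Mellin symbol reduces to the scalar $h_1^{(l)} = \tfrac{1}{2t^2}(w^2-w-l(l+1)) = \tfrac{1}{2t^2}(w+l)(w-l-1)$, whose inverse has a simple pole at $w = -l$ inside the weight strip for $\tfrac12 \le \gamma \le \tfrac32$ (on its boundary when $l=0$ at the endpoints). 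Applying the corresponding Mellin pseudo-differential operator to $\hat u$ and moving the integration contour past this pole produces, by the residue theorem, an asymptotic term of order $r^l$ whose coefficient is the residue $2t^2/(2l+1)$ times the Mellin transform of the argument evaluated at $w = -l$; this is precisely the meaning of the building blocks $M(\,\cdot\,)(-l)$ entering the $\mathcal{Q}_{l,k}$, with $k=1$ collecting the direct contribution and $k=2$ the first Leibniz correction involving $\tilde C_1$. Summing these residue contributions over the relevant orders and over $l=0,1,2$ — eigenspaces with $l\ge 3$ first enter at order $r^3$, hence the remainder $\mathcal{O}(r^3)$ — and inserting the projections $\mathcal{P}_l$ gives an expansion of the stated form (\ref{Gasymp}). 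The scalar $r$-polynomials in front (such as $1 + rtZ_1 + r^2(-2+\tfrac13(tZ_1)^2+\tfrac13 tZ_2)$ before $\mathcal{P}_0\mathcal{Q}_{0,1}$) emerge once one tracks how the $Z_1$- and $Z_2$-dependent parts of $A_1$, $A_2$, the Taylor coefficients of $h$, $r^2/\sin^2 r$ and $1/\cos^2 r$, and the $w \mapsto w+\beta$ shifts of the second recursion propagate through the products, which simultaneously fixes the residue factors $\tfrac13$ and $\tfrac15$ in front of $\mathcal{P}_1\mathcal{Q}_{1,1}$ and $\mathcal{P}_2\mathcal{Q}_{2,1}$.

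That the formal geometric series in (\ref{a0hr2C}), the Lippmann--Schwinger rearrangements (\ref{LippSch}), and the kernel cut-offs genuinely remain inside the edge-degenerate operator classes is already secured by Propositions~\ref{prop1}, \ref{prop2} and \ref{propcon} and Remark~\ref{remark2}, while the admissibility of the $\varepsilon$-regularization rests on $\varphi_i u \in \mathcal{W}^\infty_{\mathrm{comp}}(Y_i,\mathcal{K}^{\infty,\gamma}((S^2)^\wedge))$; so the rest of the argument is bookkeeping rather than a matter of principle. The main obstacle I anticipate is exactly this bookkeeping: organizing the double expansion — powers of $r$ from the Hamiltonian and its parametrix versus powers of the degenerate covariable $r\eta$ — so that every contribution to a given order $r^m$ is counted once, handling the argument shifts in the second recursion consistently, and keeping the residue computations on the various eigenspaces aligned. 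With the auxiliary symbols $h_1^{(l)}, h_2^{(l)}, h_3^{(l)}$, $C_0, C_1, \tilde C_1$, $a_0, a_1, a_2$ and $s_1, s_2, s_3$ fixed as in the statement, the verification that the collected terms reproduce (\ref{Gasymp}) is a direct, if lengthy, calculation, and it can be cross-checked against the explicitly known eigenfunctions of the Hamiltonian without $e$--$e$ interaction as in Appendix~\ref{noninteracting}.
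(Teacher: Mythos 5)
Your plan matches the paper's strategy exactly: carry out the recursion $P_0A_0 = I$, $rP_1 A_0 + P_0 rA_1 = 0$, $r^2P_2 A_0 + rP_1 rA_1 + P_0 r^2 A_2 = 0$ modulo Green and flat operators, convert each Leibniz--Mellin product into a single Mellin operator by commuting cut-offs and weight shifts to the left, collect the dropped terms (type-$a$ from the cut-off mismatch $\tilde\sigma'\sigma - 1$, type-$b$ from the weight shift $\opm_M^{\gamma-3}-\opm_M^{\gamma-1}$), project onto Laplace--Beltrami eigenspaces, and read off each coefficient from the residue of the shifted meromorphic symbol at the corresponding pole of $h_1^{(l)}$. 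That is indeed how the paper assembles (\ref{Gasymp}).

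There is, however, a concrete gap in your catalogue of Green operators. You propose to substitute the truncations $A_0, A_1, A_2$ and read off contributions from the products $P_jA_k$ with $j+k\le 2$. But the composition $PA_{\edge}$ also contains the tails $P_0\cdot r^3 S_3$, $rP_1\cdot r^2 S_2$ and $r^2P_2\cdot r S_1$ with $r^m S_m := r^m A_m + r^{m+1}A_{m+1}+\cdots$, and the weight-shift mechanism extracts from these tails residue contributions of order $r^2$ and lower (the paper's $g_{j,3}$ and $g_{j,4}$, $j=0,1,2$). These are not optional: they are precisely what produces the $s_1, s_2, s_3$-dependent terms $M(\opm_M^{\gamma-1}(r^m s_m)\tilde\sigma\hat u)(-l)$ in ${\cal Q}_{0,1}$, ${\cal Q}_{1,1}$, ${\cal Q}_{2,1}$, and without them the equivalent closed forms in terms of $a$, $h_1^{(l)}$, $h_2^{(l)}$, $h_3^{(l)}$ stated in the theorem cannot be obtained. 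A plan that only tracks Green operators from the three displayed recursion equations therefore cannot reproduce the theorem's coefficients. Once you add these tail Green operators — and verify that all twelve pieces $g_{j,k}$, $j=0,1,2$, $k=1,\dots,4$, are accounted for before applying the residue theorem — the rest of your outline reproduces the paper's proof.
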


\subsection{Leading order asymptotic behaviour of helium and the hydrogen molecule}
\label{3.2}

\noindent
Let us first consider the asymptotic behaviour near the $e-e$ edge. The euclidean distance between two electrons 
expressed in hyperspherical coordinates is
\[
 |{\bf x}_1 - {\bf x}_2 | = \sqrt{2} t \sin r_{12} = \sqrt{2} t \bigl( r_{12} + {\cal O}(r_{12}^3) \bigr) .
\]
With this the asymptotic behaviour on the subspace of $l=0$ relative angular momentum, defined by $P_0$, 
becomes 
\[
 G{\cal P}_0 u \sim \bigl( 1 + \tfrac{1}{2} |{\bf x}_1 - {\bf x}_2 | \bigr) w_0(y) + \cdots
\]
which is equivalent to Kato's famous ``cusp'' condition
\[
 \left. \frac{1}{4\pi} \frac{\partial \int_{S^2} u(x_{12},\omega_{12},{\bf y}) d\omega_{12}}{\partial x_{12}} \right|_{x_{12}=0} =
 \frac{1}{2} u(0,0,{\bf y}), \quad x_{12} := |{\bf x}_1 - {\bf x}_2 |,
 \quad {\bf y} := \tfrac{1}{2}( {\bf x}_1 + {\bf x}_2) \neq {\bf 0} ,
\]
because asymptotic contributions from subspaces $P_n$, $n=1,2,\ldots$, vanish in the spherical average.

For the subspace of $l=1$ relative angular momentum, defined by $P_1$, one gets
\begin{eqnarray*}
 G{\cal P}_1 u & \sim & r \bigl( 1+ \tfrac{1}{4} \sqrt{2} t r \bigr) w_1(y) + \cdots \\
 & \sim & r \bigl( 1+ \tfrac{1}{4} |{\bf x}_1 - {\bf x}_2 | \bigr) w_1(y) + \cdots .
\end{eqnarray*}
The subspace is antisymmetric with respect to the interchange of the electrons, and according to Pauli's principle
it must correspond to a triplet state with respect to the total spin of the wavefunction. This is also reflected 
by the fact that triplet wavefunctions vanish at the $e-e$ edge. 

The leading order asymptotic behaviour in the $l=0$ case near one of the two $e-n$ edges gives
\[
 G{\cal P}_0 u \sim \bigl( 1 - Z |{\bf x}_1| \bigr) w_0(y) + \cdots,
\]
which is again equivalent to the corresponding variant of Kato's ``cusp'' condition
\[
 \left. \frac{1}{4\pi} \frac{\partial \int_{S^2} u(x_1,\omega_1,{\bf x}_2) d\omega_1}{\partial x_1} \right|_{x_1=0} =
 -Z u(0,0,{\bf x}_2), \quad x_1 := |{\bf x}_1|, \quad {\bf x}_2 \neq {\bf 0} .
\]

The corresponding asymptotic expression in the $l=1$ case becomes
\[
 G{\cal P}_1 u \sim r \bigl( 1- \tfrac{1}{2} Z |{\bf x}_1| \bigr) w_1(y)
 + \cdots . 
\]

In the limit $t \rightarrow \infty$ one electron approaches the nucleus whereas the other electron is far appart of it.
This situation resembles to the single electron case and the asymptotic behaviour can be compared with
the hydrogen series. For the subspace $l=0$ one gets
\[
 G{\cal P}_0 u \sim \bigl( 1 - Z |{\bf x}_1| + \tfrac{1}{3} |{\bf x}_1|^2 (Z^2-E+\tfrac{1}{2} \partial^2_t) \bigr) \tilde{w}_0(y)
 + \cdots \quad \mbox{for} \ t \rightarrow \infty .
\]
This is formally equivalent to the asymptotic expansion of the Green operator for the hydrogen series,
cf.~\cite{FHSS10}. However, let us mention that the second order term for $l=0$ depends on the eigenvalue of the energy which is different
for bound states in the helium and hydrogen series. Only in the limitting case of highly excited states 
in the helium series the energies approach the ionization threshold, i.e., the ground state energy of the corresponding ion. 
The additional differential operator $\tfrac{1}{2} \partial^2_t$ acts as a correction term which can be seen by considering 
the case of two noninteracting electrons, cf.~Appendix ref{testcal}.

Let us first consider a system consisting of two electrons and two nuclei $A,B$ with charges $Z_a, Z_b$. For simplicity, the nuclei
are arranged at distance $R$ along the z-axis at positions $\frac{1}{2} R {\bf e}_z$ and $-\frac{1}{2} R {\bf e}_z$, respectively.
We introduce the local distance variables
\[
 {\bf r}_1 := {\bf x}_1 - \tfrac{1}{2} R {\bf e}_z , \quad {\bf r}_2 := {\bf x}_2 + \tfrac{1}{2} R {\bf e}_z ,
\]
and express ${\bf r}_1, {\bf r}_2$ in terms of hyperspherical coordinates. With this the potential becomes
\begin{eqnarray*}
 v_{e-n} & := & -\frac{Z_a r}{\sin r} -\frac{Z_b r}{\cos r} +\frac{Z_a Z_b rt}{R} \\
 & & -\frac{Z_b r}{\sqrt{\sin^2r +2\sin r\cos\theta_1 \tfrac{R}{t} + \bigl( \tfrac{R}{t} \bigr)^2}} 
 -\frac{Z_a r}{\sqrt{\cos^2r -2\cos r\cos\theta_2 \tfrac{R}{t} + \bigl( \tfrac{R}{t} \bigr)^2}} \\
 & & +\frac{r}{\sqrt{1-\sin(2r) \bigl( \cos \theta_1 \cos \theta_2 + \sin \theta_1 \sin \theta_2 \cos(\phi_1-\phi_2) \bigr)
 +2(\sin r \cos \theta_1 - \cos r \cos \theta_2)\tfrac{R}{t} + + \bigl( \tfrac{R}{t} \bigr)^2}} .
\end{eqnarray*}
The asymptotic expansion becomes
\[
 v_{e-n} \sim \underbrace{-Z_a}_{=Z_1} + \biggl[ \underbrace{-Z_b +\frac{Z_a Z_b t}{R} -\frac{Z_b t}{R}
 -\frac{Z_a}{\sqrt{1-2 \cos \theta_2 \tfrac{R}{t} + \bigl( \tfrac{R}{t} \bigr)^2}}
 +\frac{1}{\sqrt{1-2 \cos \theta_2 \tfrac{R}{t} + \bigl( \tfrac{R}{t} \bigr)^2}}}_{=Z_2+tE} \biggr] r
 +{\cal O}(r^2) ,
\]
where $R>t$ has been assumed.
For the special case of the hydrogen molecule, i.e., $Z_a=Z_b=1$, one gets
\[
 v_{e-n} \sim -1-r+{\cal O}(r^2) ,
\]
and 
\[
 Z_2 = -tE-1=-t(2E_0+E_{VdW})-1 , 
\]
where we have decomposed the total energy $E$ into a nonintercting part $E_0$ and the Van der Waals energy $E_{VdW}$.
This shows that the Van der Waals interaction of a distant atom enters already in the second order term of the asymptotic expansion
of the wavefunction at a nucleus.

\subsection{Absence of logarithmic terms in the asymptotics of eigenfunctions near edges}
\label{nolog}
The qualitative behaviour of eigenfunctions of Hamiltonian operators near edges of the stratified configuration space,
i.e., at coalscence points of two particles, can be
easily derived from a fundamental theorem of Fournais 
et al.~\cite{FHO2S09}, already mentioned in the introduction.  
There it has been shown in a neighbourhood
of a coalescence point of two particles, with respective coordinates $x_i,x_j \in \mathbb{R}^3$ in configuration space, that 
the wavefunction can be represented
in the form $\Psi = \Psi_1 + |x_i - x_j| \Psi_2$ with $\Psi_1$ and $\Psi_2$ real analytic functions of the particle 
coordinates. This result implies an asymptotic expansion 
in terms of non negative integer powers of the interparticle distance.
In particular there can be no logarithmic terms, which show up
in the neighbourhood of coalescence points of three and more particles, cf.~Fock's expansion of eigenfunctions of the helium atom. 

Within the present work the asymptotic type is encoded in the
meromorphic Mellin symbol of the paramatrix. The location of its poles and corresponding multiplicities determines the asymptotic
behaviour. In particular, the absence of logarithmic terms
requires that no multiple zeros appear in the denominator
of asymptotic symbols $d^{n}_{k,j}$, with $j,k,n \in \mathbb{N}_0$. 
We do not want to proof this property in full generality,
instead we restrict ourselves to look at those symbols 
which have been already evaluated in the course of our calculation.
In Appendix \ref{poles}, we have summarized our state of knowledge
concerning the location and multiplicities of poles 
of some asymptotic symbols. It is interesting to see that multiple
poles appear in these symbols at values $w=3,4$, however, their location in the complex plane is to the right of the integration contour $\Gamma_{\frac{7}{2}-\gamma}$ of 
the inverse Mellin transformation which enters into the Mellin pseudo-differential operator part of the parametrix. In our particular application, the integration contour $\Gamma_{\frac{7}{2}-\gamma}$ restricts to the strip of the complex 
plane $\{ z \in \mathbb{C} : 2 < |z| < 3 \}$. This is because we consider the
Hamiltonian as an unbounded essentially self-adjoint operator on $L_{2}(\mathbb{R}^{6})$,
which has its natural domain $H^{2}(\mathbb{R}^{6})$, cf.~\cite{RS4}. Therefore, all eigenfunctions are bounded
and $\gamma$ can be restricted to the intervall of ellipticity, i.e., $\tfrac{1}{2}<\gamma<\tfrac{3}{2}$, cf.~Remark \ref{Hamellipt}. 

\section{Proof of the main result}
\label{4}
\subsection{Calculation of the asymptotic parametrix up to second order} 
\label{4.1}
Let us take the ansatz (\ref{Pap}) for the construction of the asymptotic parametrix, where the pseudo-differential operator structure of $P_i$ for
$i>0$ is completely analogous to $P_0$ given by (\ref{A0}), (\ref{p0}) and (\ref{pM0}), respectively. Therefore, we accordingly define the parameter dependent Mellin part by
\[
 p_{M,i}(y,\eta) := r^{i}\opm_M^{\gamma-3}(a_i^{(-1)})(y,\eta) \ \mbox{for} \ i=0,1,2,\ldots,
\]
where the Mellin symbols $a_i^{(-1)}$, $i=0,1,2,\ldots$, have asymptotic expansions of the form
\begin{eqnarray}
\label{aiasymp}
 a_i^{(-1)}(y,\eta) & \sim & -2t^2 \bigg( q_{i,0} +r q_{i,1} +r^2 q_{i,2} + \cdots\bigg) \\ \nonumber 
 & \sim & -2t^2 \sum_{n \geq 0} \sum_{j \geq 0} r^n d_{n,j}^{(i)}(w,r\eta) .
\end{eqnarray} 
It is easy to see, that this particular expansion satisfies the  
formal requirements of an asymptotic parametrix construction outlined in Section \ref{oooa}. Let us just mention the conormal
symbols which satisfy
\[
 \sigma^{-2-j}_c(p_{M,i}) (y,\eta)=0 , \quad j=0,1,\ldots,i-1
\]
and
\[
 \sigma^{-2-i}_c(p_{M,i})(y,\eta)= -2t^{2} d_{0,0}^{(i)} (y,\eta) \neq 0 .
\]
Like in Proposition \ref{propcon}, we can express
Mellin symbols via an asymptotic expansion in terms of
conormal symbols, i.e.,
\[
 r^{i}a_i^{(-1)}(y,\eta) \sim \sum_{j \geq i} r^j \sigma^{-2-j}_c(p_{M,i}) (y,\eta) ,
\]
and vice versa conormal symbols via our homogeneous symbols, i.e.,
\[
 \sigma^{-2-j}_c(p_{M,i}) (y,\eta) = -2t^2 \sum_{k=0}^{j-i} d_{j-i-k,k}^{(i)}(w,\eta) .
\]
It has been already mentioned in the remark following Proposition \ref{propcon}, that the asymptotic expansion (\ref{aiasymp}) does not depend on the details of the parametrix construction.
In the following we therefore calculate all symbols of the asymptotic parametrix by applying it from the right side to the Hamiltonian operator.

\subsubsection{Initial parametrix $P_0$}
In the initial step of the parametrix construction we consider
the equation $P_0A_0=I$ modulo Green operators. The symbol $a_0^{-1}$ of the initial parametrix $P_0$ has itself an asymptotic expansion 
(\ref{a0asymp}) and the corresponding zeroth order equation (\ref{eqr0}), derived from (\ref{LMright}), has been already given.
Let us start the actual calculations by inserting the asymptotic expansion
\begin{equation}
  q_{0,0} = -2t^2 \bigl( d^{(0)}_{0,0} + d^{(0)}_{0,1} + \cdots \bigr)
\label{q00} 
\end{equation}
into the zeroth order equation (\ref{eqr0}).
A simple calculation yields
\[
 d^{(0)}_{0,2j} = (r^2 C_0)^j b_j \ \ \mbox{and} \ 
 d^{(0)}_{0,2j+1} = 0, \quad j=0,1,\ldots ,
\]
with
\begin{equation}
\begin{split}
 b_0 & = h_0^{-1}, \\
 b_1 & = b_0/\bigl( h_0-2(2w-7) \bigr), \\
 \vdots \\
 b_n & = b_{n-1}/\bigl( h_0-2n(2w-5-2n) \bigr), \\
 \vdots
\end{split}
\label{bn}
\end{equation}

The first order equation, derived from (\ref{LMright}), becomes
\begin{eqnarray*}
 r^1: \ \ T^{-2} a_0 r q_{0,1} &+& \partial_w T^{-2} a_0 (-r \partial_r) r q_{0,1}
 + \tfrac{1}{2} \partial^2_w T^{-2} a_0 (-r \partial_r)^2 r q_{0,1} \\
 &+& \partial_\tau T^{-2} a_0 D_t q_{0,0} + \partial_{\Theta_2} T^{-2} a_0 D_{\theta_2} q_{0,0} =0 .
\end{eqnarray*} 
Inserting the asymptotic expansion
\begin{equation}
  q_{0,1} = -2t^2 \bigl( d^{(0)}_{1,0} + d^{(0)}_{1,1} + \cdots \bigr) 
\label{q01} 
\end{equation}
as well as (\ref{q00}) yields after a simple calculation
\begin{equation}
\begin{split}
 d^{(0)}_{1,2j} & = 0 \ \ \mbox{for} \ j=0,1,\ldots, \\
 d^{(0)}_{1,2j+1}& =-\sum_{k=0}^ j (r^2 C)^{j-k} P_{1,2k+1}b_{j+1} \ \ \mbox{for} \  j=0,1,\ldots \\
 \vdots 
\end{split}
\label{d1j}
\end{equation}
with polynomials
\[
 P_{1,2j+1} := 4i \biggr[ t(r\tau) (r^2 C)^j +j \biggr( t^3 (r\tau)^3 - (r\Theta_2) (r\Phi_2)^2
 \tfrac{\cot \theta_2}{\sin^2 \theta_2} \biggr) (r^2 C)^{j-1} \biggr] ,
\]
which are homogeneous of order $2j+1$ in the edge covariables.

The recursive calculation can be carried out to any order, what remains to be done for our purpose is to consider the second order equation, derived from (\ref{LMright}), which becomes 
\begin{multline*}
 r^2: \ \ T^{-2} a_0 r^2 q_{0,2} + \partial_w T^{-2} a_0 (-r \partial_r) r^2 q_{0,2}
 + \tfrac{1}{2} \partial^2_w T^{-2} a_0 (-r \partial_r)^2 r^2 q_{0,2} \\
 + \partial_\tau T^{-2} a_0 D_t r q_{0,1} + \partial_{\Theta_2} T^{-2} a_0 D_{\theta_2} r q_{0,1}
 + \tfrac{1}{2} \partial^2_\tau T^{-2} a_0 D_t^2 q_{0,0}
 + \tfrac{1}{2} \partial^2_{\Theta_2} T^{-2} a_0 D_{\theta_2}^2 q_{0,0} =0 ,
\end{multline*}
from which $q_{0,2}$ can be calculated. In particular, we get
\begin{equation}
\begin{split}
 d^{(0)}_{2,2j+1} & = 0 \ \ \mbox{for} \ j=0,1,\ldots, \\
 d^{(0)}_{2,0}& = -2b_1 \\
 d^{(0)}_{2,2}& = -P_{2,2} \, b_1/(h_0-4(2w-9))\\
 \vdots 
\end{split}
\label{d2j}
\end{equation}
with 
\[
 P_{2,2} := 34t^2 (r\tau)^2+\tfrac{2(1+2\cos^2\theta_2)}{\sin^4\theta_2}(r\Phi_2)^2+4r^2C_0 . 
\]

\subsubsection{First order parametrix $P_1$}
Once we have calculated the symbol of the initial parametrix 
up to second order in the asymptotic expansion, it can be used
in the first recursion step to construct the symbol of the 
first order parametrix $P_1$ from the equation
\begin{eqnarray}
\nonumber
 \lefteqn{\sum_\alpha \tfrac{1}{\alpha !} 
 r^{-2} \partial_\eta^\alpha \bigl(\opm_M^{\gamma-1}(a_0)\bigr)
 r^3 D^\alpha_y \bigl(\opm_M^{\gamma-3}(a_1^{(-1)})\bigr)} \\   \nonumber
 & & + \sum_\alpha \tfrac{1}{\alpha !} 
 r^{-1} \partial_\eta^\alpha \bigl(\opm_M^{\gamma-1}(a_1)\bigr)
 r^2 D^\alpha_y \bigl(\opm_M^{\gamma-3}(a_0^{(-1)})\bigr) \\ \label{P1r1}
 & = & \sum_\alpha \tfrac{1}{\alpha !} \opm_M^{\gamma-3} \bigl(
 \partial_\eta^\alpha T^{-2} a_0 \#_{r,w} r D^\alpha_y a_1^{(-1)} \bigr) \\ \nonumber
  & & + \sum_\alpha \tfrac{1}{\alpha !} \opm_M^{\gamma-3} \bigl(
 r \partial_\eta^\alpha T^{-2} a_1 \#_{r,w} D^\alpha_y a_0^{(-1)} \bigr) \\ \nonumber
& = & 0 \quad \mbox{mod} \ R^0_{\gflat}(Y \times \mathbb{R}^3, {\boldsymbol g}_r) .
\end{eqnarray}
From this equation we can derive the first order equation of the
corresponding symbol
\begin{eqnarray*}
 \lefteqn{r^1: \ \ T^{-2} a_0 r q_{1,0} + \partial_w T^{-2} a_0 (-r \partial_r) (r q_{1,0})
 + \tfrac{1}{2} \partial^2_w T^{-2} a_0 (-r \partial_r)^2 (r q_{1,0})} \\
 & & \quad \quad + r T^{-2} a_1 q_{0,0} + r \partial_w T^{-2} a_1 (-r \partial_r) q_{0,0} \sim 0 .
\end{eqnarray*}
Inserting the asymptotic expansions
\begin{equation}
 q_{1,0} = -2t^2 \bigl( d^{(1)}_{0,0} + d^{(1)}_{0,1} + \cdots \bigr) ,
\label{q10}
\end{equation}
and (\ref{q00}) which has been already calculated up to second order in the previous subsection, one gets the asymptotic equation
\[
 \bigl( h_0 -r^2C_0 -(2w-6) \bigr) \sum_{j \geq 0} d^{(1)}_{0,j} - (2w-7) \sum_{j \geq 1} j d^{(1)}_{0,j}
 + \sum_{j \geq 1} j^2 d^{(1)}_{0,j} - \bigl( irC_1 +2tZ_1 \bigr) \sum_{j \geq 0} d^{(0)}_{0,j}
 \sim 0 ,
\]
from which we obtain, with respect to powers of $r\eta$, in zeroth order
\[
 (r\eta)^0: \ \ d^{(1)}_{0,0} = \frac{2tZ_1}{h_0 \bigl( h_0 -(2w-6) \bigr)}
\]
and first order
\[
 (r\eta)^1: \ \ d^{(1)}_{0,1} = \frac{irC_1}{h_0 \bigl( h_0 -2(2w-7) \bigr)} ,
\]
respectively. 

The second order asymptotic parametrix requires one more term
which can be obtained from the second order contribution to Eq.~(\ref{P1r1}), i.e.,
\begin{eqnarray*}
 \lefteqn{r^2: \ \ T^{-2} a_0 r^2 q_{1,1} + \partial_w T^{-2} a_0 (-r \partial_r) (r^2 q_{1,1})
 + \tfrac{1}{2} \partial^2_w T^{-2} a_0 (-r \partial_r)^2 (r^2 q_{1,1})} \\
 & & \quad \quad + r T^{-2} a_1 r q_{0,1} + r \partial_w T^{-2} a_1 (-r \partial_r) (r q_{0,1})
 + \partial_\eta T^{-2} a_0 D_y (r q_{1,0}) + r \partial_\eta T^{-2} a_1 D_y q_{0,0} \sim 0 .
\end{eqnarray*}
Inserting the asymptotic expansion
\begin{equation}
  q_{1,1} = -2t^2 \bigl( d^{(1)}_{1,0} + d^{(1)}_{1,1} + \cdots \bigr) 
\label{q11} 
\end{equation}
as well as the already previously considered asymptotic expansions
(\ref{q00}), (\ref{q01}) and (\ref{q10}), we get
\begin{eqnarray*}
 \lefteqn{\bigl( h_0 -r^2C_0 -2(2w-7) \bigr) \sum_{j \geq 0} d^{(1)}_{1,j} - (2w-9) \sum_{j \geq 1} j d^{(1)}_{1,j}
 + \sum_{j \geq 1} j^2 d^{(1)}_{1,j}} \\ & &  \quad \quad
 - \bigl( irC_1 +2tZ_1 \bigr) \sum_{j \geq 0} d^{(0)}_{1,j} 
 +r \tau \sum_{j \geq 0} D_t (-2t^2 d^{(1)}_{0,j}) \\ & &  \quad \quad
 -2 r \Theta \sum_{j \geq 0} D_{\theta_2} d^{(1)}_{0,j} -2 r \Phi \tfrac{1}{\sin^2 \theta_2} \sum_{j \geq 0} D_{\phi_2} d^{(1)}_{0,j}
 - \tfrac{i5}{2t} \sum_{j \geq 0} D_t (-2t^2 d^{(0)}_{0,j}) \\ & &  \quad \quad
 +i \cot \theta_2 \sum_{j \geq 0} D_{\theta_2} d^{(0)}_{0,j} \sim 0 ,
\end{eqnarray*}
from which we obtain, with respect to powers of $r\eta$, in zeroth order
\[
 (r\eta)^0: \ \ d^{(1)}_{1,0} = - \frac{10}{h_0 \bigl( h_0 -2(2w-7) \bigr)} .
\]

\subsubsection{Second order parametrix $P_2$}
Finally, we want to use the symbols of the initial and first order
parametrix to calculate in the second recursion step, the remaining symbol of the second order parametrix $P_2$ from the equation
\begin{eqnarray*}
 \lefteqn{\sum_\alpha \tfrac{1}{\alpha !} 
 r^{-2} \partial_\eta^\alpha \bigl(\opm_M^{\gamma-1}(a_0)\bigr)
 r^4 D^\alpha_y \bigl(\opm_M^{\gamma-3}(a_2^{(-1)})\bigr)} \\
 & & + \sum_\alpha \tfrac{1}{\alpha !} 
 r^{-1} \partial_\eta^\alpha \bigl(\opm_M^{\gamma-1}(a_1)\bigr)
 r^3 D^\alpha_y \bigl(\opm_M^{\gamma-3}(a_1^{(-1)})\bigr) \\
 & & + \sum_\alpha \tfrac{1}{\alpha !} 
 \partial_\eta^\alpha \bigl(\opm_M^{\gamma-1}(a_2)\bigr)
 r^2 D^\alpha_y \bigl(\opm_M^{\gamma-3}(a_0^{(-1)})\bigr) \\
 & = & \sum_\alpha \tfrac{1}{\alpha !} \opm_M^{\gamma-3} \bigl(
 \partial_\eta^\alpha T^{-2} a_0 \#_{r,w} r^2 D^\alpha_y a_2^{(-1)} \bigr) \\
  & & + \sum_\alpha \tfrac{1}{\alpha !} \opm_M^{\gamma-3} \bigl(
 r \partial_\eta^\alpha T^{-2} a_1 \#_{r,w} r D^\alpha_y a_1^{(-1)} \bigr) \\
  & & + \sum_\alpha \tfrac{1}{\alpha !} \opm_M^{\gamma-3} \bigl(
 r^2 \partial_\eta^\alpha T^{-2} a_2 \#_{r,w} D^\alpha_y a_0^{(-1)} \bigr) \\
& = & 0 \quad \mbox{mod} \ R^0_{\gflat}(Y \times \mathbb{R}^3, {\boldsymbol g}_r).
\end{eqnarray*}
Like before, we derive from it the second order equation of the 
corresponding symbol
\begin{eqnarray*}
 \lefteqn{r^2: \ \ T^{-2} a_0 r^2 q_{2,0} + \partial_w T^{-2} a_0 (-r \partial_r) (r^2 q_{2,0})
 + \tfrac{1}{2} \partial^2_w T^{-2} a_0 (-r \partial_r)^2 (r^2 q_{2,0})} \\
 & & \quad \quad + r T^{-2} a_1 r q_{1,0} + r \partial_w T^{-2} a_1 (-r \partial_r) (r q_{1,0})
 + r^2 T^{-2} a_2 q_{0,0} + r^2 \partial_w T^{-2} a_2 (-r \partial_r) q_{0,0} \sim 0 .
\end{eqnarray*}
Inserting the asymptotic expansion
\begin{equation}
  q_{2,0} = -2t^2 \bigl( d^{(2)}_{0,0} + d^{(2)}_{0,1} + \cdots \bigr) 
\label{q20} 
\end{equation}
as well as the already previously considered asymptotic expansions
(\ref{q00}) and (\ref{q10}), we get
\begin{eqnarray*}
 \lefteqn{\bigl( h_0 -r^2C_0 -2(2w-7) \bigr) \sum_{j \geq 0} d^{(2)}_{0,j} - (2w-9) \sum_{j \geq 1} j d^{(2)}_{0,j}
 + \sum_{j \geq 1} j^2 d^{(2)}_{0,j}} \\ & & \quad \quad - \bigl( irC_1 +2tZ_1 \bigr) \sum_{j \geq 0} d^{(1)}_{0,j}
 + \bigl( \tfrac{8}{3} (w-2) + \tfrac{1}{3} \Delta_{S^2} -r^2 C_2 -2tZ_2 \bigr)
 \sum_{j \geq 0} d^{(0)}_{0,j} - \tfrac{8}{3} \sum_{j \geq 1} j d^{(0)}_{0,j} \sim 0 ,
\end{eqnarray*}
from which we obtain, with respect to powers of $r\eta$, in zeroth order
\[
 (r\eta)^0: \ \ d^{(2)}_{0,0} = \frac{(2tZ_1)^2}
 {\bigl( h_0 -2(2w-7) \bigr) \bigl( h_0 -(2w-6) \bigr) h_0}
 - \frac{1}{3} \frac{8(w-2) + \Delta_{S^2} -6tZ_2}{\bigl( h_0 -2(2w-7) \bigr) h_0} . 
\]
\subsubsection{Asymptotic expansion of the Mellin symbols up to second order}
Summing up the asymptotic terms of the previous calculations
up to second order leads to the following explicit expressions
for the Mellin symbols of the parametrix
\begin{align*}
 a_0^{(-1)} & \sim q_{0,0} +rq_{0,1} +r^2 q_{0,2} + \cdots \\
 & \sim -2t^2 \biggl( d_{0,0}^{(0)} + d_{0,2}^{(0)} + r d_{1,1}^{(0)} + r^2 d_{2,0}^{(0)} + \cdots \biggr) \\ 
 & \sim -2t^2 \biggl( \frac{1}{h_0} + \frac{r^2C_0}{h_0 \bigl( h_0-2(2w-7) \bigr)} - \frac{r P_{1,1}}{h_0 \bigl( h_0-2(2w-7) \bigr)}
 - r^2 \frac{2}{h_0 \bigl( h_0-2(2w-7) \bigr)} + \cdots \biggr) ,
\end{align*}

\begin{align*}
 a_1^{(-1)} & \sim q_{1,0} +rq_{1,1} + \cdots \\
 & \sim -2t^2 \biggl( d_{0,0}^{(1)} + d_{0,1}^{(1)} + r d_{1,0}^{(1)} + \cdots \biggr) \\
 & \sim -2t^2 \biggl( \frac{2tZ_1}{h_0 \bigl( h_0 -(2w-6) \bigr)} + \frac{irC_1}{h_0 \bigl( h_0 -2(2w-7) \bigr)} - r \frac{10}{h_0 \bigl( h_0 -2(2w-7) \bigr)}
 + \cdots \biggr) ,
\end{align*}

\begin{align*}
 a_2^{(-1)} & \sim q_{2,0} + \cdots \\
 & \sim -2t^2 \biggl( d_{0,0}^{(2)} + \cdots \biggr) \\
 & \sim -2t^2 \biggl( \frac{(2tZ_1)^2} {\bigl( h_0 -2(2w-7) \bigr) \bigl( h_0 -(2w-6) \bigr) h_0} - \frac{1}{3} \frac{8(w-2) + \Delta_{S^2} -6tZ_2}{\bigl( h_0 -2(2w-7) \bigr) h_0} + \cdots \biggr) .
\end{align*}

\subsection{Calculation of left Green operators up to second order}
\label{4.2}
In Section \ref{2.2}, we have discussed the construction of the 
initial parametrix and derived the corresponding Green operators.
The recursive construction of higher order parametrices proceeds
in a similar manner leading to the same types of Green operator
which can be grouped in two classes.
Let us first consider the class of type-$a$ Green operator symbols which are of the general form
\begin{equation}
 g_{a} := \sigma' \omega'_{1,\eta} r^{k+l} \sum_\alpha \frac{1}{\alpha !} 
 \partial_\eta^\alpha \biggl(\opm_M^{\gamma-1}(T^2 a_k^{(-1)})\biggr)
 \bigl(\tilde{\sigma}' \sigma -1 \bigr)
 D^\alpha_y \biggl(\opm_M^{\gamma-1}(a_l)\biggr) 
 \omega_{0,\eta} \tilde{\sigma}, \quad k,l=0,1,2,\ldots ,
\label{gtypa}
\end{equation}
and belong to $R^{-\infty}_G(Y \times \mathbb{R}^3, {\boldsymbol g})$. The second class of type-$b$ Green operator symbols are of the general form
\begin{equation}
 g_{b} := \sigma' \omega'_{1,\eta} r^{k+l} \sum_\alpha \frac{1}{\alpha !} 
 D^\alpha_y(\tfrac{C}{2t^2}) \partial_\eta^\alpha \biggl( \opm_M^{\gamma-3}(a_k^{(-1)}) - \opm_M^{\gamma-1}(a_k^{(-1)}) \biggr)
 \omega_{0,\eta} \tilde{\sigma} ,
\label{gtypb}
\end{equation}
and belong to $R^{0}_G(Y \times \mathbb{R}^3, {\boldsymbol g})$.
They arise from the commutation of a $r^{l}$ term, which belongs to a homogeneous polynomial in the covariables $C_{i}(\eta)$, $i=0,1,\ldots$, from the right to the left, cf.~our discussion in Section \ref{oooa}.

\subsubsection{Green operators from initial step of the parametrix construction}

According to the calculations in Section \ref{2.2} we get a
type-$a$ Green operator symbol, cf.~(\ref{g012}), which after insertion
of the corresponding asymptotic symbol of the initial parametrix
becomes
\begin{eqnarray*}
 g_{0,1} & = & \sigma' \omega'_{1,\eta} \sum_\alpha \tfrac{1}{\alpha !} 
 \opm_M^{\gamma-1}\bigl( (-2t^2) T^2 \partial_\eta^\alpha (d^{(0)}_{0,0} + d^{(0)}_{0,2} + r d^{(0)}_{1,1} + r^2 d^{(0)}_{2,0}) \bigr)
 \bigl(\tilde{\sigma}' \sigma -1 \bigr)
 \opm_M^{\gamma-1} \bigl( D^\alpha_y a_0 \bigr) \tilde{\sigma} .
\end{eqnarray*}
For the type-$b$ Green operator symbol, cf. (\ref{g012}), we get
the explicit formula 
\begin{eqnarray*}
 g_{0,2} & = & \sigma' \omega'_{1,\eta} \sum_\alpha \tfrac{1}{\alpha !} 
 D^\alpha_y(\tfrac{r^2 C_0}{2t^2}) \biggl[ \opm_M^{\gamma-3}\bigl( (-2t^2) \partial_\eta^\alpha
 (d^{(0)}_{0,0} + d^{(0)}_{0,2} + r d^{(0)}_{1,1} + r^2 d^{(0)}_{2,0}) \bigr) \\ 
 & & - \opm_M^{\gamma-1} \bigl( (-2t^2) \partial_\eta^\alpha (d^{(0)}_{0,0} + d^{(0)}_{0,2} + r d^{(0)}_{1,1} + r^2 d^{(0)}_{2,0}) \bigr)
 \biggr] \tilde{\sigma} \\
 & = & -2t^2 \sigma' \omega'_{1,\eta} \biggl[ r \res (d^{(0)}_{0,0},1) M \bigl( \tfrac{C_0}{2t^2} \tilde{\sigma} (\cdot) \bigr) (1)
 + \res (d^{(0)}_{0,0},2) M \bigl( \tfrac{C_0}{2t^2} \tilde{\sigma} (\cdot) \bigr) (2) \\
 & & + \sum_\alpha \tfrac{1}{\alpha !} \bigl( \res (\partial_\eta^\alpha d^{(0)}_{0,2},2)
 + r \res (\partial_\eta^\alpha d^{(0)}_{1,1},2)
 + r^2 \res (\partial_\eta^\alpha d^{(0)}_{2,0},2) \bigr)
 M \bigl( D^\alpha_y(\tfrac{C_0}{2t^2}) \tilde{\sigma} (\cdot) \bigr) (2) + {\cal O}(r^3) \biggr] ,
\end{eqnarray*}
where we applied Cauchy's residue theorem in order to get an explicit formula. Here and in the following $\res(f, w_0)$
denotes the residuum of the meromophic function $f$ at its
pole $w_0$.

\subsubsection{Green operators from first recursion step of the parametrix construction}

The first order contribution to the asymptotic parametrix from the left side is given by
\begin{eqnarray*}
 \lefteqn{\sigma' \omega'_{1,\eta} r \sum_\alpha \tfrac{1}{\alpha !} 
 \partial_\eta^\alpha \bigl(\opm_M^{\gamma-1}(T^2 a_1^{(-1)})\bigr) \tilde{\sigma}' 
 \sigma D^\alpha_y \bigl(\opm_M^{\gamma-1}(a_0)\bigr) \tilde{\sigma}} \\
 & & + \sigma' \omega'_{1,\eta} r \sum_\alpha \tfrac{1}{\alpha !}
 \partial_\eta^\alpha \bigl(\opm_M^{\gamma-2}(T a_0^{(-1)})\bigr) \tilde{\sigma}' 
 \sigma D^\alpha_y \bigl(\opm_M^{\gamma-1}(a_1)\bigr) \tilde{\sigma} \\
 & = & \sigma' \omega'_{1,\eta} r \sum_\alpha \tfrac{1}{\alpha !}
 \opm_M^{\gamma-1} \bigl( D^\alpha_y(-\tfrac{1}{2t^2}) T^2 \partial_\eta^\alpha
 (a_1^{(-1)} h_0) + D^\alpha_y(\tfrac{r^2 C_0}{2t^2}) \partial_\eta^\alpha a_1^{(-1)} \\
 & & + T \partial_\eta^\alpha a_0^{(-1)} D^\alpha_y(\tfrac{1}{t} Z_1)
 + D^\alpha_y(\tfrac{ir C_1}{2t^2}) \partial_\eta^\alpha a_0^{(-1)} \bigr) \tilde{\sigma} + g_{1,1} + g_{1,2}
\end{eqnarray*}
which has been converted into a single Mellin operator symbol modulo a type-$a$ and $b$ Green operator symbol, in the same manner as it has been discussed for the initial left parametrix in Section \ref{2.2}. After insertion
of the corresponding asymptotic symbols of the initial and first order parametrix, the type-$a$ Green operator symbol becomes
\begin{eqnarray*}
 g_{1,1} & = & \sigma' \omega'_{1,\eta} r \sum_\alpha \tfrac{1}{\alpha !} 
 \opm_M^{\gamma-1}\bigl( (-2t^2) T^2 \partial_\eta^\alpha (d^{(1)}_{0,0} + d^{(1)}_{0,1} + r d^{(1)}_{1,0}) \bigr)
 \bigl(\tilde{\sigma}' \sigma -1 \bigr)
 \opm_M^{\gamma-1} \bigl( D^\alpha_y a_0 \bigr) \tilde{\sigma} \\
 & & + \sigma' \omega'_{1,\eta} \sum_\alpha \tfrac{1}{\alpha !}
 \opm_M^{\gamma-1}\bigl( (-2t^2) T^2 \partial_\eta^\alpha (d^{(0)}_{0,0} + d^{(0)}_{0,2} + r d^{(0)}_{1,1} + r^2 d^{(0)}_{2,0}) \bigr)
 \bigl(\tilde{\sigma}' \sigma -1 \bigr)
 \opm_M^{\gamma-1} \bigl( r' D^\alpha_y a_1 \bigr) \tilde{\sigma} .
\end{eqnarray*}
The corresponding type-$b$ Green operator symbol is given by
\begin{eqnarray*}
 g_{1,2} & = & \sigma' \omega'_{1,\eta} r \sum_\alpha \tfrac{1}{\alpha !} 
 D^\alpha_y(\tfrac{r^2 C_0}{2t^2}) \biggl[ \opm_M^{\gamma-3}\bigl( (-2t^2) \partial_\eta^\alpha
 (d^{(1)}_{0,0} + d^{(1)}_{0,1} + r d^{(1)}_{1,0}) \bigr) \\ 
 & & - \opm_M^{\gamma-1} \bigl( (-2t^2) \partial_\eta^\alpha (d^{(1)}_{0,0} + d^{(1)}_{0,1} + r d^{(1)}_{1,0}) \bigr)
 \biggr] \tilde{\sigma} \\
  & & + \sigma' \omega'_{1,\eta} r \sum_\alpha \tfrac{1}{\alpha !} 
 \biggl[ \opm_M^{\gamma-2}\bigl( (-2t^2) T \partial_\eta^\alpha 
 (d^{(0)}_{0,0} + d^{(0)}_{0,2} + r d^{(0)}_{1,1} + r^2 d^{(0)}_{2,0}) \bigr) \\ 
 & & - \opm_M^{\gamma-1} \bigl( (-2t^2) T \partial_\eta^\alpha (d^{(0)}_{0,0} + d^{(0)}_{0,2} + r d^{(0)}_{1,1} + r^2 d^{(0)}_{2,0}) \bigr)
 \biggr] \opm_M^{\gamma-1} \bigl( D^\alpha_y( \tfrac{1}{t} Z_1 ) \bigr) \tilde{\sigma} \\
 & & + \sigma' \omega'_{1,\eta} r \sum_\alpha \tfrac{1}{\alpha !} 
 D^\alpha_y(\tfrac{ir C_1}{2t^2}) \biggl[ \opm_M^{\gamma-3}\bigl( (-2t^2) \partial_\eta^\alpha
 (d^{(0)}_{0,0} + d^{(0)}_{0,2} + r d^{(0)}_{1,1} + r^2 d^{(0)}_{2,0}) \bigr) \\ 
 & & - \opm_M^{\gamma-1} \bigl( (-2t^2) \partial_\eta^\alpha (d^{(0)}_{0,0} + d^{(0)}_{0,2} + r d^{(0)}_{1,1} + r^2 d^{(0)}_{2,0}) \bigr)
 \biggr] \tilde{\sigma} \\
 & = & -2t^2 \sigma' \omega'_{1,\eta} \biggl[ r^2 \res (d^{(1)}_{0,0},1) M \bigl( \tfrac{C_0}{2t^2} \tilde{\sigma} (\cdot) \bigr) (1)
 +r \res (d^{(1)}_{0,0},2) M \bigl( \tfrac{C_0}{2t^2} \tilde{\sigma} (\cdot) \bigr) (2) \\
 & & + \sum_\alpha \tfrac{1}{\alpha !} \bigl( r \res (\partial_\eta^\alpha d^{(1)}_{0,1},2)
 + r^2 \res (\partial_\eta^\alpha d^{(1)}_{1,0},2) \bigr) 
 M \bigl( D^\alpha_y(\tfrac{C_0}{2t^2}) \tilde{\sigma} (\cdot) \bigr) (2) \\
 & & + \res (T d^{(0)}_{0,0},1) M \bigl( \tfrac{1}{t} Z_1 \tilde{\sigma} (\cdot) \bigr) (1) 
 + \sum_\alpha \tfrac{1}{\alpha !} \bigl( \res (T \partial_\eta^\alpha d^{(0)}_{0,2},1) \\
 & & + r \res (T \partial_\eta^\alpha d^{(0)}_{1,1},1)
 + r^2 \res (T \partial_\eta^\alpha d^{(0)}_{2,0},1) \bigr)
 M \bigl( D^\alpha_y(\tfrac{1}{t} Z_1) \tilde{\sigma} (\cdot) \bigr) (1) \\
 & & +r \res (d^{(0)}_{0,0},1) M \bigl( \tfrac{iC_1}{2t^2} \tilde{\sigma} (\cdot) \bigr) (1)
 + \res (d^{(0)}_{0,0},2) M \bigl( \tfrac{iC_1}{2t^2} \tilde{\sigma} (\cdot) \bigr) (2) \\
 & & + \sum_\alpha \tfrac{1}{\alpha !} \bigl( \res (\partial_\eta^\alpha d^{(0)}_{0,2},2)
 + r \res (\partial_\eta^\alpha d^{(0)}_{1,1},2)
 + r^2 \res (\partial_\eta^\alpha d^{(0)}_{2,0},2) \bigr)
 M \bigl( D^\alpha_y(\tfrac{iC_1}{2t^2}) \tilde{\sigma} (\cdot) \bigr) (2) + {\cal O}(r^3) \biggr] ,
\end{eqnarray*}
where in the second step Cauchy's residue theorem has been applied. 

\subsubsection{Green operators from second recursion step of the parametrix construction}

The second order contribution to the asymptotic parametrix from the left side is given by
\begin{eqnarray*}
 \lefteqn{\sigma' \omega'_{1,\eta} r^2 \sum_\alpha \tfrac{1}{\alpha !} 
 \partial_\eta^\alpha \bigl(\opm_M^{\gamma-1}(T^2 a_2^{(-1)})\bigr) \tilde{\sigma}' 
 \sigma D^\alpha_y \bigl(\opm_M^{\gamma-1}(a_0)\bigr) \tilde{\sigma}} \\
 & & + \sigma' \omega'_{1,\eta} r^2 \sum_\alpha \tfrac{1}{\alpha !}
 \partial_\eta^\alpha \bigl(\opm_M^{\gamma-2}(T a_1^{(-1)})\bigr) \tilde{\sigma}' 
 \sigma D^\alpha_y \bigl(\opm_M^{\gamma-1}(a_1)\bigr) \tilde{\sigma} \\
 & & + \sigma' \omega'_{1,\eta} r^2 \sum_\alpha \tfrac{1}{\alpha !}
 \partial_\eta^\alpha \bigl(\opm_M^{\gamma-3}(a_0^{(-1)})\bigr) \tilde{\sigma}' 
 \sigma D^\alpha_y \bigl(\opm_M^{\gamma-1}(a_2)\bigr) \tilde{\sigma} \\
 & = & \sigma' \omega'_{1,\eta} r^2 \sum_\alpha \tfrac{1}{\alpha !}
 \opm_M^{\gamma-1} \bigl( D^\alpha_y(-\tfrac{1}{2t^2}) T^2 \partial_\eta^\alpha
 (a_2^{(-1)} h_0) + D^\alpha_y(\tfrac{r^2 C_0}{2t^2}) \partial_\eta^\alpha a_2^{(-1)} \\
 & & + T \partial_\eta^\alpha a_1^{(-1)} D^\alpha_y(\tfrac{1}{t} Z_1)
 + D^\alpha_y(\tfrac{ir C_1}{2t^2}) \partial_\eta^\alpha a_1^{(-1)} 
 + \partial_\eta^\alpha a_0^{(-1)} D^\alpha_y(-\tfrac{4}{3t^2} w - \tfrac{1}{6t^2} \Delta_{S^2} + \tfrac{1}{t} Z_2) \\
 & & + D^\alpha_y(\tfrac{r^2 C_2}{2t^2}) T^{-2} \partial_\eta^\alpha a_0^{(-1)} \bigr) \tilde{\sigma} + g_{2,1} + g_{2,2},
\end{eqnarray*}
which has been converted into a single Mellin operator symbol modulo a type-$a$ and $b$ Green operator symbol, similar to the previous recursion steps. After insertion
of the corresponding asymptotic symbols of the initial, first and second order parametrix, the type-$a$ Green operator symbol becomes
\begin{eqnarray*}
 g_{2,1} & = & \sigma' \omega'_{1,\eta} r^2 \sum_\alpha \tfrac{1}{\alpha !} 
 \opm_M^{\gamma-1}\bigl( (-2t^2) T^2 \partial_\eta^\alpha d^{(2)}_{0,0} \bigr)
 \bigl(\tilde{\sigma}' \sigma -1 \bigr)
 \opm_M^{\gamma-1} \bigl( D^\alpha_y a_0 \bigr) \tilde{\sigma} \\
 & & + \sigma' \omega'_{1,\eta} r \sum_\alpha \tfrac{1}{\alpha !} 
 \opm_M^{\gamma-1}\bigl( (-2t^2) T^2 \partial_\eta^\alpha (d^{(1)}_{0,0} + d^{(1)}_{0,1} + r d^{(1)}_{1,0}) \bigr)
 \bigl(\tilde{\sigma}' \sigma -1 \bigr)
 \opm_M^{\gamma-1} \bigl( r' D^\alpha_y a_1 \bigr) \tilde{\sigma} \\
 & & + \sigma' \omega'_{1,\eta} \sum_\alpha \tfrac{1}{\alpha !} 
 \opm_M^{\gamma-1}\bigl( (-2t^2) T^2 \partial_\eta^\alpha (d^{(0)}_{0,0} + d^{(0)}_{0,2} + r d^{(0)}_{1,1} + r^2 d^{(0)}_{2,0}) \bigr)
 \bigl(\tilde{\sigma}' \sigma -1 \bigr)
 \opm_M^{\gamma-1} \bigl( r'^2 D^\alpha_y a_2 \bigr) \tilde{\sigma} .
\end{eqnarray*}
The corresponding type-$b$ Green operator symbol is given by
\begin{eqnarray*}
 g_{2,2} & = & \sigma' \omega'_{1,\eta} r^2 \sum_\alpha \tfrac{1}{\alpha !} \biggl[
 \opm_M^{\gamma-3} \bigl( (-2t^2) \partial_\eta^\alpha (d^{(0)}_{0,0} + d^{(0)}_{0,2} + r d^{(0)}_{1,1} + r^2 d^{(0)}_{2,0}) \bigr) \\
 & & - \opm_M^{\gamma-1} \bigl( (-2t^2) \partial_\eta^\alpha (d^{(0)}_{0,0} + d^{(0)}_{0,2} + r d^{(0)}_{1,1} + r^2 d^{(0)}_{2,0}) \bigr)
 \biggr] \opm_M^{\gamma-1} \bigl( D^\alpha_y( -\tfrac{4}{3t^2} w - \tfrac{1}{6t^2} \Delta_{S^2} + \tfrac{1}{t} Z_2 ) \bigr) \tilde{\sigma} \\
 & & + \sigma' \omega'_{1,\eta} r^2 \sum_\alpha \tfrac{1}{\alpha !} 
 D^\alpha_y(\tfrac{r^2 C_2}{2t^2}) \biggl[ \opm_M^{\gamma-5}\bigl( (-2t^2) T^{-2} \partial_\eta^\alpha
 (d^{(0)}_{0,0} + d^{(0)}_{0,2} + r d^{(0)}_{1,1} + r^2 d^{(0)}_{2,0}) \bigr) \\ 
 & & - \opm_M^{\gamma-1}\bigl( (-2t^2) T^{-2} \partial_\eta^\alpha
 (d^{(0)}_{0,0} + d^{(0)}_{0,2} + r d^{(0)}_{1,1} + r^2 d^{(0)}_{2,0}) \bigr) \biggr] \tilde{\sigma} \\ 
 & & + \sigma' \omega'_{1,\eta} r^2 \sum_\alpha \tfrac{1}{\alpha !} \biggl[
 \opm_M^{\gamma-2} \bigl( (-2t^2) T \partial_\eta^\alpha (d^{(1)}_{0,0} + d^{(1)}_{0,1} + r d^{(1)}_{1,0}) \bigr) \\
 & & - \opm_M^{\gamma-1} \bigl( (-2t^2) T \partial_\eta^\alpha (d^{(1)}_{0,0} + d^{(1)}_{0,1} + r d^{(1)}_{1,0}) \bigr)
 \biggr] \opm_M^{\gamma-1} \bigl( D^\alpha_y( \tfrac{1}{t} Z_1 ) \bigr) \tilde{\sigma} \\
 & & + \sigma' \omega'_{1,\eta} r^2 \sum_\alpha \tfrac{1}{\alpha !} 
 D^\alpha_y(\tfrac{ir C_1}{2t^2}) \biggl[ \opm_M^{\gamma-3}\bigl( (-2t^2) \partial_\eta^\alpha
 (d^{(1)}_{0,0} + d^{(1)}_{0,1} + r d^{(1)}_{1,0}) \bigr) \\ 
 & & - \opm_M^{\gamma-1}\bigl( (-2t^2) \partial_\eta^\alpha
 (d^{(1)}_{0,0} + d^{(1)}_{0,1} + r d^{(1)}_{1,0}) \bigr) \biggr] \tilde{\sigma} \\ 
 & & + \sigma' \omega'_{1,\eta} r^2 
 \tfrac{r^2 C_0}{2t^2} \biggl[ \opm_M^{\gamma-3}\bigl( (-2t^2) d^{(2)}_{0,0} \bigr)  
 - \opm_M^{\gamma-1}\bigl( (-2t^2) d^{(2)}_{0,0} \bigr) \biggr] \tilde{\sigma} \\ 
 & = & -2t^2 \sigma' \omega'_{1,\eta} \biggl[ r \res (d^{(0)}_{0,0},1) 
 M \bigl( \opm_M^{\gamma-1} \bigl( -\tfrac{4}{3t^2} w - \tfrac{1}{6t^2} \Delta_{S^2} + \tfrac{1}{t} Z_2 \bigr) 
\tilde{\sigma} (\cdot) \bigr) (1) \\
 & & + \res (d^{(0)}_{0,0},2) M \bigl( \opm_M^{\gamma-1} \bigl( -\tfrac{4}{3t^2} w - \tfrac{1}{6t^2} \Delta_{S^2} + \tfrac{1}{t} Z_2 \bigr)
 \tilde{\sigma} (\cdot) \bigr) (2) \\
 & & + \bigl( \res (\partial_\eta^\alpha d^{(0)}_{0,2},2) +r \res (\partial_\eta^\alpha d^{(0)}_{1,1},2)
 +r^2 \res (\partial_\eta^\alpha d^{(0)}_{2,0},2) \bigr)
 M \bigl( \opm_M^{\gamma-1} \bigl( D^\alpha_y(-\tfrac{4}{3t^2} w - \tfrac{1}{6t^2} \Delta_{S^2} + \tfrac{1}{t} Z_2) \bigr)
 \tilde{\sigma} (\cdot) \bigr) (2) \\
 & & +r^2 \res (T^{-2}d^{(0)}_{0,0},2) M \bigl( \tfrac{C_2}{2t^2} \tilde{\sigma} (\cdot) \bigr) (2)
 +r \res (T^{-2}d^{(0)}_{0,0},3) M \bigl( \tfrac{C_2}{2t^2} \tilde{\sigma} (\cdot) \bigr) (3)
 + \res (T^{-2}d^{(0)}_{0,0},4) M \bigl( \tfrac{C_2}{2t^2} \tilde{\sigma} (\cdot) \bigr) (4) \\
 & & + \bigl( \res (T^{-2}\partial_\eta^\alpha d^{(0)}_{0,2},4) +r \res (T^{-2}\partial_\eta^\alpha d^{(0)}_{1,1},4)
 +r^2 \res (T^{-2}\partial_\eta^\alpha d^{(0)}_{2,0},4) \bigr) M \bigl( D^\alpha_y (\tfrac{C_2}{2t^2}) \tilde{\sigma} (\cdot) \bigr) (4) \\
 & & +r \res (T d^{(1)}_{0,0},1) M \bigl( \tfrac{1}{t} Z_1 \tilde{\sigma} (\cdot) \bigr) (1) 
 + \sum_\alpha \tfrac{1}{\alpha !} \bigl( r \res (T \partial_\eta^\alpha d^{(1)}_{0,1},1) 
 +r^2 \res (T \partial_\eta^\alpha d^{(1)}_{1,0},1) \bigr)
 M \bigl( D^\alpha_y(\tfrac{1}{t} Z_1) \tilde{\sigma} (\cdot) \bigr) (1) \\
 & & +r^2 \res (d^{(1)}_{0,0},1) M \bigl( \tfrac{iC_1}{2t^2} \tilde{\sigma} (\cdot) \bigr) (1)
 +r \res (d^{(1)}_{0,0},2) M \bigl( \tfrac{iC_1}{2t^2} \tilde{\sigma} (\cdot) \bigr) (2) \\
 & & + \sum_\alpha \tfrac{1}{\alpha !} \bigl( r \res (\partial_\eta^\alpha d^{(1)}_{0,1},2)
 +r^2 \res (\partial_\eta^\alpha d^{(1)}_{1,0},2) \bigr) 
 M \bigl( D^\alpha_y(\tfrac{iC_1}{2t^2}) \tilde{\sigma} (\cdot) \bigr) (2) \\
 & & +r^2 \res (d^{(2)}_{0,0},2) M \bigl( \tfrac{C_0}{2t^2} \tilde{\sigma} (\cdot) \bigr) (2) 
 + {\cal O}(r^3) \biggr],
\end{eqnarray*}
where again in the second step Cauchy's residue theorem has been applied. 

\subsubsection{Green operators from remainders}
In our application, we want to apply the asymptotic parametrix up to second order, i.e., $P \sim P_0 +rP_1 +r^{2}P_2 + \cdots$,
from the left to the shifted Hamiltonian $A$. In the previous
paragraphs, however, this has been done only up to second order
in the asymptotic expansion of the shifted Hamiltonian , i.e.,
$A \sim A_0 +rA_1 +r^{2}A_2 + \cdots$. Therefore, certain Green
operators of second and lower order are still missing which originate from higher order equations in the asymptotic parametrix construction.

Let us first consider contributions from $P_0$, which can be derived from the third order equation
\[
 P_0 r^3 S_3 =0 \quad \mbox{mod} \ L^0_G + L^0_{\gflat} \quad \mbox{with} \ r^3 S_3 := r^3 A_3 + r^4 A_4 + \cdots .
\]
In terms of operator valued symbols this corresponds to 
\[
 \sigma' \omega'_{1,\eta} r^2 \sum_\alpha \tfrac{1}{\alpha !} 
 \partial_\eta^\alpha \bigl(\opm_M^{\gamma-3}(a_0^{(-1)})\bigr) \tilde{\sigma}' 
 \sigma r' D^\alpha_y \bigl(\opm_M^{\gamma-1}(s_3)\bigr) \tilde{\sigma} 
 = g_{0,3} + g_{0,4} \quad \mbox{mod} \ L^0_{\gflat} ,
\]
where the operator valued Green symbols $g_{0,3}$ and $g_{0,4}$ can be obtained
along the some line of arguments as presented in the previous paragraphs. Let us first consider the type-$b$ Green operator symbol
\begin{eqnarray*}
 g_{0,3} & = & \sigma' \omega'_{1,\eta} r^3 \sum_\alpha \tfrac{1}{\alpha !} \biggl[
 \opm_M^{\gamma-4}\bigl( (-2t^2) T^{-1} \partial_\eta^\alpha (d^{(0)}_{0,0} + d^{(0)}_{0,2} + r d^{(0)}_{1,1} + r^2 d^{(0)}_{2,0}) \bigr) \\
 & & - \opm_M^{\gamma-1}\bigl( (-2t^2) T^{-1} \partial_\eta^\alpha 
(d^{(0)}_{0,0} + d^{(0)}_{0,2} + r d^{(0)}_{1,1} + r^2 d^{(0)}_{2,0}) \bigr) \biggr]
 \opm_M^{\gamma-1} \bigl( D^\alpha_y s_3 \bigr) \tilde{\sigma} \\
 & = & -2t^2 \sigma' \omega'_{1,\eta} \biggl[ r^2 \res (T^{-1} d^{(0)}_{0,0},1)
 M \bigl( \opm_M^{\gamma-1} \bigl( s_3 \bigr) \tilde{\sigma} (\cdot) \bigr) (1)
 +r \res (T^{-1} d^{(0)}_{0,0},2) M \bigl( \opm_M^{\gamma-1} \bigl( s_3 \bigr) \tilde{\sigma} (\cdot) \bigr) (2) \\
 & & + \res (T^{-1} d^{(0)}_{0,0},3) M \bigl( \opm_M^{\gamma-1} \bigl( s_3 \bigr) \tilde{\sigma} (\cdot) \bigr) (3) 
 + \sum_\alpha \tfrac{1}{\alpha !} \bigl( \res (T^{-1} \partial_\eta^\alpha d^{(0)}_{0,2},3) 
 +r \res (T^{-1} \partial_\eta^\alpha d^{(0)}_{1,1},3) \\
 & & +r^2 \res (T^{-1} \partial_\eta^\alpha d^{(0)}_{2,0},3) \bigr)
 M \bigl( \opm_M^{\gamma-1} \bigl( D^\alpha_y s_3 \bigr) \tilde{\sigma} (\cdot) \bigr) (3) + {\cal O}(r^3) \biggr] ,
\end{eqnarray*}
where once again Cauchy's residue theorem has been applied. 
Similarly, the type-$a$ Green operator symbol also follows from this calculation, i.e.,
\begin{eqnarray*}
 g_{0,4} & = & \sigma' \omega'_{1,\eta} \sum_\alpha \tfrac{1}{\alpha !} 
 \opm_M^{\gamma-1}\bigl( (-2t^2) T^2 \partial_\eta^\alpha (d^{(0)}_{0,0} + d^{(0)}_{0,2} + r d^{(0)}_{1,1} + r^2 d^{(0)}_{2,0}) \bigr)
 \bigl(\tilde{\sigma}' \sigma -1 \bigr)
 \opm_M^{\gamma-1} \bigl( r'^3 D^\alpha_y s_3 \bigr) \tilde{\sigma}  .
\end{eqnarray*}
Actually, as already mentioned before, these type-$a$ and $b$ Green operator symbols are of second and lower order. 

Next, let us consider the additional Green operators originating from $P_1$, which can be obtained from the third order equation
\[
 r P_1 r^2 S_2 = 0 \quad \mbox{mod} \ L^0_G + L^0_{\gflat} \quad \mbox{with} \ r^2 S_2 := r^2 A_2 + r^3 A_3 + \cdots .
\]
In terms of operator valued symbols this corresponds to 
\[
 \sigma' \omega'_{1,\eta} r^3 \sum_\alpha \tfrac{1}{\alpha !} 
 \partial_\eta^\alpha \bigl(\opm_M^{\gamma-3}(a_1^{(-1)})\bigr) \tilde{\sigma}' 
 \sigma D^\alpha_y \bigl(\opm_M^{\gamma-1}(s_2)\bigr) \tilde{\sigma} 
 = g_{1,3} + g_{1,4} \quad \mbox{mod} \ L^0_{\gflat}
\]
Like before let us first consider the type-$b$ Green operator symbol
\begin{eqnarray*}
 g_{1,3} & = & \sigma' \omega'_{1,\eta} r^3 \sum_\alpha \tfrac{1}{\alpha !} \bigl[
 \opm_M^{\gamma-3}\bigl( (-2t^2) \partial_\eta^\alpha (d^{(1)}_{0,0} + d^{(1)}_{0,1} + r d^{(1)}_{1,0}) \bigr) \\
 & & - \opm_M^{\gamma-1}\bigl( (-2t^2) \partial_\eta^\alpha 
(d^{(1)}_{0,0} + d^{(1)}_{0,1} + r d^{(1)}_{0,1}) \bigr) \biggr]
 \opm_M^{\gamma-1} \bigl( D^\alpha_y s_2 \bigr) \tilde{\sigma} \\
 & = & -2t^2 \sigma' \omega'_{1,\eta} \biggl[ r^2 \res (d^{(1)}_{0,0},1) M \bigl( 
 \opm_M^{\gamma-1} \bigl( s_2 \bigr) \tilde{\sigma} (\cdot) \bigr) (1)
 +r \res (d^{(1)}_{0,0},2) M \bigl( \opm_M^{\gamma-1} \bigl( s_2 \bigr) \tilde{\sigma} (\cdot) \bigr) (2) \\
 & & + \sum_\alpha \tfrac{1}{\alpha !} \bigl( r \res (\partial_\eta^\alpha d^{(1)}_{0,1},2)
 + r^2 \res (\partial_\eta^\alpha d^{(1)}_{1,0},2) \bigr) 
 M \bigl( \opm_M^{\gamma-1} \bigl( D^\alpha_y s_2 \bigr) \tilde{\sigma} (\cdot) \bigr) (2) 
 + {\cal O}(r^3) \biggr]
\end{eqnarray*}
and subsequently the type-$a$ Green operator symbol 
\begin{eqnarray*}
 g_{1,4} & = & \sigma' \omega'_{1,\eta} r \sum_\alpha \tfrac{1}{\alpha !} 
 \opm_M^{\gamma-1}\bigl( (-2t^2) T^2 \partial_\eta^\alpha (d^{(1)}_{0,0} + d^{(1)}_{0,1} + r d^{(1)}_{1,0}) \bigr)
 \bigl(\tilde{\sigma}' \sigma -1 \bigr)
 \opm_M^{\gamma-1} \bigl( r'^2 D^\alpha_y s_2 \bigr) \tilde{\sigma}  .
\end{eqnarray*}

Finally, let us consider the additional Green operators originating from $P_2$, which can be obtained from the third order equation
\[
 r^2 P_2 r S_1 = 0 \quad \mbox{mod} \ L^0_G + L^0_{\gflat} \quad \mbox{with} \ r S_1 := r A_1 + r^2 A_2 + \cdots .
\]
In terms of operator valued symbols this corresponds to
\[
 \sigma' \omega'_{1,\eta} r^4 \sum_\alpha \tfrac{1}{\alpha !} 
 \partial_\eta^\alpha \bigl(\opm_M^{\gamma-3}(a_2^{(-1)})\bigr) \tilde{\sigma}' 
 \sigma r'^{-1} D^\alpha_y \bigl(\opm_M^{\gamma-1}(s_1)\bigr) \tilde{\sigma} 
 = g_{2,3} + g_{2,4} \quad \mbox{mod} \ L^0_{\gflat} .
\]
The type-$b$ Green operator symbol is given by
\begin{eqnarray*}
 g_{2,3} & = & \sigma' \omega'_{1,\eta} r^3 \bigl[
 \opm_M^{\gamma-2}\bigl( (-2t^2) T d^{(2)}_{0,0} \bigr) 
 - \opm_M^{\gamma-1}\bigl( (-2t^2) T d^{(2)}_{0,0} \bigr) \biggr] 
 \opm_M^{\gamma-1} \bigl( s_1 \bigr) \tilde{\sigma} \\
 & = & -2t^2 \sigma' \omega'_{1,\eta} \biggl[ 
 r^2 \res (T d^{(2)}_{0,0},1) M \bigl( \opm_M^{\gamma-1} \bigl( s_1 \bigr) \tilde{\sigma} (\cdot) \bigr) (1) 
 + {\cal O}(r^3) \biggr]
\end{eqnarray*}
and the type-$a$ Green operator symbol by
\begin{eqnarray*}
 g_{2,4} & = & \sigma' \omega'_{1,\eta} r^2 
 \opm_M^{\gamma-1}\bigl( (-2t^2) T^2 d^{(2)}_{0,0} \bigr)
 \bigl(\tilde{\sigma}' \sigma -1 \bigr)
 \opm_M^{\gamma-1} \bigl( r' s_1 \bigr) \tilde{\sigma} .
\end{eqnarray*}

\subsubsection{Summing up the $a$-type Green operator symbols}
\label{sumatypeG} 
So far, we have applied Cauchy's residue theorem only to type-$b$ Green operator symbols in order to fully exploit the formulas for these operators. It is the purpose of this paragraph 
to perform these calculations for type-$a$ Green operator symbols as well. Before applying Cauchy's residue theorem, it is, however,
convenient to first sum up all $a$-type Green operator symbols
into a single operator, i.e.,
\begin{eqnarray*}
 g_a & := & g_{0,1} + g_{0,4} + g_{1,1} + g_{1,4} + g_{2,1} + g_{2,4} \\
 & = & \sigma' \omega'_{1,\eta} \sum_\alpha \tfrac{1}{\alpha !} 
 \opm_M^{\gamma-1}\bigl( (-2t^2) T^2 \partial_\eta^\alpha (d^{(0)}_{0,0} + d^{(0)}_{0,2} + r d^{(0)}_{1,1} + r^2 d^{(0)}_{2,0}) \\
 & & + r (d^{(1)}_{0,0} + d^{(1)}_{0,1} + r d^{(1)}_{1,0})  +r^2 d^{(2)}_{0,0} \bigr)
 \bigl(\tilde{\sigma}' \sigma -1 \bigr)
 \opm_M^{\gamma-1} \bigl( D^\alpha_y a \bigr) \tilde{\sigma} \\
 & = & -2t^2 \sigma' \omega'_{1,\eta} \biggl[
 \res (T^2 d^{(0)}_{0,0},0)
 M \bigl( (\tilde{\sigma}' \sigma -1) \opm_M^{\gamma-1} \bigl( a \bigr) \tilde{\sigma} (\cdot) \bigr) (0) \\
 & & +r \res (T^2 d^{(0)}_{0,0},-1)
 M \bigl( (\tilde{\sigma}' \sigma -1) \opm_M^{\gamma-1} \bigl( a \bigr) \tilde{\sigma} (\cdot) \bigr) (-1) \\
 & & +r^2 \res (T^2 d^{(0)}_{0,0},-2)
 M \bigl( (\tilde{\sigma}' \sigma -1) \opm_M^{\gamma-1} \bigl( a \bigr) \tilde{\sigma} (\cdot) \bigr) (-2) \\
 & & + \sum_\alpha \tfrac{1}{\alpha !} \bigl( \res (T^2 \partial_\eta^\alpha d^{(0)}_{0,2},0)
 +r \res (T^2 \partial_\eta^\alpha d^{(0)}_{1,1},0) \\
 & & +r^2 \res (T^2 \partial_\eta^\alpha d^{(0)}_{2,0},0) \bigl)
 M \bigl( (\tilde{\sigma}' \sigma -1) \opm_M^{\gamma-1} \bigl( D^\alpha_y a \bigr) \tilde{\sigma} (\cdot) \bigr) (0) \\
 & & +r \res (T^2 d^{(1)}_{0,0},0)  
 M \bigl( (\tilde{\sigma}' \sigma -1) \opm_M^{\gamma-1} \bigl( a \bigr) \tilde{\sigma} (\cdot) \bigr) (0) \\
 & & +r^2 \res (T^2 d^{(1)}_{0,0},-1) 
 M \bigl( (\tilde{\sigma}' \sigma -1) \opm_M^{\gamma-1} \bigl( a \bigr) \tilde{\sigma} (\cdot) \bigr) (-1) \\
 & & + \sum_\alpha \tfrac{1}{\alpha !} \bigl( r \res (T^2 \partial_\eta^\alpha d^{(1)}_{0,1},0)
 +r^2 \res (T^2 \partial_\eta^\alpha d^{(1)}_{1,0},0) \\
 & & +r^2 \res (T^2 \partial_\eta^\alpha d^{(2)}_{0,0},0) \bigr)
 M \bigl( (\tilde{\sigma}' \sigma -1) \opm_M^{\gamma-1} \bigl( D^\alpha_y a \bigr) \tilde{\sigma} (\cdot) \bigr) (0) 
 + {\cal O}(r^3) \biggr] .
\end{eqnarray*}

\subsubsection{Calculation of the residues}
Up to this point, we have obtained explicit formulas for all
Green operator symbols which contribute up to second order in the asymptotics. What is only missing are the actual values of 
residues $\res(T^m d^{(k)}_{n,j}, w_0)$ at certain poles $w_0$ of the shifted meromophic operator valued symbols 
$T^m d^{(k)}_{n,j}$ of the asymptotic parametrix. The calculation
of these residues is rather straightforward. However, for the convenience of the reader and in order to improve the comprehensibility of our calculations, we list values of all required residues in Appendix \ref{Appendixresidues}.

\subsection{Contributions of specific angular momenta to Green operators}
\label{4.3}
It is a particularly pleasant feature of our asymptotic expansion
to provide a resolution of the wavefunction near an edge
not only with respect to the distance variable $r$ but also
with respect to the relative angular momentum of the coalescing
particles. There is a well known constraint, cf.~\cite{HO2S94}, on the relative angular momentum $l$ of two particles with respect to the asymptotic order,
i.e., only angular momenta $l \leq k$ contribute to the $r^{k}$-term of the asymptotic expansion. 
This constraint is also an immediate consequence of our calculations and it is convenient to extract from our formulas of the Green operators, contributions of specific angular momenta. 

Within the present work we consider the asymptotic expansion up
to second order. Therefore, one can derive asymptotic information
for angular momenta $l=0,1,2$ with corresponding projection operators ${\cal  P}_0$, ${\cal  P}_1$ and ${\cal  P}_2$.
For the entire $a$-type Green operator, defined in Section
(\ref{sumatypeG}), the individual angular momentum contributions are given by
\begin{eqnarray*}
{\cal  P}_0 g_a & = & 2t^2 \sigma' \omega'_{1,\eta} \biggl[ \bigl(
 1 +rtZ_1 +r^2 \bigl( -2 +\tfrac{(tZ_1)^2}{3} +\tfrac{tZ_2}{3} \bigr) \bigr)
 M \bigl( (\tilde{\sigma}' \sigma -1) \opm_M^{\gamma-1} \bigl( a \bigr) \tilde{\sigma} (\cdot) \bigr) (0) \\
 & & + \tfrac{1}{6} r^2 \sum_\alpha \tfrac{1}{\alpha !} \partial_\eta^\alpha \bigl( C_0 - 4it\tau + i C_1 \bigr) 
 M \bigl( (\tilde{\sigma}' \sigma -1) \opm_M^{\gamma-1} \bigl( D^\alpha_y a \bigr) \tilde{\sigma} (\cdot) \bigr) (0) 
 \biggr] + {\cal O}(r^3), \\
 {\cal  P}_1 g_a & = & 2t^2 \sigma' \omega'_{1,\eta} \bigl(
 \tfrac{1}{3} r + \tfrac{1}{6}r^2tZ_1 \bigr) \bigr[
 M \bigl( (\tilde{\sigma}' \sigma -1) \opm_M^{\gamma-1} \bigl( a \bigr) \tilde{\sigma} (\cdot) \bigr) (-1) \\
 & & -tZ_1 M \bigl( (\tilde{\sigma}' \sigma -1) \opm_M^{\gamma-1} \bigl( a \bigr) \tilde{\sigma} (\cdot) \bigr) (0) \bigr]
 + {\cal O}(r^3), \\
 {\cal  P}_2 g_a & = & 2t^2 \sigma' \omega'_{1,\eta} \biggl[ \tfrac{1}{5} r^2 \bigl[
 M \bigl( (\tilde{\sigma}' \sigma -1) \opm_M^{\gamma-1} \bigl( a \bigr) \tilde{\sigma} (\cdot) \bigr) (-2) 
 -\tfrac{1}{2} tZ_1 M \bigl( (\tilde{\sigma}' \sigma -1) \opm_M^{\gamma-1} \bigl( a \bigr) \tilde{\sigma} (\cdot) \bigr) (-1) \\
 & & + \tfrac{1}{6} \bigl( 10 +(tZ_1)^2 -2tZ_2 \bigr) 
 M \bigl( (\tilde{\sigma}' \sigma -1) \opm_M^{\gamma-1} \bigl( a \bigr) \tilde{\sigma} (\cdot) \bigr) (0) \bigr] \\
 & & - \tfrac{1}{30} r^2 \sum_\alpha \tfrac{1}{\alpha !} \partial_\eta^\alpha \bigl( C_0 - 4it\tau + i C_1 \bigr) 
 M \bigl( (\tilde{\sigma}' \sigma -1) \opm_M^{\gamma-1} \bigl( D^\alpha_y a \bigr) \tilde{\sigma} (\cdot) \bigr) (0) 
 \biggr] + {\cal O}(r^3) .
\end{eqnarray*}
The remaining $b$-type Green operators are treated separately, with individual angular momentum contributions given by
\begin{eqnarray*}
 {\cal  P}_0 g_{0,2} & = & 2t^2 \sigma' \omega'_{1,\eta} \biggl[ \bigl( 1 -\tfrac{1}{3} r^2 \bigr)
  M \bigl( \tfrac{r^2C_0}{2t^2} \tilde{\sigma} (\cdot) \bigr) (0) \\
 &  & + \tfrac{1}{6} r^2 \sum_\alpha \tfrac{1}{\alpha !} \partial_\eta^\alpha \bigl( C_0 - 4it\tau \bigr)
 M \bigl( D^\alpha_y (\tfrac{r^2C_0}{2t^2}) \tilde{\sigma} (\cdot) \bigr) (0)
 \biggr] + {\cal O}(r^3), 
\\
 {\cal  P}_1 g_{0,2} & = & 2t^2 \sigma' \omega'_{1,\eta} \tfrac{1}{3} r 
 M \bigl( \tfrac{r^2C_0}{2t^2} \tilde{\sigma} (\cdot) \bigr) (-1) 
 + {\cal O}(r^3), 
\\
 {\cal  P}_2 g_{0,2} & = & 2t^2 \sigma' \omega'_{1,\eta} \biggl[ \tfrac{1}{15} r^2 
  M \bigl( \tfrac{r^2C_0}{2t^2} \tilde{\sigma} (\cdot) \bigr) (0) \\
 &  & - \tfrac{1}{30} r^2 \sum_\alpha \tfrac{1}{\alpha !} \partial_\eta^\alpha \bigl( C_0 - 4it\tau \bigr)
 M \bigl( D^\alpha_y (\tfrac{r^2C_0}{2t^2}) \tilde{\sigma} (\cdot) \bigr) (0)
 \biggr] + {\cal O}(r^3) , 
\end{eqnarray*}
\begin{eqnarray*}
 {\cal  P}_0 g_{0,3} & = & 2t^2 \sigma' \omega'_{1,\eta} \biggl[ \bigl( 1 -\tfrac{1}{3} r^2 \bigr)
 M \bigl( \opm_M^{\gamma-1} \bigl( r^3 s_3 \bigr) \tilde{\sigma} (\cdot) \bigr) (0) \\
 &  & + \tfrac{1}{6} r^2 \sum_\alpha \tfrac{1}{\alpha !} \partial_\eta^\alpha \bigl( C_0 - 4it\tau \bigr)
 M \bigl( \opm_M^{\gamma-1} \bigl( r^3 D^\alpha_y s_3 \bigr) \tilde{\sigma} (\cdot) \bigr) (0)
 \biggr] + {\cal O}(r^3), 
\\
{\cal  P}_1 g_{0,3} & = & 2t^2 \sigma' \omega'_{1,\eta} \tfrac{1}{3} r 
 M \bigl( \opm_M^{\gamma-1} \bigl( r^3 s_3 \bigr) \tilde{\sigma} (\cdot) \bigr) (-1) 
 + {\cal O}(r^3), 
\\
{\cal  P}_2 g_{0,3} & = & 2t^2 \sigma' \omega'_{1,\eta} \biggl[ \tfrac{1}{5} r^2 
 M \bigl( \opm_M^{\gamma-1} \bigl( r^3 s_3 \bigr) \tilde{\sigma} (\cdot) \bigr) (-2) 
 + \tfrac{1}{15} r^2 M \bigl( \opm_M^{\gamma-1} \bigl( r^3 s_3 \bigr) \tilde{\sigma} (\cdot) \bigr) (0) \\
 &  & - \tfrac{1}{30} r^2 \sum_\alpha \tfrac{1}{\alpha !} \partial_\eta^\alpha \bigl( C_0 - 4it\tau \bigr)
 M \bigl( \opm_M^{\gamma-1} \bigl( r^3 D^\alpha_y s_3 \bigr) \tilde{\sigma} (\cdot) \bigr) (0)
 \biggr] + {\cal O}(r^3) , 
\end{eqnarray*}
\begin{eqnarray*}
{\cal  P}_0 g_{1,2} & = & 2t^2 \sigma' \omega'_{1,\eta} \biggl[ \bigl( rtZ_1 -\tfrac{5}{3} r^2 \bigr)
 M \bigl( \tfrac{r^2C_0}{2t^2} \tilde{\sigma} (\cdot) \bigr) (0) 
 + \tfrac{1}{6} r^2 \sum_\alpha \tfrac{1}{\alpha !} i\partial_\eta^\alpha C_1 
 M \bigl( D^\alpha_y (\tfrac{r^2C_0}{2t^2}) \tilde{\sigma} (\cdot) \bigr) (0) \\
 & & + \bigl( 1 -\tfrac{1}{3} r^2 \bigr) 
 M \bigl( \opm_M^{\gamma-1} \bigl( r a_1 \bigr) \tilde{\sigma} (\cdot) \bigr) (0)
 + \tfrac{1}{6} r^2 \sum_\alpha \tfrac{1}{\alpha !} \partial_\eta^\alpha \bigl( C_0 - 4it\tau \bigr)
 M \bigl( \opm_M^{\gamma-1} \bigl( r D^\alpha_y a_1 \bigr) \tilde{\sigma} (\cdot) \bigr) (0)
 \biggr] + {\cal O}(r^3), 
\\
{\cal  P}_1 g_{1,2} & = & 2t^2 \sigma' \omega'_{1,\eta} \biggl[ \tfrac{1}{6} r^2 tZ_1
 M \bigl( \tfrac{r^2C_0}{2t^2} \tilde{\sigma} (\cdot) \bigr) (-1)
 - \tfrac{1}{3} r tZ_1 M \bigl( \tfrac{r^2C_0}{2t^2} \tilde{\sigma} (\cdot) \bigr) (0)
 + \tfrac{1}{3} r M \bigl( \tfrac{r^2iC_1}{2t^2} \tilde{\sigma} (\cdot) \bigr) (-1)
 \biggr] + {\cal O}(r^3), 
\\
 {\cal  P}_2 g_{1,2} & = & 2t^2 \sigma' \omega'_{1,\eta} \biggl[ -\tfrac{1}{10} r^2 tZ_1
 M \bigl( \tfrac{r^2C_0}{2t^2} \tilde{\sigma} (\cdot) \bigr) (-1) 
 + \tfrac{1}{3} r^2 M \bigl( \tfrac{r^2C_0}{2t^2} \tilde{\sigma} (\cdot) \bigr) (0)
 - \tfrac{1}{30} r^2 \sum_\alpha \tfrac{1}{\alpha !} i\partial_\eta^\alpha C_1 
 M \bigl( D^\alpha_y (\tfrac{r^2C_0}{2t^2}) \tilde{\sigma} (\cdot) \bigr) (0) \\
 & & + \tfrac{1}{15} r^2  
 M \bigl( \opm_M^{\gamma-1} \bigl( r a_1 \bigr) \tilde{\sigma} (\cdot) \bigr) (0)
 - \tfrac{1}{30} r^2 \sum_\alpha \tfrac{1}{\alpha !} \partial_\eta^\alpha \bigl( C_0 - 4it\tau \bigr)
 M \bigl( \opm_M^{\gamma-1} \bigl( r D^\alpha_y a_1 \bigr) \tilde{\sigma} (\cdot) \bigr) (0)
 \biggr] + {\cal O}(r^3) ,
\end{eqnarray*}
\begin{eqnarray*}
 {\cal  P}_0 g_{1,3} & = & 2t^2 \sigma' \omega'_{1,\eta} \biggl[ \bigl( rtZ_1 -\tfrac{5}{3} r^2 \bigr)
 M \bigl( \opm_M^{\gamma-1} \bigl( r^2 s_2 \bigr) \tilde{\sigma} (\cdot) \bigr) (0) 
 + \tfrac{1}{6} r^2 \sum_\alpha \tfrac{1}{\alpha !} i\partial_\eta^\alpha C_1 
 M \bigl( \opm_M^{\gamma-1} \bigl( r^2 D^\alpha_y s_2 \bigr) \tilde{\sigma} (\cdot) \bigr) (0)
 \biggr]\\
 & & + {\cal O}(r^3),
\\
 {\cal  P}_1 g_{1,3} & = & 2t^2 \sigma' \omega'_{1,\eta} \biggl[ \tfrac{1}{6} r^2 tZ_1
 M \bigl( \opm_M^{\gamma-1} \bigl( r^2 s_2 \bigr) \tilde{\sigma} (\cdot) \bigr) (-1)
 - \tfrac{1}{3} r tZ_1 M \bigl( \opm_M^{\gamma-1} \bigl( r^2 s_2 \bigr) \tilde{\sigma} (\cdot) \bigr) (0)
 \biggr] + {\cal O}(r^3), 
\\
{\cal  P}_2 g_{1,3} & = & 2t^2 \sigma' \omega'_{1,\eta} \biggl[ -\tfrac{1}{10} r^2 tZ_1
 M \bigl( \opm_M^{\gamma-1} \bigl( r^2 s_2 \bigr) \tilde{\sigma} (\cdot) \bigr) (-1) 
 + \tfrac{1}{3} r^2 M \bigl( \opm_M^{\gamma-1} \bigl( r^2 s_2 \bigr) \tilde{\sigma} (\cdot) \bigr) (0) \\
 & & - \tfrac{1}{30} r^2 \sum_\alpha \tfrac{1}{\alpha !} i\partial_\eta^\alpha C_1 
 M \bigl( \opm_M^{\gamma-1} \bigl( r^2 D^\alpha_y s_2 \bigr) \tilde{\sigma} (\cdot) \bigr) (0) 
 \biggr]  + {\cal O}(r^3) ,
\end{eqnarray*}
\begin{eqnarray*}
{\cal  P}_0 g_{2,2} & = & 2t^2 \sigma' \omega'_{1,\eta} \biggl[ \bigl( 1 -\tfrac{1}{3} r^2 \bigr) 
 M \bigl( \opm_M^{\gamma-1} \bigl( r^2 a_2 \bigr) \tilde{\sigma} (\cdot) \bigr) (0) \\
 & & + \tfrac{1}{6} r^2 \sum_\alpha \tfrac{1}{\alpha !} \partial_\eta^\alpha \bigl( C_0 - 4it\tau \bigr)
 M \bigl( \opm_M^{\gamma-1} \bigl( r^2 D^\alpha_y a_2 \bigr) \tilde{\sigma} (\cdot) \bigr) (0) \\
 & & + \bigl( rtZ_1 -\tfrac{5}{3} r^2 \bigr)
 M \bigl( \opm_M^{\gamma-1} \bigl( r a_1 \bigr) \tilde{\sigma} (\cdot) \bigr) (0)
 + \tfrac{1}{6} r^2 \sum_\alpha \tfrac{1}{\alpha !} i\partial_\eta^\alpha C_1
 M \bigl( \opm_M^{\gamma-1} \bigl( r a_1 \bigr) \tilde{\sigma} (\cdot) \bigr) (0) \\
 & & + r^2 \bigl( \tfrac{1}{3} (tZ_1)^2 +\tfrac{1}{3} tZ_2 \bigr) 
 M \bigl( \tfrac{r^2C_0}{2t^2} \tilde{\sigma} (\cdot) \bigr) (0)
 \biggr] + {\cal O}(r^3), 
\\
{\cal  P}_1 g_{2,2} & = & 2t^2 \sigma' \omega'_{1,\eta} \biggl[ \tfrac{1}{3} r 
 M \bigl( \opm_M^{\gamma-1} \bigl( r^2 a_2 \bigr) \tilde{\sigma} (\cdot) \bigr) (-1) 
 - \tfrac{1}{3} r tZ_1 M \bigl( \opm_M^{\gamma-1} \bigl( r a_1 \bigr) \tilde{\sigma} (\cdot) \bigr) (0) \\
 & & + \tfrac{1}{6} r^2 tZ_1 M \bigl( \tfrac{r^2iC_1}{2t^2} \tilde{\sigma} (\cdot) \bigr) (-1) 
 - \tfrac{1}{6} r^2 (tZ_1)^2 M \bigl( \tfrac{r^2C_0}{2t^2} \tilde{\sigma} (\cdot) \bigr) (0)
 \biggr] + {\cal O}(r^3), 
\\
 {\cal  P}_2 g_{2,2} & = & 2t^2 \sigma' \omega'_{1,\eta} \biggl[ \tfrac{1}{15} r^2  
 M \bigl( \opm_M^{\gamma-1} \bigl( r^2 a_2 \bigr) \tilde{\sigma} (\cdot) \bigr) (0) \\
 & & - \tfrac{1}{30} r^2 \sum_\alpha \tfrac{1}{\alpha !} \partial_\eta^\alpha \bigl( C_0 - 4it\tau \bigr)
 M \bigl( \opm_M^{\gamma-1} \bigl( r^2 D^\alpha_y a_2 \bigr) \tilde{\sigma} (\cdot) \bigr) (0) \\ 
 & & + \tfrac{1}{5} r^2 M \bigl( \tfrac{r^4C_2}{2t^2} \tilde{\sigma} (\cdot) \bigr) (-2) 
 + \tfrac{1}{3} r^2 M \bigl( \opm_M^{\gamma-1} \bigl( r a_1 \bigr) \tilde{\sigma} (\cdot) \bigr) (0) \\
 & & - \tfrac{1}{30} r^2 \sum_\alpha \tfrac{1}{\alpha !} i\partial_\eta^\alpha C_1 
 M \bigl( \opm_M^{\gamma-1} \bigl( r D^\alpha_y a_1 \bigr) \tilde{\sigma} (\cdot) \bigr) (0) 
 - \tfrac{1}{10} r^2 tZ_1 M \bigl( \tfrac{r^2iC_1}{2t^2} \tilde{\sigma} (\cdot) \bigr) (-1) \\
 & & + r^2 \bigl( \tfrac{1}{30} (tZ_1)^2 -\tfrac{1}{15} (1+tZ_2) \bigr) 
 M \bigl( \tfrac{r^2C_0}{2t^2} \tilde{\sigma} (\cdot) \bigr) (0)
 \biggr] + {\cal O}(r^3) ,
\end{eqnarray*}
\begin{eqnarray*}
 {\cal  P}_0 g_{2,3} & = & 2t^2 \sigma' \omega'_{1,\eta} \biggl[ r^2 \bigl( \tfrac{1}{3} (tZ_1)^2 + \tfrac{1}{3} tZ_2 \bigr) 
 M \bigl( \opm_M^{\gamma-1} \bigl( r s_1 \bigr) \tilde{\sigma} (\cdot) \bigr) (0) 
 + {\cal O}(r^3) \biggr],
\\
 {\cal  P}_1 g_{2,3} & = & 2t^2 \sigma' \omega'_{1,\eta} \biggl[ - \tfrac{1}{6} r^2 (tZ_1)^2
 M \bigl( \opm_M^{\gamma-1} \bigl( r s_1 \bigr) \tilde{\sigma} (\cdot) \bigr) (0) 
 + {\cal O}(r^3) \biggr],
\\
 {\cal  P}_2 g_{2,3} & = & 2t^2 \sigma' \omega'_{1,\eta} \biggl[ r^2 \bigl( \tfrac{1}{30} (tZ_1)^2 - \tfrac{1}{15} (1+tZ_2) \bigr) 
 M \bigl( \opm_M^{\gamma-1} \bigl( r s_1 \bigr) \tilde{\sigma} (\cdot) \bigr) (0) 
 + {\cal O}(r^3) \biggr] .
\end{eqnarray*}

Finally, the individual angular momentum resolved symbols
of the Green operators can be arranged together in the form
of Theorem \ref{t1}, thus finishing its proof.

\section{Acknowledgement}
Financial support from the Deutsche Forschungsgemeinschaft DFG (Grant No.~HA 5739/3-1) is gratefully acknowledged.

\appendix
\vspace{1cm}
\noindent {\Large {\bf Appendix}}

\section{$\varepsilon$-regularization of parametrix and Green operator}
\label{epsreg}
The motivation for this regularization procedure is to simplify the construction and evaluation of the asymptotic parametrix and corresponding Green operators.This can be achieved by attaching a scaling parameter $\varepsilon$ to certain cut-off functions and  
by adding an appropriate Green operator to the equation for the 
parametrix. Let us first outline the basic idea in an informal
manner for an eigenvalue problem of the general form $Au_i=0$, where
$A=H-E_i$ corresponds to a Hamiltonian shifted by one its eigenvalues.
We want to recall, that the left parametrix equation  
\[
 PAu = \bigl( 1+G \bigr) u
\]
yields the asymptotic behaviour of the eigenfunction via $u_i=-Gu_i$. Now we add a parameter
dependent Green operator $G_\varepsilon^{0}$, which in a certain weak sense (to bespecified below) vanishes for $\varepsilon \rightarrow 0$, to the left hand side
of the parametrix equation and introduce scaling parameters in
certain cut-off functions which results in a modified equation
\begin{equation}
 P_{\varepsilon} A u + G_{\varepsilon}^{(0)}u = \bigl( 1+G_{\varepsilon} \bigr) u ,
\label{PAepsilon}
\end{equation}
where a parameter dependent parametrix $P_{\varepsilon}$ appears on the left and a modified Green operator $G_{\varepsilon}$ on the right hand side of the equation. With these modifications, the asymptotic behaviour follows from 
\begin{equation}
 u_i= \bigl( G_{\varepsilon}^{(0)} -G_{\varepsilon} \bigr) u_i .
\label{Gepsilon} 
\end{equation}
In the generic case, it is not possible to perform the limit
$\varepsilon \rightarrow 0$ in Eq.~(\ref{PAepsilon}) with respect to a given norm $\| \cdot \|$. However we are only interested in some particular functions $u$, i.e., eigenfunctions
which have special regularity and decay properties. In these particular cases
it is actually possible to perform the limit $\varepsilon \rightarrow 0$ with respect to the norm $\| \cdot \|_{{\cal W}^{s,\gamma}}$ and to get a limitting equation of the form
\begin{equation}
 P_0 A u = \bigl( 1+G_0 \bigr) u ,
\label{PA0}
\end{equation}
with
\begin{equation}
 P_0 A u := \lim_{\varepsilon \rightarrow 0} P_{\varepsilon} A u
 \quad \mbox{and} \quad
 G_0 u := \lim_{\varepsilon \rightarrow 0} G_{\varepsilon} u , 
\label{PGlimit}
\end{equation}
where by construction, we have assumed
\[
 \lim_{\varepsilon \rightarrow 0} G_{\varepsilon}^{(0)} u =0 .
\] 
The asymptotic behaviour of the eigenfunction $u_i$ can therefore be obtained from the equation 
\[
 -u_i = G_0 u_i .
\]
 
In our specific applications we consider additional Green operators 
\begin{equation}
 G_{\varepsilon}^{(0)}u = \int \dbar \eta \, e^{iy \eta} g_{\varepsilon}^{(0)}(y,\eta) F_{y \rightarrow \eta}
 u 
\label{Geps0}
\end{equation}
with operator valued symbols
$g_{\varepsilon}^{(0)}(y,\eta) \in R_G^{-m}(Y \times \mathbb{R}^3,{\boldsymbol g})$ for $m=0,1,2,\cdots$, 
which are twisted homogeneous of order $m$, i.e.,
\[
 \kappa_\lambda g_{\varepsilon}^{(0)}(y,\eta) \kappa^{-1}_\lambda u = \lambda^m g_{\varepsilon}^{(0)}(y,\lambda \eta) u.
\]
In particular let us assume a Green operator symbol of the following form 
\begin{equation}
g_{\varepsilon}^{(0)}(y,\eta) = \omega'_{1,\varepsilon\eta} r^{2+k} \opm_M^{\gamma-3}(d_{k,j})(y,\eta) 
 (1-\omega'_{0,\varepsilon\eta}) 
 a_i(y,\eta) \omega_{0,\varepsilon\eta} ,
\label{geps0}
\end{equation}
with homogeneous parameter dependent differential operators $a_i(y,\eta) \in \mbox{Diff}^i_{deg}(X^{\wedge})$, $i=0,1,2$, on the right
and Mellin pseudo-differential operator with homogeneous parameter dependent symbol $d_{k,j}(y,\eta)$ of order $j$ on the left side.  
Such type of Green operator is twisted homogeneous of order $m=2-i+k$.

\begin{definition}
Let us denote a particular function $u \in {\cal W}^{s}(\mathbb{R}^q, {\cal K}^{s, \gamma}(X^\wedge))$ to be 
$\varepsilon$-regularizable with respect to $G_{\varepsilon}^{(0)}$ if
\[
 \lim_{\varepsilon \rightarrow 0} \| G_{\varepsilon}^{(0)}u \|_{{\cal W}^{s,\gamma}} =0
\]
and the limits {\rm (\ref{PGlimit})} exist,
where we assume a Green operator symbol, cf.~{\rm (\ref{Geps0})}, of the type {\rm (\ref{geps0})} discussed before.
\end{definition}

For our purposes, the following lemma is sufficient. 
 
\begin{lemma}
Any function $u \in {\cal W}^{\infty}_{\mbox{\footnotesize comp}}(\mathbb{R}^q, {\cal K}^{s, \gamma}(X^\wedge)) \subset {\cal W}^{s}(\mathbb{R}^q, {\cal K}^{s, \gamma}(X^\wedge))$ is $\varepsilon$-regularizable.
\label{lemmaeps}
\end{lemma}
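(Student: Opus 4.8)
The plan is to verify, for $u$ as in the statement, the two clauses in the definition of $\varepsilon$-regularizability by isolating the only way $\varepsilon$ enters the Green operator $G^{(0)}_\varepsilon$ of \eqref{geps0}: through the dilated cut-off functions $\omega'_{1,\varepsilon\eta},\omega'_{0,\varepsilon\eta},\omega_{0,\varepsilon\eta}$. First I would record the structural consequences of $u\in{\cal W}^\infty_{\mathrm{comp}}(\mathbb{R}^q,{\cal K}^{s,\gamma}(X^\wedge))$ that drive everything: $u$ has compact support in $y$, takes values in ${\cal K}^{\infty,\gamma}(X^\wedge)$, and the map $\eta\mapsto\kappa^{-1}_{[\eta]}(F_{y\to\eta}u)(\eta)$ is rapidly decreasing with values in ${\cal K}^{\infty,\gamma}(X^\wedge)$. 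The same holds, at the shifted weight $\gamma-2$, for $A_{\mathrm{edge}}u$, since $A_{\mathrm{edge}}$ is a differential operator; this is needed for the second clause.

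The core of the proof is a strong-convergence statement for the operator-valued symbol. Factor $g^{(0)}_\varepsilon(y,\eta)=b_\varepsilon(y,\eta)\,c_\varepsilon(y,\eta)$ with $b_\varepsilon:=\omega'_{1,\varepsilon\eta}\,r^{2+k}\,\opm_M^{\gamma-3}(d_{k,j})(y,\eta)$ and $c_\varepsilon:=(1-\omega'_{0,\varepsilon\eta})\,a_i(y,\eta)\,\omega_{0,\varepsilon\eta}$. Conjugating by $\kappa_{[\eta]}$ turns every dilated cut-off into a cut-off in $\varepsilon r$ alone, so that $\kappa^{-1}_{[\eta]}c_\varepsilon(y,\eta)\kappa_{[\eta]}$ carries the factor $1-\omega'_0(\varepsilon r)$, a multiplication operator supported in $\{r\ge c_0/\varepsilon\}$. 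Since the tail at $r\to\infty$ of any element of a weighted cone Sobolev space has vanishing norm, $\kappa^{-1}_{[\eta]}c_\varepsilon(y,\eta)\kappa_{[\eta]}\to0$ strongly on ${\cal K}^{s,\gamma}(X^\wedge)$ as $\varepsilon\to0$, uniformly for $y$ in the (compact) support of $u$, while its operator norms stay bounded uniformly in $\varepsilon\in(0,1]$. The factor $\kappa^{-1}_{[\eta]}b_\varepsilon(y,\eta)\kappa_{[\eta]}$ is likewise bounded, uniformly in $\varepsilon$ and $y\in\supp u$, into a target of the form ${\cal S}^{\gamma'}_Q(X^\wedge)$, because dilation of cut-offs is an equibounded operation and $\opm_M^{\gamma-3}(d_{k,j})$ has an $\varepsilon$-independent symbol. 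Hence $\kappa^{-1}_{[\eta]}g^{(0)}_\varepsilon(y,\eta)\kappa_{[\eta]}\to0$ strongly, uniformly in $y\in\supp u$, staying bounded by a fixed $[\eta]$-power majorant $C[\eta]^{-m}$, $m=2+k-i$, characteristic of $R^{-m}_G(Y\times\mathbb{R}^3,{\boldsymbol g})$; the same holds for its $y$- and $\eta$-derivatives, each $\eta$-derivative producing an extra $[\eta]^{-1}$ with an $\varepsilon$-independent constant.

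With this, the first clause follows by dominated convergence in
\[
 \|G^{(0)}_\varepsilon u\|^2_{{\cal W}^{s,\gamma}}=\int[\eta]^{2s}\bigl\|\kappa^{-1}_{[\eta]}(F_{y\to\eta}G^{(0)}_\varepsilon u)(\eta)\bigr\|^2_{{\cal K}^{s,\gamma}(X^\wedge)}\,d\eta .
\]
Because $u$ has compact $y$-support, $F_{y\to\eta}(G^{(0)}_\varepsilon u)(\eta)$ is the $\eta'$-integral of the $y$-Fourier transform of $g^{(0)}_\varepsilon(\cdot,\eta')$ against $(F_{y\to\eta'}u)(\eta')$, with an integrand that, after $\kappa$-conjugation, converges to $0$ for each fixed $(\eta,\eta')$ by the previous paragraph and is dominated, uniformly in $\varepsilon$, by $C_N\langle\eta-\eta'\rangle^{-N}[\eta']^{-m}\,\|\kappa^{-1}_{[\eta']}(F_{y\to\eta'}u)(\eta')\|$; hence $\kappa^{-1}_{[\eta]}(F_{y\to\eta}G^{(0)}_\varepsilon u)(\eta)\to0$ for each $\eta$, and the outer integrand is dominated by an $\varepsilon$-independent $L^1(d\eta)$-function thanks to $u\in{\cal W}^\infty$. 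This gives $\|G^{(0)}_\varepsilon u\|_{{\cal W}^{s,\gamma}}\to0$. For the second clause it suffices to note that $P_\varepsilon$ and $G_\varepsilon$ in \eqref{PAepsilon} differ from their formal $\varepsilon\to0$ limits only through factors $(\omega'_{\cdot,\varepsilon\eta}-1)$ of exactly the escaping type analysed above; the same dominated-convergence argument applied to $P_\varepsilon A_{\mathrm{edge}}u$ — legitimate since $A_{\mathrm{edge}}u\in{\cal W}^\infty_{\mathrm{comp}}$ — shows that $\lim_{\varepsilon\to0}P_\varepsilon A_{\mathrm{edge}}u$ exists, and then $G_\varepsilon u=P_\varepsilon A_{\mathrm{edge}}u+G^{(0)}_\varepsilon u-u$ converges as well, which is \eqref{PGlimit}.

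The main obstacle I anticipate is the uniform-in-$\varepsilon$ bookkeeping: one must check that dilation of cut-off functions is an equicontinuous (hence equibounded) operation in the symbol topology of the Green class $R^{-m}_G(Y\times\mathbb{R}^3,{\boldsymbol g})$ and of $\mathrm{Diff}^i_{deg}(X^\wedge)$, so that the continuity estimates for operator-valued edge symbols of \cite{Schulze98} provide the $[\eta]$-power majorants with constants independent of $\varepsilon$, and that strong operator convergence of the symbols, together with these majorants, is indeed enough to pass to the limit in the ${\cal W}^{s,\gamma}$-norm. The remaining manipulations with the ${\cal K}$- and ${\cal W}$-norms are routine.
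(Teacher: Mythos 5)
Your proof takes a genuinely different route from the paper's.  The paper inserts a factor $(r[\tilde\eta])^{-N}(r[\tilde\eta])^{N}$, passes the growing factor $(r[\tilde\eta])^{N}$ onto $u$ (which is harmless because $u\in{\cal W}^{\infty}_{\mathrm{comp}}$), and then shows that the operator norm of the remaining modified Green symbol $g^{(0)}_{\varepsilon}(y,\eta)\,(r[\eta])^{-N}$ satisfies an \emph{explicit} power-of-$\varepsilon$ bound.  The engine for that is the auxiliary Proposition $\|v(\varepsilon^{-1}\,\cdot\,)\|_{{\cal K}^{s,\gamma}}\lesssim\varepsilon^{-\gamma-s+\frac{3}{2}}\|v\|_{{\cal K}^{s,\gamma}}$ together with the dilation identity $\|\partial^{\alpha}_{y}g^{(0)}_{\varepsilon}(y,\eta/[\eta])r^{-N}v\|=\varepsilon^{-2-m+N}\|\partial^{\alpha}_{y}g^{(0)}_{1}(y,\eta/[\eta])r^{-N}v(\varepsilon^{-1}\,\cdot\,)\|$; choosing $N>\gamma+s+\tfrac12+m$ then makes everything decay.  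You instead argue by strong operator convergence of the $\kappa$-conjugated symbol (the dilated cut-offs escape to $r\gtrsim 1/\varepsilon$, where elements of ${\cal K}^{s,\gamma}$ have no mass) plus a uniform operator-norm majorant $C[\eta]^{-m}$ and dominated convergence in the wedge-Sobolev norm integral.  The two approaches buy different things: yours is shorter and avoids an explicit scaling lemma, but it is purely qualitative and leaves several technical claims to the reader — in particular (i) that $\kappa^{-1}_{[\eta]}b_{\varepsilon}\kappa_{[\eta]}$ is uniformly bounded even though $\omega'_{1}(\varepsilon r)r^{2+k}$ alone grows like $\varepsilon^{-2-k}$ on its support (one must use the mapping property of $\opm_{M}(d_{k,j})$ into a space with enough $r$-decay), (ii) the $\kappa_{[\eta]}$ vs.\ $\kappa_{[\eta']}$ mismatch hidden in the convolution formula for $F_{y\to\eta}(G^{(0)}_{\varepsilon}u)$, and (iii) the upgrade from $u\in{\cal W}^{\infty}_{\mathrm{comp}}$ (which gives all-$s$ $L^{2}$-weighted bounds) to the \emph{pointwise} rapid decay of $\eta\mapsto\kappa^{-1}_{[\eta]}\hat u(\eta)$ that your dominating function needs.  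The paper's route is more quantitative — it yields the explicit rate $\varepsilon^{N-\gamma-s-\frac12-m}$ — and in exchange it never needs strong convergence or a pointwise dominating function; the only regularity input is that $\|r^{N}u\|_{{\cal W}^{s+N,\gamma}}$ is finite for the chosen $N$.  So your argument is viable but you should either supply the three missing uniformity/decay justifications above, or adopt the paper's scaling device, which sidesteps all of them.
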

\begin{proof}
Let us consider the specific ${\cal W}^{s}(\mathbb{R}^q, {\cal K}^{s, \gamma}(X^\wedge))$ norm
\[
 \| G_{\varepsilon}^{(0)}u \|_{{\cal W}^{s,\gamma}}^2 := \int [\eta]^{2s} \| \kappa^{-1}_{[\eta]} (F_{y\rightarrow \eta} G_{\varepsilon}^{(0)}u)(\eta)
 \|_{{\cal K}^{s,\gamma}}^2 \, \dbar \eta .
\]
Taking into accout $u \in {\cal W}^{\infty}_{\mbox{\footnotesize comp}}(\mathbb{R}^q, {\cal K}^{s, \gamma}(X^\wedge))$, let us perform the following estimate
\begin{eqnarray*}
 \| G_{\varepsilon}^{(0)}u \|_{{\cal W}^{s,\gamma}}^2 & = &
 \int [\eta]^{2s} \bigl\| \kappa^{-1}_{[\eta]} \bigl( F_{y\rightarrow \eta}
 \int \dbar \tilde{\eta} \, e^{iy \tilde{\eta}} g_{\varepsilon}^{(0)}(y,\tilde{\eta}) F_{y \rightarrow \tilde{\eta}} u \bigr)(\eta´) \bigr\|_{{\cal K}^{s,\gamma}}^2 \dbar \eta \\   
 & = & \int [\eta]^{2s} \bigl\| \kappa^{-1}_{[\eta]} \bigl( F_{y\rightarrow \eta}
 \int \dbar \tilde{\eta} \, e^{iy \tilde{\eta}} g_{\varepsilon}^{(0)}(y,\tilde{\eta}) \bigl(r[\tilde{\eta}]\bigr)^{-N}  
 \bigl(r[\tilde{\eta}]\bigr)^{N} F_{y \rightarrow \tilde{\eta}} u \bigr)(\eta) \bigr\|_{{\cal K}^{s,\gamma}}^2 \dbar \eta \\
 & \lesssim & \| G_{\varepsilon}^{(-N)} \|_{{\cal L}({\cal W}^{s,\gamma},{\cal W}^{s,\gamma})}^{2} \int [\eta]^{2s+2N} \| \kappa^{-1}_{[\eta]} \bigl( F_{y\rightarrow \eta} 
 r^{N} u \bigr)(\eta) \bigr\|_{{\cal K}^{s,\gamma}}^2 \dbar \eta \\  
\end{eqnarray*}
with respect to the modified Green operator
\[
 G_{\varepsilon}^{(-N)}u = \int \dbar \eta \, e^{iy \eta} g_{\varepsilon}^{(0)}(y,\eta) \bigl(r[\eta]\bigr)^{-N} F_{y \rightarrow \eta} u .
\]
For $u \in {\cal W}^{\infty}_{\mbox{\footnotesize comp}}(\mathbb{R}^q, {\cal K}^{s, \gamma}(X^\wedge))$,
the Fourier integral
\[
 \int [\eta]^{2s+2N} \| \kappa^{-1}_{[\eta]} \bigl( F_{y\rightarrow \eta} 
 r^{N} u \bigr)(\eta) \bigr\|_{{\cal K}^{s,\gamma}}^2 \dbar \eta
 = \| r^{N} u \|_{{\cal W}^{s+N,\gamma}}^{2} 
\]
is obviously finite and it remains to show that the operator norm
of the modified Green operator vanishes for $\varepsilon \rightarrow 0$. 
According to \cite[Theorem 1.3.59]{Schulze98}, its
operator norm can be estimated by
\begin{equation}
 \| G_{\varepsilon}^{(-N)} \|_{{\cal L}({\cal W}^{s,\gamma},{\cal W}^{s,\gamma})} \lesssim \sup_{y \in Y} \sup_{\eta \in \mathbb{R}^{3}} \sup_{|\alpha| \leq 2l} \| \kappa^{-1}_{[\eta]} \partial^{\alpha}_{y} g_{\varepsilon}^{(0)}(y,\eta) \bigl(r[\eta]\bigr)^{-N} \kappa_{[\eta]} \|_{{\cal L}({\cal K}^{s,\gamma},{\cal K}^{s,\gamma})} 
\label{GNop}
\end{equation}
with $s$ dependent constant $l$. In our particular applications, we
consider local edges $Y$ which correspond to bounded open subsets
of $\mathbb{R}^{3}$. Taking furthermore into account the smoothness
properties of the symbol $g_{\varepsilon}^{(0)}$ it turns out to be sufficient to consider the norm estimate
(\ref{GNop}) pointwise with respect to the edge variables $y$.
In order to estimate the operator
norm, let us next take into account twisted homogeneity,i.e.,
\[
 \| \kappa^{-1}_{[\eta]} \partial^{\alpha}_{y} g_{\varepsilon}^{(0)}(y,\eta) \bigl(r[\eta]\bigr)^{-N} \kappa_{[\eta]} \|_{{\cal L}({\cal K}^{s,\gamma},{\cal K}^{s,\gamma})} = [\eta]^{-m}
 \| \partial^{\alpha}_{y} g_{\varepsilon}^{(0)}(y,\eta/[\eta]) r^{-N} \|_{{\cal L}({\cal K}^{s,\gamma},{\cal K}^{s,\gamma})} .
\]
Therefore, it is sufficient to consider the norm
\[
 \| \partial^{\alpha}_{y} g_{\varepsilon}^{(0)}(y,\eta/[\eta]) r^{-N}  \|_{{\cal L}({\cal K}^{s,\gamma},{\cal K}^{s,\gamma})} =
 \sup_{v \in {\cal K}^{s,\gamma}} \frac{\| \partial^{\alpha}_{y} g_{\varepsilon}^{(0)}(y,\eta/[\eta]) r^{-N} v \|_{{\cal K}^{s,\gamma}}}{\| v \|_{{\cal K}^{s,\gamma}}} .  
\] 
A simple calculation shows
\[
 \| \partial^{\alpha}_{y} g_{\varepsilon}^{(0)}(y,\eta/[\eta]) r^{-N} v \|_{{\cal K}^{s,\gamma}} = \varepsilon^{-2-m+N} \| \partial^{\alpha}_{y} g_{1}^{(0)}(y,\eta/[\eta]) r^{-N} v \bigl( \varepsilon^{-1} \, \cdot \, \bigr) \|_{{\cal K}^{s,\gamma}} .
\]
To proceed, we need to estimate $v \bigl( \varepsilon^{-1} \, \cdot \, \bigr)$ in the ${\cal K}^{s,\gamma}$-norm. 
 
\begin{proposition} 
For $v \in {\cal K}^{s,\gamma}(X^{\wedge})$, we get the estimate
\[
 \| v \bigl( \varepsilon^{-1} \, \cdot \, \bigr) \|_{{\cal K}^{s,\gamma}} \lesssim \varepsilon^{-\gamma-s+\tfrac{3}{2}}
 \| v \|_{{\cal K}^{s,\gamma}} .
\]
\end{proposition}

\begin{proof} 
According to the definition, we have
\begin{equation}
 \| v \bigl( \varepsilon^{-1} \, \cdot \, \bigr) \|_{{\cal K}^{s,\gamma}} = \| v \bigl( \varepsilon^{-1} \, \cdot \, \bigr) \|_{\sigma{\cal H}^{s,\gamma}} + \| v \bigl( \varepsilon^{-1} \, \cdot \, \bigr) \|_{(1-\sigma)H^{s}} .
\label{Knorm}
\end{equation}
Let us first consider
\begin{eqnarray*}
 \| v \bigl( \varepsilon^{-1} \, \cdot \, \bigr) \|_{\sigma{\cal H}^{s,\gamma}}^{2} & = & \sum_{|\beta| \leq s} \int \sigma \bigl| |x|^{-\gamma+|\beta|} \partial_x^{\beta} v \bigl( \varepsilon^{-1} \, \cdot \, \bigr) \bigr|^{2} dx \\ 
 & = & \sum_{|\beta| \leq s} \int \sigma \bigl( \varepsilon \, \cdot \, \bigr) \bigl| \varepsilon^{-\gamma} |x|^{-\gamma+|\beta|} \partial_x^{\beta} v \bigr|^{2} \varepsilon^{3} dx \\ 
 & = & \sum_{|\beta| \leq s} \varepsilon^{-2\gamma +3} \left[ \int \sigma \bigl| |x|^{-\gamma+|\beta|} \partial_x^{\beta} v \bigr|^{2} dx 
 + \int (1-\sigma) \sigma \bigl( \varepsilon \, \cdot \, \bigr) \bigl| |x|^{-\gamma+|\beta|} \partial_x^{\beta} v \bigr|^{2} dx \right] ,  
\end{eqnarray*}
where w.l.o.g.~we assume in the last line $\sigma \prec \sigma \bigl( \varepsilon \, \cdot \, \bigr)$ for $\varepsilon$ sufficiently small.
In order to estimate the second term, let us first observe
\[
 (1-\sigma)|x|^{-2\gamma} \leq C_1 \quad \mbox{for} \ \gamma > 0
\]
and 
\[
 \sigma \bigl( \varepsilon \, \cdot \, \bigr) \bigl| \varepsilon |x| \bigr|^{2|\beta|} \leq C_2 .
\]
Therefore, we get the estimate
\[
 \int (1-\sigma) \sigma \bigl( \varepsilon \, \cdot \, \bigr) \bigl| |x|^{-\gamma+|\beta|} \partial_x^{\beta} v \bigr|^{2} dx 
 \lesssim \varepsilon^{-2|\beta|} \int (1-\sigma) \bigl|  \partial_x^{\beta} v \bigr|^{2} dx  
\]
which yields 
\begin{eqnarray*}
 \| v \bigl( \varepsilon^{-1} \, \cdot \, \bigr) \|_{\sigma{\cal H}^{s,\gamma}}^{2} & \lesssim & \varepsilon^{-2\gamma +3} \| v \|_{\sigma{\cal H}^{s,\gamma}} + \varepsilon^{-2\gamma-2s+3}
 \| v \|_{(1-\sigma)H^{s}} \\
 & \lesssim & \varepsilon^{-2\gamma-2s+3} \| v \|_{{\cal K}^{s,\gamma}} . 
\end{eqnarray*}
For the remaining part of (\ref{Knorm}), we get the estimate
\begin{eqnarray*}
 \| v \bigl( \varepsilon^{-1} \, \cdot \, \bigr) \|_{(1-\sigma)H^{s}}
 & = & \sum_{|\beta| \leq s} \int (1-\sigma) \bigl| \partial_x^{\beta} v \bigr|^{2} dx \\
 & \lesssim & \sum_{|\beta| \leq s} \varepsilon^{-2|\beta|+3}
 \int \left( 1-\sigma \bigl( \varepsilon \, \cdot \, \bigr) \right) \bigl| \partial_x^{\beta} v \bigr|^{2} dx \\
 & \lesssim & \varepsilon^{-2s+3} \| v \|_{(1-\sigma)H^{s}} . 
\end{eqnarray*}
Putting all together, we, finally, get the desired estimate.
\end{proof}

We can now estimate the operator norm in the following manner
\begin{eqnarray*}
 \| \partial^{\alpha}_{y} g_{\varepsilon}^{(0)}(y,\eta/[\eta]) r^{-N}  \|_{{\cal L}({\cal K}^{s,\gamma},{\cal K}^{s,\gamma})} & \lesssim &
 \varepsilon^{-\gamma-s-\tfrac{1}{2}-m+N} \sup_{v \in {\cal K}^{s,\gamma}} \frac{
 \| \partial^{\alpha}_{y} g_{1}^{(0)}(y,\eta/[\eta]) r^{-N} v \bigl( \varepsilon^{-1} \, \cdot \, \bigr) \|_{{\cal K}^{s,\gamma}}}
 {\| v \bigl( \varepsilon^{-1} \, \cdot \, \bigr) \|_{{\cal K}^{s,\gamma}}} \\
 & \lesssim & \varepsilon^{-\gamma-s-\tfrac{1}{2}-m+N}
 \| \partial^{\alpha}_{y} g_{1}^{(0)}(y,\eta/[\eta]) r^{-N}  \|_{{\cal L}({\cal K}^{s,\gamma},{\cal K}^{s,\gamma})} 
\end{eqnarray*}
and, therefore, for $N > \gamma+s+\tfrac{1}{2}+m$, we get
\[
 \lim_{\varepsilon \rightarrow 0} \| \partial^{\alpha}_{y} g_{\varepsilon}^{(0)}(y,\eta/[\eta]) r^{-N}  \|_{{\cal L}({\cal K}^{s,\gamma},{\cal K}^{s,\gamma})} =0 .
\]
\end{proof}

\section{Some simple calculations in the noninteracting case}
\label{noninteracting}
On a first glance, the asymptotic expression for the Green operator (\ref{Gasymp}) looks rather awkward
concerning the explicit formulas of the coefficients ${\cal Q}_{l,n}$ given in terms of Mellin transforms.
In this appendix we present some explicit calculations for wavefunctions of two noninteracting electrons, i.e., we consider the equation
\[
\bigg(-\frac{1}{2}\big(\Delta_1+\Delta_2\big)-\frac{Z}{|x_1|}-\frac{Z}{|x_2|}\bigg)\Psi(x_1,x_2)=E\Psi(x_1,x_2).
\]
In particular, it provides an independent check for the correctness of our asymptotic Green operator.
Obviously it is only the $e-n$ cusp which can be studied for different angular momenta $l=0,1,2$.
Let us start with $l=0$, where the exact ground state wavefunction\footnote{Here and in the following 
we disregard proper normalization constants.}
is given in hyperspherical coordinates
\[
 u_0(r,t)= e^{-Zt( \sin r + \cos r)} .
\]
The corresponding asymptotic expansion is given by
\[
 u_0(r,t) \sim \bigl( 1 -Ztr + \tfrac{1}{2} \big(Zt +  (Zt)^2\big)r^2 \cdots \bigr) e^{-Zt} .
\]
We want to calculate
\[
 Gu_0(r,y) = \int \dbar \eta \, e^{iy \eta} g \hat{u}_0(y,\eta) .
\]
Let us first note that
\[
 \int \dbar \eta \, e^{iy \eta} {\cal Q}_{0,1}(\hat{u}_0)
 = \int \dbar \eta \, e^{iy \eta} M \bigl( \opm_M^{\gamma-1}(h_1^{(0)}) \tilde{\sigma} \hat{u}_0 \bigr) (0) ,
\]
because 
\[
 \int \dbar \eta \, e^{iy \eta} M \bigl( \tilde{\sigma}' \sigma \opm_M^{\gamma-1}(a) \tilde{\sigma} \hat{u}_0 \bigr) (0) =0 ,
\]
which is due to the fact that the Mellin operator can be expressed as a local Hamiltonian operator, 
i.e., $\tilde{\sigma}' \sigma r^2(H-E_0) \tilde{\sigma}$
with cut-off functions $\tilde{\sigma}' \sigma \prec \tilde{\sigma}$ such that $Hu_0=E_0 u_0$ is satisfied on 
the support of $\tilde{\sigma}' \sigma$.
The remaining term becomes
\begin{eqnarray*}  
 \int \dbar \eta \, e^{iy \eta} M \bigl( \opm_M^{\gamma-1}(h_1^{(0)}) \tilde{\sigma} \hat{u}_0 \bigr) (0) & = &
 M \bigl( \opm_M^{\gamma-1}(\tfrac{1}{2t^2}(w^2-w)) \tilde{\sigma} u_0 \bigr) (0) \\
 & = & \tfrac{1}{2t^2}(w^2-w) M (\tilde{\sigma} u_0) |_{w=0} \\
 & = & \tfrac{1}{2t^2} \int_0^\infty dr r^{w-1} \biggl[ (-r \partial_r )^2 (\tilde{\sigma} u_0) - (-r \partial_r ) (\tilde{\sigma} u_0) \biggr] \bigg|_{w=0} \\
 & = & \tfrac{1}{2t^2} \int_0^\infty dr \biggl[ \partial_r \bigl( r \partial_r (\tilde{\sigma} u_0) \bigr) + \partial_r (\tilde{\sigma} u_0) \biggr] \\
 & = & - \tfrac{1}{2t^2} u_0(0) \\
 & = & - \tfrac{1}{2t^2} e^{-Zt} .
\end{eqnarray*}
In order to calculate
\[
 \int \dbar \eta \, e^{iy \eta} {\cal Q}_{0,2}(\hat{u}_0) 
\]
let us replace the product of pseudo-differential operators by a product
of the corresponding differential operators and using the same arguments as before
we obtain
\begin{eqnarray*}  
 \int \dbar \eta \, e^{iy \eta} {\cal Q}_{0,2}(\hat{u}_0) & = & 
 \bigl( -t^2 \partial_t^2 - 9t \partial_t \bigr) \int \dbar \eta \, e^{iy \eta} 
 M \bigl( \opm_M^{\gamma-1}(h_1^{(0)}) \tilde{\sigma} \hat{u}_0 \bigr) (0) \\
 & = & \bigl( -t^2 \partial_t^2 - 9t \partial_t \bigr) \bigl( - \tfrac{1}{2t^2} \bigr) e^{-Zt} \\ 
 & = & \bigl( 12+5Zt-Z^2t^2 \bigr) \bigl( - \tfrac{1}{2t^2} \bigr) e^{-Zt} .
\end{eqnarray*}
Putting things together one recovers the required asymptotic identity
\begin{eqnarray*}  
 Gu_0(r,y) & = & \int \dbar \eta \, e^{iy \eta} g \hat{u}_0(y,\eta) \\
 & \sim & 2t^2 \biggl[ \biggl( 1 +rtZ_1 +r^2 \bigl( -2 +\tfrac{1}{3}(tZ_1)^2
 + \tfrac{1}{3} tZ_2 \bigr) \biggr) \int \dbar \eta \, e^{iy \eta} {\cal Q}_{0,1}(\hat{u}_0)(y,\eta) \\
 & & +\tfrac{1}{6} r^2 \int \dbar \eta \, e^{iy \eta} {\cal Q}_{0,2}(\hat{u}_0)(y,\eta) + \cdots \biggr] \\
 & \sim & - \bigl( 1 -rtZ -2r^2 +\tfrac{2}{3}Z^2 (tr)^2 - \tfrac{1}{3} Zt r^2 \bigr) e^{-Zt} 
 - \tfrac{1}{6} r^2 \bigl( 12+5Zt-Z^2t^2 \bigr) e^{-Zt} \\
 &\sim & - \bigl( 1 -rtZ + \tfrac{1}{2} tZ r^2 + \tfrac{1}{2} Z^2 (tr)^2 \bigr) e^{-Zt} \\
 & \sim & -u_0 ,
\end{eqnarray*}
where we used $Z_1=-Z$ and $Z_2 = -tE_0 -Z= t-Z$ in the noninteracting case.

For $l=1$ let us consider the noninteracting wavefunction
\[
 u_1(r,t)= t \sin r e^{-\tfrac{Z}{2}t\sin r} e^{-Zt\cos r} Y_{1,m}(\theta_1,\phi_1) ,
\]
with asymptotic expansion 
\[
 u_1(r,t) \sim tr \bigl( 1 -Ztr \bigr) e^{-Zt} Y_{1,m}(\theta_1,\phi_1) .
\]
Once again we want to calculate
\[
 Gu_1(r,y) = \int \dbar \eta \, e^{iy \eta} g \hat{u}_1(y,\eta)
\]
in the asymptotic limit $r \rightarrow 0$.
Following the same line of arguments as before we get
\begin{eqnarray*}
\int \dbar \eta \, e^{iy \eta} {\cal Q}_{1,1}(\hat{u}_1) & = & 
 \int \dbar \eta \, e^{iy \eta} M \bigl( \opm_M^{\gamma-1}(h_2^{(1)}) \tilde{\sigma} \hat{u}_1 \bigr) (-1)
  -tZ_1 \int \dbar \eta \, e^{iy \eta} M \bigl( \opm_M^{\gamma-1}(h_1^{(1)}) \tilde{\sigma} \hat{u}_1 \bigr) (0) \\
 & = & \tfrac{1}{2t^2}(w^2-w-2) M (\tilde{\sigma} u_1) |_{w=-1} - \tfrac{Z_1}{2t}(w^2-w) M (\tilde{\sigma} u_1) |_{w=0}\\
 & = & \tfrac{1}{2t^2} \int_0^\infty dr r^{w-1} \bigl[ r^2 \partial_r^2 (\tilde{\sigma} u_1) +2r \partial_r (\tilde{\sigma} u_1)
  -2  \tilde{\sigma} u_1 \bigr] |_{w=-1} + \tfrac{Z_1}{2t} \int_0^\infty dr \partial_r (\tilde{\sigma} u_1) .
\end{eqnarray*}
Let us now substitute $u_1= r \tilde{u}_1$, from which it follows that the last integral vanishes and one gets
\begin{eqnarray*}
\int \dbar \eta \, e^{iy \eta} {\cal Q}_{1,1}(\hat{u}_1) 
 & = & \tfrac{1}{2t^2} \int_0^\infty dr \bigl[ \partial_r^2 (\tilde{\sigma} r \tilde{u}_1) +2\partial_r (\tilde{\sigma} \tilde{u}_1) \bigr] \\
 & = & \tfrac{1}{2t^2} \bigl[ - \partial_r (r \tilde{u}_1) (0) -2 \tilde{u}_1 (0) \bigr] \\
 & = & -\tfrac{3}{2t^2} \tilde{u}_1 (0) \\
 & = & -\tfrac{3}{2t} e^{-Zt} Y_{1,m}(\theta_1,\phi_1) .
\end{eqnarray*}
The asymptotic expansion becomes
\[
 Gu_1(r,y) = \int \dbar \eta \, e^{iy \eta} g \hat{u}_1(y,\eta) 
 \sim  -\tfrac{1}{2t^2} \bigl( rt - \tfrac{1}{2} Z (tr)^2 \bigr) e^{-Zt} Y_{1,m}(\theta_1,\phi_1) \sim - u_1 .
\]

Finally, for $l=2$ let us consider the noninteracting wavefunction
\[
 u_2(r,t)= t^2 \sin^2 r e^{-\tfrac{Z}{3}t\sin r} e^{-Zt \cos r} Y_{2,m}(\theta_1,\phi_1) 
 \sim (tr)^2 e^{-Zt} Y_{2,m}(\theta_1,\phi_1) .
\]
In this case the corresponding coefficients become
\begin{eqnarray*}
 \int \dbar \eta \, e^{iy \eta} {\cal Q}_{2,1}(\hat{u}_2) & = & 
 \int \dbar \eta \, e^{iy \eta} M \bigl( \opm_M^{\gamma-1}(h_3^{(2)}) \tilde{\sigma} \hat{u}_2 \bigr) (-2)
 -\tfrac{1}{2} tZ_1 \int \dbar \eta \, e^{iy \eta} M \bigl( \opm_M^{\gamma-1}(h_2^{(2)}) \tilde{\sigma} \hat{u}_2 \bigr) (-1) \\
 & & + \tfrac{1}{6} \bigl( 10 +(tZ_1)^2 -2tZ_2 \bigr) \int \dbar \eta \, e^{iy \eta}
 M \bigl( \opm_M^{\gamma-1}(h_1^{(2)}) \tilde{\sigma} \hat{u}_2 \bigr) (0) ,\\
 \int \dbar \eta \, e^{iy \eta} {\cal Q}_{2,2}(\hat{u}_2) & = &
 \int \dbar \eta \, e^{iy \eta}
 M \bigl( \opm_M^{\gamma-1} \bigl( \sum_\alpha \tfrac{1}{\alpha !} \partial_\eta^\alpha
 \tilde{C}_1 D_y^\alpha h_1^{(2)} \bigr) \tilde{\sigma} \hat{u}_2 \bigr) (0) .
\end{eqnarray*}
Setting $u_2=r^2 \tilde{u}_2$, explicit calculations yield
\begin{eqnarray*}
 \int \dbar \eta \, e^{iy \eta} M \bigl( \opm_M^{\gamma-1}(h_1^{(2)}) \tilde{\sigma} \hat{u}_2 \bigr) (0)
 & = & -\tfrac{1}{2t^2} \int_0^\infty dr 6r \tilde{u} ,\\
 \int \dbar \eta \, e^{iy \eta} M \bigl( \opm_M^{\gamma-1}(h_2^{(2)}) \tilde{\sigma} \hat{u}_2 \bigr) (-1)
 & = & -\tfrac{1}{2t^2} \int_0^\infty dr (4+2trZ_1) \tilde{u} ,\\
 \int \dbar \eta \, e^{iy \eta} M \bigl( \opm_M^{\gamma-1}(h_3^{(2)}) \tilde{\sigma} \hat{u}_2 \bigr) (-2)
 & = & \tfrac{1}{2t^2} \biggl( w^2-w-6 + \tfrac{8}{3} r^2 w 
 +(rt)^2 \partial_t^2 +5r^2t \partial_t -2rt Z_1 2r^2 -2r^2t Z_2 \biggr) \\
 & & \times M (\tilde{\sigma} u_2) |_{w=-2} \\
 & = & \tfrac{1}{2t^2} \int_0^\infty dr \biggl( 5 \partial_r \tilde{u} +rt^2 \partial_t^2 \tilde{u}
 +5rt \partial_t \tilde{u} -2tZ_1 \tilde{u} -2r \tilde{u} -2tr Z_2 \tilde{u} \biggr) .
\end{eqnarray*}
Summing up, one gets
\[
 \int \dbar \eta \, e^{iy \eta} {\cal Q}_{1,2}(\hat{u}_2) =
 \tfrac{1}{2t^2} \int_0^\infty dr \biggl( 5 \partial_r \tilde{u} +rt^2 \partial_t^2 \tilde{u} 
 +5rt \partial_t \tilde{u} -12 r \tilde{u} \biggr) ,
\]
\begin{eqnarray*}  
 \int \dbar \eta \, e^{iy \eta} {\cal Q}_{2,2}(\hat{u}_2) & = & 
 \bigl( -t^2 \partial_t^2 - 9t \partial_t \bigr) \int \dbar \eta \, e^{iy \eta} 
 M \bigl( \opm_M^{\gamma-1}(h_1^{(2)}) \tilde{\sigma} \hat{u}_2 \bigr) (0) \\
 & = & \bigl( -t^2 \partial_t^2 - 9t \partial_t \bigr) \bigl( - \tfrac{1}{2t^2} \bigr) \int_0^\infty dr 6r \tilde{u} \\
 & = & \tfrac{1}{2t^2} \int_0^\infty dr \biggl( 6rt^2 \partial_t^2 \tilde{u} +30 rt \partial_t \tilde{u} -72 r \tilde{u} \biggr)
\end{eqnarray*}
and, finally, 
\begin{eqnarray*}
 Gu_2(r,y) & = & \int \dbar \eta \, e^{iy \eta} g \hat{u}_2(y,\eta) \\
 & \sim & 2t^2 \biggl[ \tfrac{1}{5} r^2 \int \dbar \eta \, e^{iy \eta} {\cal Q}_{1,2}(\hat{u}_2) - 
 \tfrac{1}{30} r^2 \int \dbar \eta \, e^{iy \eta} {\cal Q}_{2,2}(\hat{u}_2) \cdots \biggr] \\
 & \sim & r^2 \int_0^\infty dr \partial_r \tilde{u} \\
 & \sim & - r^2 \tilde{u}_2(0) \\
 & \sim & - (tr)^2 e^{-Zt} Y_{2,m}(\theta_1,\phi_1) \\
 & \sim & -u_2 .
\end{eqnarray*} 

\section{Location and multiplicity of poles of Mellin type symbols}
\label{poles}
In Section \ref{nolog} we have discussed the absence of logarithmic terms in the edge asymptotic behaviour of eigenfunctions of the Hamiltonian. Within our approach, this
follows from the multiplicity of poles of Mellin type symbols
of the parametrix. The present work did not attempt to give
a complete description of the location and multiplicity of these poles. Intead we merely want to summarize in this Appendix 
our findings from the previous calculations. 

Let us start with the symbol $a_0^{-1}$, where our calculations
can be subsumed in the following remark which shows that all
of its poles are simple.

\begin{remark}
The poles of the meromorphic operator valued symbols $q_{0,n}$, $n=0,1,\ldots$, are determined by 
the recursive relation ${\mathrm(\ref{bn})}.$ 
 A simple calculation shows that 
\[
 \bigl( h_0-2n(2w-5-2n) \bigr)^{-1}
\]
is a meromorphic operator valued symbol with simple poles at $w=2n+3+l$ and $w=2n+2-l$ with $l=0,1,\ldots$,
which follows from the spectral decomposition of the Laplace-Beltrami operator on $S^2$, i.e.,
$\Delta_{S^2} = -\sum_{l=0}^\infty l(l+1) P_l$.
Furthermore, it follows from $2m+3+l \neq 2n+2-l$, i.e., $2(n-m) \neq 2l+1$, that $d_{j,n}$, $j,n=0,1,\ldots$,
are meromorphic operator valued symbols with simple poles at $w \in \mathbb{Z}$.
\end{remark}

At next let us consider the operator valued symbol $a_1^{-1}$, where a new type of denominator appears in $d^{(1)}_{0,0}$ which is of the form
\[
 h_0 \bigl( h_0-(2w-6) \bigr) .
\]
Resolving $\Delta_{S^2}$ like in the previous remark, we get
\[
 {\cal P}_l h_0 = (w-2)^{2} - (w-2) - l(l+1) = (w-3-l)(w-2+l)
\]
with simple zeros at $w_1=3+l$ and $w_2=2-l$, as well as
\[
 {\cal P}_l \bigl( h_0-(2w-6) \bigr) = (w-4-l)(w-3+l)
\]
with simple zeros at $w_3=4+l$ and $w_4=3-l$, respectively. 
Combining both factors it turns out that only for $l=0$
a multiple zero appears at $w_1=w_4=3$. 

Closing our discussion, we consider the operator valued symbol   $a_2^{-1}$, where the denominator of $d^{(2)}_{0,0}$ is of the form
\[
 h_0 \bigl( h_0-(2w-6) \bigr) \bigl( h_0-2(2w-7) \bigr) .
\]
The additional factor $\bigl( h_0-2(2w-7) \bigr)$ has
simple poles at $w_5=5+l$ and $w_6=4-l$, respectively.
Therefore, multiple zeros appear for $l=0$ at $w_3=w_6=4$
and $w_1=w_4=3$, respectively.

\section{Calculation of the residues}
\label{Appendixresidues}
In this appendix, we provide the necessary residues of the meromorphic operator valued symbols of the asymptotic parametrix
up to second order. Let us first list
the angular momentum resolved shifted symbols. The two shifted symbols contributing to zero and first order are given by
\begin{eqnarray*}
 &&T^n d^{(0)}_{0,0}(w) =  \sum_l \frac{{\cal  P}_l}{\bigl( w -(l+3-n) \bigr) \bigl( w -(2-l-n) \bigr)}\\
 &\mathrm{and}\\
&& T^n d^{(1)}_{0,0}(w) =  2tZ_1 \sum_l 
 \frac{{\cal  P}_l}{\bigl( w -(l+3-n) \bigr) \bigl( w -(2-l-n) \bigr) 
 \bigl( w -(l+4-n) \bigr) \bigl( w -(3-l-n) \bigr)} ,
\end{eqnarray*}
respectively. For the symbols contributing in second order it
is convenient to define the meromorphic function
\[
 h_n(w) := \sum_l 
\frac{{\cal  P}_l}{\bigl( w -(l+3-n) \bigr) \bigl( w -(2-l-n) \bigr) 
 \bigl( w -(l+5-n) \bigr) \bigl( w -(4-l-n) \bigr)} ,
\]
with it these symbols become
\[
 \partial_\eta^\alpha T^n d^{(0)}_{0,2}(w) = r^2 \partial_\eta^\alpha C_0 h_n(w), \quad
 \partial_\eta^\alpha T^n d^{(0)}_{1,1}(w) = -r \partial_\eta^\alpha (4it\tau) h_n(w), \quad
 \partial_\eta^\alpha T^n d^{(0)}_{2,0}(w) = -2 h_n(w) ,
\]
\[
 T^n d^{(1)}_{1,0}(w) = -10 h_n(w), \quad
 \partial_\eta^\alpha T^n d^{(1)}_{0,1}(w) = ir \partial_\eta^\alpha C_1 h_n(w) ,
\]
\[
 T^n d^{(2)}_{0,0}(w) = \sum_l \biggl[
 \tfrac{(2tZ_1)^2}
 {\bigl( w -(l+4-n) \bigr) \bigl( w -(3-l-n) \bigr)}
 -\tfrac{1}{3} \bigl( 8(w-2+n) -l(l+1) -6tZ_2 \bigr) \biggr] h_n(w) . 
\]
The corresponding residues which are required in our calculations
for the entire $a$-type Green operator, defined in Section
(\ref{sumatypeG}), are given by
\[
\begin{split}
 g_a: & \quad \quad \res(T^2 d^{(0)}_{0,0}, -l) = -\tfrac{1}{2l+1} {\cal  P}_l,\\
     & \quad \quad \res( T^2 d^{(1)}_{0,0}, -l) {\cal  P}_l = - \tfrac{2tZ_1}{2(2l+1) (l+1)} {\cal  P}_l,
 \quad (l \geq 0), \\ & \quad \quad 
 \res( T^2 d^{(1)}_{0,0}, 1-l) {\cal  P}_l = \tfrac{2tZ_1}{2l(2l+1)}{\cal  P}_l \quad (l \geq 1), \\
 & \quad \quad \res( T^2 d^{(1)}_{0,0}, -m) = - \tfrac{2tZ_1}{2(2m+1) (m+1)} {\cal  P}_m 
 + \tfrac{2tZ_1}{2(m+1)(2m+3)}{\cal  P}_{1+m} \quad ( m \geq 0) ,\\
 & \quad \quad \res(T^2 d^{(2)}_{0,0}, 1-l) {\cal  P}_l = \tfrac{(2tZ_1)^2}
 {4l(2l+1)(l+1)}{\cal  P}_l \quad (l \geq 1), \\
 & \quad \quad \res(T^2 d^{(2)}_{0,0},-l) {\cal  P}_l = -\biggl[
 \tfrac{(2tZ_1)^2}{4(l+1)(2l+1)(2l+3)}
 + \tfrac{l(l+9) +6tZ_2}{6(2l+1)(2l+3)} \biggr]{\cal  P}_l \quad (l \geq 0), \\
 & \quad \quad \res(T^2 d^{(2)}_{0,0},2-l){\cal  P}_l = \biggl[
 -\tfrac{(2tZ_1)^2}{4l(2l-1)(2l+1)}
 + \tfrac{l(l+9)-16 +6tZ_2}{6(2l-1)(2l+1)} \biggr]{\cal  P}_l \quad (l \geq 2), \\
 & \quad \quad \res(T^2 d^{(2)}_{0,0},-m) = -\biggl[ \tfrac{(2tZ_1)^2}{4(m+1)(2m+1)(2m+3)}
 + \tfrac{m(m+9) +6tZ_2}{6(2m+1)(2m+3)} \biggr] {\cal  P}_m
 + \tfrac{(2tZ_1)^2}{4(m+1)(2m+3)(m+2)}{\cal  P}_{1+m} \\
 & \quad \quad \hspace{3.5cm} 
 + \biggl[ -\tfrac{(2tZ_1)^2}{4(m+2)(2m+3)(2m+5)} + \tfrac{(m+2)(m+11)-16 +6tZ_2}{6(2m+3)(2m+5)} \biggr] {\cal  P}_{2+m}
 \quad (m \geq 0) ,\\
  & \quad \quad \res(\partial_\eta^\alpha T^2 d^{(0)}_{0,2}, -l){\cal  P}_l = - r^2 \partial_\eta^\alpha C_0 \tfrac{1}{2(2l+1) (2l+3)} P_l,
 \quad (l \geq 0), \\ & \quad \quad 
 \res(\partial_\eta^\alpha T^2 d^{(0)}_{0,2}, 2-l){\cal  P}_l = r^2 \partial_\eta^\alpha C_0 \tfrac{1}{2(2l+1) (2l-1)}{\cal  P}_l \quad (l \geq 2), \\
 & \quad \quad \res(\partial_\eta^\alpha T^2 d^{(0)}_{0,2}, -m) = r^2 \partial_\eta^\alpha C_0 \bigl(
 -\tfrac{1}{2(2m+1) (2m+3)} {\cal  P}_m +\tfrac{1}{2(2m+5) (2m+3)} {\cal  P}_{2+m} \bigr) \quad (m \geq 0) ,
 \\
  & \quad \quad \res(\partial_\eta^\alpha T^2 d^{(1)}_{0,1}, -l){\cal  P}_l = -ir \partial_\eta^\alpha C_1 \tfrac{1}{2(2l+1) (2l+3)} {\cal  P}_l,
 \quad (l \geq 0), \\ & \quad \quad 
 \res(\partial_\eta^\alpha T^2 d^{(1)}_{0,1}, 2-l) {\cal  P}_l = ir \partial_\eta^\alpha C_1 \tfrac{1}{2(2l+1) (2l-1)} {\cal  P}_l \quad (l \geq 2), \\
 & \quad \quad \res(\partial_\eta^\alpha T^2 d^{(1)}_{0,1}, -m) = ir \partial_\eta^\alpha C_1 \bigl(
 -\tfrac{1}{2(2m+1) (2m+3)}{\cal  P}_m +\tfrac{1}{2(2m+5) (2m+3)} {\cal  P}_{2+m} \bigr) \quad (m \geq 0) .
\end{split}
\]
Finally, the remaining residues corresponding to various $b$-type Green operators are given by
\[
\begin{split}
 g_{0,2}: & \quad \quad \res(d^{(0)}_{0,0},1) = -\tfrac{1}{3}{\cal  P}_1, \quad \res(d^{(0)}_{0,0},2) = -{\cal  P}_0 ,
\\
 & \quad \quad \res(\partial_\eta^\alpha d^{(0)}_{0,2},1) = r^2 \partial_\eta^\alpha C_0 \biggl[
 -\tfrac{1}{30} {\cal  P}_1 +\tfrac{1}{70}{\cal  P}_3 \biggr], \\
 & \quad \quad \res(\partial_\eta^\alpha d^{(0)}_{0,2},2) = r^2 \partial_\eta^\alpha C_0 \biggl[
 -\tfrac{1}{6} {\cal  P}_0 +\tfrac{1}{30}{\cal  P}_2 \biggr] ,
\\
 g_{0,3}: & \quad \quad \res(T^{-1} d^{(0)}_{0,0},1) = -\tfrac{1}{5} {\cal  P}_2, \quad \res(T^{-1} d^{(0)}_{0,0},2) = -\tfrac{1}{3} {\cal  P}_1,
 \quad \res(T^{-1} d^{(0)}_{0,0},3) = -{\cal  P}_0 ,\\
  & \quad \quad \res(T^{-1} \partial_\eta^\alpha d^{(0)}_{0,2},1) = r^2 \partial_\eta^\alpha C_0 \biggl[ 
 -\tfrac{1}{70} {\cal  P}_2 +\tfrac{1}{126}{\cal  P}_4 \biggr], \\
 & \quad \quad \res(T^{-1} \partial_\eta^\alpha d^{(0)}_{0,2},2) = r^2 \partial_\eta^\alpha C_0 \biggl[
 -\tfrac{1}{30} {\cal  P}_1 +\tfrac{1}{70} {\cal  P}_3 \biggr], \\
 & \quad \quad \res(T^{-1} \partial_\eta^\alpha d^{(0)}_{0,2},3) = r^2 \partial_\eta^\alpha C_0 \biggl[
 -\tfrac{1}{6}{\cal  P}_0 +\tfrac{1}{30} {\cal  P}_2 \biggr] , 
\end{split}
\]
\[
\begin{split}
  g_{1,2}: & \quad \quad \res(T d^{(0)}_{0,0},1) = -{\cal  P}_0, \quad \res(d^{(0)}_{0,0},1) = -\tfrac{1}{3} {\cal  P}_1,
 \quad \res(d^{(0)}_{0,0},2) = -{\cal  P}_0 , \\
 & \quad \quad \res(T \partial_\eta^\alpha d^{(0)}_{0,2},1) = r^2 \partial_\eta^\alpha C_0 \biggl[
 -\tfrac{1}{6}{\cal  P}_0 +\tfrac{1}{30} {\cal  P}_2 \biggr], \\
 & \quad \quad \res(\partial_\eta^\alpha d^{(0)}_{0,2},1) = r^2 \partial_\eta^\alpha C_0 \biggl[
 -\tfrac{1}{30}{\cal  P}_1 +\tfrac{1}{70} {\cal  P}_3 \biggr], \\
 & \quad \quad \res(\partial_\eta^\alpha d^{(0)}_{0,2},2) = r^2 \partial_\eta^\alpha C_0 \biggl[
 -\tfrac{1}{6} {\cal  P}_0 +\tfrac{1}{30}{\cal  P}_2 \biggr] ,\\
 & \quad \quad \res( d^{(1)}_{0,0},1) = 2tZ_1 \biggl[
 -\tfrac{1}{12}{\cal  P}_1 +\tfrac{1}{20}{\cal  P}_2 \biggr], \\
 & \quad \quad \res( d^{(1)}_{0,0},2) = 2tZ_1 \biggl[
 -\tfrac{1}{2} {\cal  P}_0 +\tfrac{1}{6}{\cal  P}_1 \biggr] , \\
  & \quad \quad \res( \partial_\eta^\alpha d^{(1)}_{0,1},1) = ir \partial_\eta^\alpha C_1 \biggl[
 -\tfrac{1}{30}{\cal  P}_1 +\tfrac{1}{70}{\cal  P}_3 \biggr], \\
 & \quad \quad \res(\partial_\eta^\alpha d^{(1)}_{0,1},2) = ir \partial_\eta^\alpha C_1 \biggl[
 -\tfrac{1}{6} {\cal  P}_0 +\tfrac{1}{30} {\cal  P}_2 \biggr] ,\\
 g_{1,3}: & \quad \quad \res( d^{(1)}_{0,0},1) = 2tZ_1 \biggl[
 -\tfrac{1}{12}{\cal  P}_1 +\tfrac{1}{20}{\cal  P}_2 \biggr], \\
 & \quad \quad \res( d^{(1)}_{0,0},2) = 2tZ_1 \biggl[
 -\tfrac{1}{2}{\cal  P}_0 +\tfrac{1}{6}{\cal  P}_1 \biggr] , \\
  & \quad \quad \res( \partial_\eta^\alpha d^{(1)}_{0,1},1) = ir \partial_\eta^\alpha C_1 \biggl[
 -\tfrac{1}{30} {\cal  P}_1 +\tfrac{1}{70} {\cal  P}_3 \biggr], \\
 & \quad \quad \res(\partial_\eta^\alpha d^{(1)}_{0,1},2) = ir \partial_\eta^\alpha C_1 \biggl[
 -\tfrac{1}{6} {\cal  P}_0 +\tfrac{1}{30} {\cal  P}_2 \biggr] ,
\end{split}
\]
\[
\begin{split}
 g_{2,2}:
& \quad \quad \res(d^{(0)}_{0,0},1) = -\tfrac{1}{3} {\cal  P}_1, \quad \res(d^{(0)}_{0,0},2) = - {\cal  P}_0 , 
 \quad \res(T^{-2} d^{(0)}_{0,0},1) = -\tfrac{1}{7} {\cal  P}_3, \\
 & \quad \quad \res(T^{-2} d^{(0)}_{0,0},2) = -\tfrac{1}{5} {\cal  P}_2,
 \quad \res(T^{-2} d^{(0)}_{0,0},3) = -\tfrac{1}{3} {\cal  P}_1, \quad \res(T^{-2} d^{(0)}_{0,0},4) = -{\cal  P}_0 ,\\
& \quad \quad \res( d^{(2)}_{0,0},1) = - \biggl[ \tfrac{1}{30} (tZ_1)^2 
 + \tfrac{1}{45} (5+3tZ_2) \biggr]{\cal  P}_1 
 + \tfrac{1}{30} (tZ_1)^2 {\cal  P}_2 \\ & \hspace{3cm} + \biggl[ -\tfrac{1}{105} (tZ_1)^2
 + \tfrac{1}{105} (10+3tZ_2) \biggr]{\cal  P}_3 \\
 & \quad \quad \res( d^{(2)}_{0,0},2) = - \biggl[ \tfrac{1}{3} (tZ_1)^2
 + \tfrac{1}{3} tZ_2 \biggr] {\cal  P}_0
 + \tfrac{1}{6} (tZ_1)^2 {\cal  P}_1 \\ & \hspace{3cm} + \biggl[ -\tfrac{1}{30} (tZ_1)^2
 + \tfrac{1}{15} (1+tZ_2) \biggr]{\cal  P}_2\\
  & \quad \quad \res( T d^{(1)}_{0,0},1) = 2tZ_1 \biggl[
 -\tfrac{1}{2} {\cal  P}_0 +\tfrac{1}{6} {\cal  P}_1 \biggr], \\
 & \quad \quad \res( d^{(1)}_{0,0},1) = 2tZ_1 \biggl[
 -\tfrac{1}{12} {\cal  P}_1 +\tfrac{1}{20}{\cal  P}_2 \biggr], \\
 & \quad \quad \res( d^{(1)}_{0,0},2) = 2tZ_1 \biggl[
 -\tfrac{1}{2}{\cal  P}_0 +\tfrac{1}{6}{\cal  P}_1 \biggr] ,\\
& \quad \quad \res(\partial_\eta^\alpha d^{(0)}_{0,2},1) = r^2 \partial_\eta^\alpha C_0 \biggl[
 -\tfrac{1}{30} {\cal  P}_1 +\tfrac{1}{70} {\cal  P}_3 \biggr], \\
 & \quad \quad \res(\partial_\eta^\alpha d^{(0)}_{0,2},2) = r^2 \partial_\eta^\alpha C_0 \biggl[
 -\tfrac{1}{6} {\cal  P}_0 +\tfrac{1}{30} {\cal  P}_2 \biggr], \\
 & \quad \quad \res(T^{-2} \partial_\eta^\alpha d^{(0)}_{0,2},1) = r^2 \partial_\eta^\alpha C_0 \biggl[
 -\tfrac{1}{126}{\cal  P}_3 +\tfrac{1}{198}{\cal  P}_5 \biggr], \\
 & \quad \quad \res(T^{-2} \partial_\eta^\alpha d^{(0)}_{0,2},2) = r^2 \partial_\eta^\alpha C_0 \biggl[
 -\tfrac{1}{70} {\cal  P}_2 +\tfrac{1}{126} {\cal  P}_4 \biggr], \\
 & \quad \quad \res(T^{-2} \partial_\eta^\alpha d^{(0)}_{0,2},3) = r^2 \partial_\eta^\alpha C_0 \biggl[
 -\tfrac{1}{30} {\cal  P}_1 +\tfrac{1}{70}{\cal  P}_3 \biggr], \\
 & \quad \quad \res(T^{-2} \partial_\eta^\alpha d^{(0)}_{0,2},4) = r^2 \partial_\eta^\alpha C_0 \biggl[
 -\tfrac{1}{6} {\cal  P}_0 +\tfrac{1}{30} {\cal  P}_2 \biggr] , \\
& \quad \quad \res(T \partial_\eta^\alpha d^{(1)}_{0,1},1) = ir \partial_\eta^\alpha C_1 \biggl[
 -\tfrac{1}{6}{\cal  P}_0 +\tfrac{1}{30}{\cal  P}_2 \biggr], \\
 & \quad \quad \res(\partial_\eta^\alpha d^{(1)}_{0,1},1) = ir \partial_\eta^\alpha C_1 \biggl[
 -\tfrac{1}{30} {\cal  P}_1 +\tfrac{1}{70} {\cal  P}_3 \biggr], \\
 & \quad \quad \res(\partial_\eta^\alpha d^{(1)}_{0,1},2) = ir \partial_\eta^\alpha C_1 \biggl[
 -\tfrac{1}{6} {\cal  P}_0 +\tfrac{1}{30} {\cal  P}_2 \biggr] ,
\\
 g_{2,3}: & \quad \quad \res( T d^{(2)}_{0,0},1) = - \biggl[ \tfrac{1}{3} (tZ_1)^2 
 + \tfrac{1}{3} tZ_2 \biggr] {\cal  P}_0 
 + \tfrac{1}{6} (tZ_1)^2{\cal  P}_1 \\ & \hspace{3cm} + \biggl[ -\tfrac{1}{30} (tZ_1)^2
 + \tfrac{1}{15} (1+tZ_2) \biggr]{\cal  P}_2.
\end{split}
\]

\clearpage

\end{document}